%% bare_jrnl.tex
%% V1.2
%% 2002/11/18
%% by Michael Shell
%% mshell@ece.gatech.edu
%% This is a skeleton file demonstrating the use of IEEEtran.cls
%% (requires IEEEtran.cls version 1.6b or later) with an IEEE journal paper.

%\documentclass{IEEEtran}
\documentclass[onecolumn]{IEEEtran}
% If the IEEEtran.cls has not been installed into the LaTeX system files,
% manually specify the path to it:
% \documentclass[journal]{../sty/IEEEtran}

\usepackage{amsmath}   % From the American Mathematical Society
                        % A popular package that provides many helpful commands
                        % for dealing with mathematics. Note that the AMSmath
                        % package sets \interdisplaylinepenalty to 10000 thus
                        % preventing page breaks from occurring within multiline
                        % equations. Use:
\interdisplaylinepenalty=2500
                        % after loading amsmath to restore such page breaks
                        % as IEEEtran.cls normally does. amsmath.sty is already
                        % installed on most LaTeX systems. The latest version
                        % and documentation can be obtained at:
                        % http://www.ctan.org/tex-archive/macros/latex/required/amslatex/math/
\usepackage{amssymb}

\RequirePackage{amsthm,amsmath,amssymb,amsfonts,graphicx, subfigure}
\RequirePackage[colorlinks]{hyperref}

% correct bad hyphenation here
\hyphenation{op-tical net-works semi-conduc-tor}

\theoremstyle{plain} %default is 'plain' which italicizes body text
\newtheorem{thm}{Theorem}[section]
\newtheorem{prop}[thm]{Proposition}
\newtheorem{cor}[thm]{Corollary}
\newtheorem{lem}[thm]{Lemma}
\newtheorem{defn}[thm]{Definition}
\newtheorem{conj}[thm]{Conjecture}

\newtheorem*{informal}{Main Theorem}

\theoremstyle{remark} %default is 'plain' which italicizes body text
\newtheorem{rmk}{Remark}

%%\theoremstyle{plain} %default is 'plain' which italicizes body text
%\newtheorem{thm}{Theorem}%[]
%\newtheorem{conj}{Conjecture}%[]
%\newtheorem{prop}{Proposition}%[]
%\newtheorem{cor}{Corollary}%[]
%\newtheorem{lem}{Lemma}%[]
%\newtheorem{defn}{Definition}%[]
%\newtheorem{fact}{Fact}%[]
%%\theoremstyle{remark} %default is 'plain' which italicizes body text
%\newtheorem{rmk}{Remark}%[]
%\newtheorem{ex}{Example}%[]

% BASIC COMMANDS (environments etc.)
\newcommand{\bc}{\begin{center}}
\newcommand{\ec}{\end{center}}
\newcommand{\bt}{\begin{tabular}}
\newcommand{\et}{\end{tabular}}
\newcommand{\bea}{\begin{eqnarray}}
\newcommand{\eea}{\end{eqnarray}}

\newcommand{\ba}{\begin{array}}
\newcommand{\ea}{\end{array}}

\def\be{\begin{eqnarray}}
\def\ee{\end{eqnarray}}
\def\ben{\begin{eqnarray*}}
\def\een{\end{eqnarray*}}

% COMMON ABBREVIATIONS
\newcommand{\levy}{\mbox{L\'evy }}

% RELATIONAL SYMBOLS

\newcommand{\ra} {\rightarrow}

% UNIVERSALLY USED ALPHABETIC SYMBOLS

\newcommand{\nth}{\frac{1}{n}}

\newcommand{\RL}{{\mathbb R}}

% STATISTICS SYMBOLS

% TEXT-SYMBOLS

% PAPER-SPECIFIC SYMBOLS

%%%%%%%%%%%%%%%%%%%%%%%%%%  SEAN's STUFF  %%%%%%%%%

%%%%%%%%%%%%%%%%%%%%% Symbols.tex

%%%%%NEW LABELS

  %for final copy

%%%%%%%%%%%%%%%%%%%%% qed
%

%\def\sq{\hbox{\rlap{$\sqcap$}$\sqcup$}}
\def\sq{$\Box$}

\def\qed{\ifmmode\sq\else{\unskip\nobreak\hfil
\penalty50\hskip1em\null\nobreak\hfil\sq
\parfillskip=0pt\finalhyphendemerits=0\endgraf}\fi\par\medbreak}

%%%%%%%%%%%%%%%%%%%%%%%%%%%%%%%%% MATH OPERATORS

\def\tr{{\rm tr\, }}

\newsavebox{\junk}
\savebox{\junk}[1.6mm]{\hbox{$|\!|\!|$}}

\def\limsup{\mathop{\rm lim\ sup}}
\def\liminf{\mathop{\rm lim\ inf}}

%%%%%%%%%%%%%%%%%%%%%%%%%%%%%%%%%%%%%%%%%%%%%%%%%%%%%%%

\def\half{{\mathchoice{\textstyle \frac{1}{2}}%
{\frac{1}{2}}%
{\hbox{\tiny $\frac{1}{2}$}}%
{\hbox{\tiny $\frac{1}{2}$}} }}

 \def\eq#1/{(\ref{#1})}

\def\eq#1/{(\ref{e:#1})}
 %

%\newcommand{\setS}{\mbox{${\bf s}$}}
%\newcommand{\sumS}{\sum_{\setS\in\collS}}
%\newcommand{\setT}{\mbox{${\bf t}$}}
%\newcommand{\collS}{\mathcal{C}}

%\newcommand{\ws}{w_{\setS}}

%\newcommand{\det}{\text{det}}

%SERGEY's definitions
%\def\theequation{\thesection.\arabic{equation}}
\def\E{{\bf E}}

\def\phi{\varphi}

\def\bee{\begin{eqnarray*}}
\def\ene{\end{eqnarray*}}

\begin{document}
% paper title
\title{Beyond the entropy power inequality, via rearrangements}
% author names and IEEE memberships
% note positions of commas and nonbreaking spaces ( ~ ) LaTeX will not break
% a structure at a ~ so this keeps an author's name from being broken across
% two lines.
% use \thanks{} to gain access to the first footnote area
% a separate \thanks must be used for each paragraph as LaTeX2e's \thanks
% was not built to handle multiple paragraphs
\author{Liyao Wang
	\thanks{L. Wang is with the Department of Physics, Yale University,
	217 Prospect Street, New Haven, CT 06511, USA.
	Email: {\tt liyao.wang@yale.edu}
	}% <-this % stops a space
 and 
Mokshay~Madiman,~\IEEEmembership{Member,~IEEE}% <-this % stops a space
	\thanks{M. Madiman is with the Department of Mathematical Sciences, University of Delaware,
	517B Ewing Hall, Newark, DE 19716, USA.
	Email: {\tt madiman@udel.edu}
	}% <-this % stops a space
}
\footnotetext{This research was supported by the U.S. National Science Foundation through the grants 
DMS-1409504 and CCF-1346564.}% <-this % stops a space
\footnotetext{Portions of this paper were presented at the 2013 Information Theory and Applications Workshop
in San Diego, and at the 2013 IEEE International Symposium on Information Theory in Istanbul.}%

\maketitle

\begin{abstract}
A lower bound on the R\'enyi differential entropy of a sum of independent random vectors is
demonstrated in terms of rearrangements. For the special case of Boltzmann-Shannon entropy,
this lower bound is better than that given by the entropy power inequality. Several
applications are discussed, including a new proof of the classical entropy power inequality
and an entropy inequality involving symmetrization of L\'evy processes.
\end{abstract}

\begin{keywords}
Entropy power inequality; spherically symmetric rearrangement; R\'enyi entropy; majorization.
\end{keywords}
% Note that keywords are not normally used for peerreview papers.

% For peer review papers, you can put extra information on the cover
% page as needed:
% \begin{center} \bfseries EDICS Category: 3-BBND \end{center}
%
% For peerreview papers, inserts a page break and creates the second title.
% Will be ignored for other modes.
%\IEEEpeerreviewmaketitle

%\newpage

\section{Introduction}
\label{sec:intro}

Rearrangement is a natural and powerful notion in functional analysis,
and finds use in proving many interesting inequalities.
For instance, the original proofs of Young's inequality with
sharp constant (which, as is well known from \cite{DCT91},
is a common generalization of the Brunn-Minkowski and
entropy power inequalities) rely on rearrangements \cite{Bec75, BL76b}.
A basic property of rearrangements is that they preserve $L^p$ norms;
thus, in particular, the rearrangement of a probability density function
is also a probability density function.

Our main contribution in this note is a new lower bound on the R\'enyi (differential) entropy
of a sum of independent random vectors taking values in $\RL^n$, for some fixed natural number $n$.
Recall that for $p\in(0,1)\cup(1,+\infty)$, the R\'enyi entropy of a probability density $f$ is defined as:
\ben
h_p(f)=\frac{1}{1-p}\log\bigg(\int_{\RL^n} f^p(x)dx\bigg).
\een
For $p=1$, $h_1(f)$ is defined as the Shannon differential entropy
\ben
h(f)=-\int f(x)\log f(x) dx ,
\een
and for $p=0,\infty$, it is defined in a limiting sense (see Lemma~\ref{lem:key} for details).

This new bound is expressed
in terms of rearrangements, which we define
and recall basic properties of in Section~\ref{sec:rearr-basic}.

\begin{informal}
If $f_i$ are densities on $\RL^n$,
and $f_i^*$ are their spherically symmetric rearrangements,
\be\label{eq:rear1}
h_p(f_1\star f_2\star\cdot\cdot\star f_k) \geq h_p(f_1^* \star f_2^*\star\cdot\cdot f_k^*) ,
\ee
for any $p\in[0,1)\cup(1,\infty]$. For $p=1$,
\be\label{eq:rear2}
h(f_1\star f_2\star\cdot\cdot\star f_k)\geq h(f_1^* \star f_2^*\star\cdot\cdot f_k^*),
\ee
provided both sides are well defined.
\end{informal}

If we write $X_i^*$ for a random vector drawn from the density $f_i^*$,
and assume that all random vectors are drawn independently of each other,
the Main Theorem says in more customary information-theoretic notation that
\ben
h_p(X_1 +\ldots+ X_k)\geq h_p(X_1^* +\ldots+ X_k^*)
\een
for each $p\geq 1$.

Let us note that the special case of the Main Theorem corresponding to $p\in (0,1)$ and $k=2$
is implicit in \cite[Proposition 9]{BL76b}.
However, our extension includes the three most interesting values of $p$ (namely,
0, 1, and $\infty$), and also covers arbitrary positive integers $k$.
Indeed, as we will discuss, the $p=0$ case yields the Brunn-Minkowski
inequality, the $p=\infty$ case yields as a corollary an inequality due to Riesz and Sobolev,
and the $p=1$ case for the Shannon entropy is new and the most interesting from
an information-theoretic point of view.

In order to make the comparison with the classical Shannon-Stam entropy power inequality, we state the following
standard version of it \cite{CT06:book}, focusing on real-valued random variables
for simplicity.

\begin{thm}\label{thm:epi}\cite{Sha48, Sta59}
Let $X_1$ and $X_2$ be two independent $\RL$-valued random variables with finite differential entropies,
and finite variance.
Let $Z_1$ and $Z_2$ be two independent Gaussians such that
\begin{equation*}
h(X_i)=h(Z_i), \quad i=1,2.
\end{equation*}
Then
\ben
h(X_1+ X_2)\geq h(Z_1 + Z_2).
\een
\end{thm}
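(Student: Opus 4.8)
The plan is to derive Theorem~\ref{thm:epi} as a corollary of the Main Theorem. It helps to recall that, writing $N(X)=\frac{1}{2\pi e}e^{2h(X)}$ for the entropy power, the conclusion of Theorem~\ref{thm:epi} is equivalent to $N(X_1+X_2)\ge N(X_1)+N(X_2)$, and — since $N$ is homogeneous of degree $2$ under dilations — also to the concavity statement $h(\sqrt{\lambda}\,X_1+\sqrt{1-\lambda}\,X_2)\ge\lambda h(X_1)+(1-\lambda)h(X_2)$ for all independent $X_1,X_2$ and all $\lambda\in[0,1]$; so it even suffices to treat the equal-entropy case, i.e.\ to show $h(\sqrt{\lambda}\,X_1+\sqrt{1-\lambda}\,X_2)\ge h(X_1)$ whenever $h(X_1)=h(X_2)$. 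I would use the Main Theorem with $k=2$ as the engine, and I see two ways of finishing.

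The common first move is a reduction to spherically symmetric data. Applying \eqref{eq:rear2} with $k=2$ gives $h(X_1+X_2)\ge h(X_1^*+X_2^*)$, and since $f^*$ is equimeasurable with $f$ we have $\int f^*\log f^*=\int f\log f$, whence $h(X_i^*)=h(X_i)=h(Z_i)$; moreover a Gaussian, being already spherically symmetric, equals its own rearrangement, $Z_i=Z_i^*$. Thus it is enough to prove $h(Y_1+Y_2)\ge h(Z_1+Z_2)$ in the special case where $Y_1,Y_2$ have \emph{even, unimodal} densities and $h(Y_i)=h(Z_i)$ with $Z_i$ Gaussian. This reduction is immediate; the main obstacle is that it does not by itself trivialize the problem — the entropy power inequality for even unimodal densities is still a genuine theorem, so some real analytic input is needed beyond the symmetrization.

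To supply that input I would pursue one of two routes. \emph{(a) Through sharp Young's inequality.} For $p\in(1,\infty)$ the Main Theorem is exactly the assertion $\|f_1\star f_2\|_p\le\|f_1^*\star f_2^*\|_p$; together with $\|f_i^*\|_q=\|f_i\|_q$ this reduces the extremal problem for Young's convolution inequality to spherically symmetric functions, among which a direct computation singles out Gaussians and recovers the sharp constant \cite{Bec75,BL76b}; the classical limiting argument that deduces the entropy power inequality from sharp Young \cite{DCT91} — letting the exponents tend to $1$ along $\tfrac1p+\tfrac1q=1+\tfrac1r$ with the functions suitably rescaled — then gives Theorem~\ref{thm:epi}. \emph{(b) A more information-theoretic finish.} Combine de Bruijn's identity with Stam's Fisher information inequality $\tfrac{1}{I(Y_1+Y_2)}\ge\tfrac{1}{I(Y_1)}+\tfrac{1}{I(Y_2)}$ \cite{Sta59} and integrate along the heat flow; here the symmetrization serves mainly to keep the regularity bookkeeping painless, and one first adds an infinitesimal independent Gaussian to each $Y_i$ (which only strengthens the inequality) to ensure finite Fisher information. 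In either route the delicate points are the finiteness hypotheses — finite differential entropy and finite variance — which must be threaded through with care (e.g.\ by mollification and truncation), since rearrangement need neither improve smoothness nor preserve moments.
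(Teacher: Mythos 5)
Your reduction to spherically symmetric decreasing densities via \eqref{eq:rear2} and Lemma~\ref{lem:preser-ent} is exactly the first move of the paper's own proof in Section~\ref{sec:pf-epi}, and you are right that this alone does not close the problem; but both finishes you propose diverge from the paper's. Route~(b), through de~Bruijn's identity and Stam's Fisher information inequality, is precisely what the paper advertises avoiding: the introduction promises a proof that ``completely bypasses the use of Fisher information, MMSE or any differentiation of entropy functionals.'' Route~(a), through sharp Young and the Dembo--Cover--Thomas limiting argument, is logically sound but outsources the real analysis to sharp Young as a black box; and once you grant sharp Young, the rearrangement reduction is no longer doing essential work, since the classical deduction of EPI from sharp Young never needs to pass through $X_i^*$. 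The paper's finish is self-contained and makes the rearrangement genuinely earn its keep: after approximating the symmetric decreasing $f_i$ from below by symmetric simple step functions, one tensorizes to $M$ i.i.d.\ copies, rearranges the $M$-fold product density, and observes that it becomes a finite mixture of uniform densities on Euclidean balls in $\RL^M$ with only polynomially many (in $M$) components. The EPI for two ball-uniforms is then proved directly with an $O(\log M)$ error term (Lemma~\ref{lem:epu}, essentially a Stirling-approximation comparison of a ball-uniform to a Gaussian, reflecting that high-dimensional balls are approximately Gaussian); the mixture structure is controlled by concavity of entropy together with the reverse-concavity bound of Lemma~\ref{lem:mix}; and dividing by $M$ and letting $M\to\infty$ kills the error. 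This tensorization-plus-high-dimensional-balls strategy is closer in spirit to Szarek--Voiculescu than to Beckner or Brascamp--Lieb, and what it buys over either of your routes is the elimination of both sharp Young and the Fisher-information calculus as prerequisites, at the cost of the replica bookkeeping.
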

\vspace{.1in}

We also need the following lemmata, which we could not find explicitly stated in the literature. (The proofs are
not difficult, and given in later sections.)

\begin{lem}\label{lem:ent-pres}
If one of $h(X)$ and $h(X^*)$ is well defined, then so is the other one and we have
\ben
h(X)=h(X^*).
\een
\end{lem}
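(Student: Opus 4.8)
The plan is to reduce everything to the equimeasurability of a density and its spherically symmetric rearrangement, recalled in Section~\ref{sec:rearr-basic}: writing $f$ for the density of $X$ (so that $f^*$ is the density of $X^*$), the super-level sets $\{f>t\}$ and $\{f^*>t\}$ have equal Lebesgue measure for every $t>0$, a measure that is finite since $f\in L^1$. Equivalently, $f$ and $f^*$ push Lebesgue measure forward to the same $\sigma$-finite Borel measure $\nu$ on $(0,\infty)$. First I would promote this to the identity $\int_{\RL^n}\Phi(f(x))\,dx=\int_{\RL^n}\Phi(f^*(x))\,dx$ valid for every Borel $\Phi\colon[0,\infty)\to[0,\infty]$ with $\Phi(0)=0$, both sides being well-defined elements of $[0,\infty]$; indeed each side equals $\int_{(0,\infty)}\Phi\,d\nu$, the portion of the integral coming from $\{f=0\}$ (resp.\ $\{f^*=0\}$) being $\Phi(0)$ times the measure of that set, i.e.\ $0$, even if that measure is infinite. (Equivalently one may invoke the layer-cake formula; note that monotonicity of $\Phi$ is \emph{not} needed here.)

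Next I would apply this with $\Phi_+(t)=t\log t\cdot\mathbf{1}_{\{t>1\}}$ and $\Phi_-(t)=-t\log t\cdot\mathbf{1}_{\{0<t<1\}}$ (both set to $0$ at $t=0$), which are precisely the positive and negative parts of $t\mapsto t\log t$. This yields
\[
\int_{\RL^n}(f\log f)^+\,dx=\int_{\RL^n}(f^*\log f^*)^+\,dx
\quad\text{and}\quad
\int_{\RL^n}(f\log f)^-\,dx=\int_{\RL^n}(f^*\log f^*)^-\,dx .
\]
By convention, $h(X)=-\int f\log f\,dx$ is well defined exactly when at least one of these two integrals (formed from $f$) is finite, in which case $h(X)=\int(f\log f)^-\,dx-\int(f\log f)^+\,dx\in[-\infty,+\infty]$; the same holds verbatim for $X^*$. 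Since the corresponding integrals agree, $h(X)$ is well defined if and only if $h(X^*)$ is, and in that case $h(X)=h(X^*)$, as claimed.

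The argument carries no substantive obstacle; the only point requiring genuine care is the bookkeeping forced by the infinite reference measure. Because $f$ may vanish on a set of infinite Lebesgue measure, one must phrase the consequence of equimeasurability so that the ``$0\cdot\infty$'' contribution of $\{f=0\}$ is harmlessly zero, and one must pass through pushforward measures (or the layer-cake representation) rather than restricting attention to monotone $\Phi$, since $\Phi_-$ fails to be monotone on $(0,1)$.
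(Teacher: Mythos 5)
Your proof is correct, and it takes a genuinely different (and cleaner) route from the paper's. The paper proves Lemma~\ref{lem:ent-pres} as a special case of Lemma~\ref{lem:allpreser}, which asserts the identity $\int\phi(f)=\int\phi(f^*)$ for general \emph{convex} $\phi$ with $\phi(0)=0$; the proof of that lemma first treats single-signed $\phi$ (where convexity forces monotonicity, making Tonelli plus layer-cake straightforward) and then handles mixed-sign $\phi$ via a somewhat delicate second-derivative-measure representation of $\phi^-$. Your argument sidesteps all of this by observing that equimeasurability $|\{f>t\}|=|\{f^*>t\}|$, combined with $\sigma$-finiteness of the pushforward on $(0,\infty)$ (via Markov's inequality $|\{f>1/n\}|\le n\|f\|_1$) and the fact that $\{(t,\infty):t>0\}$ is a generating $\pi$-system, already yields $\int\Phi(f)=\int\Phi(f^*)$ for \emph{every} non-negative Borel $\Phi$ with $\Phi(0)=0$; applying this separately to the positive and negative parts of $t\mapsto t\log t$ finishes the proof, including the ``well-definedness'' clause. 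You are right that $\Phi_-$ is neither convex nor monotone, so the paper's single-signed case would not cover it directly, and you are also right to restrict the pushforward to $(0,\infty)$: the paper's Remark after Lemma~\ref{lem:basic} (claiming equimeasurability for all Borel $A\subseteq[0,\infty)$) can fail at the atom $\{0\}$, e.g.\ for a strictly positive $f$ supported on a disconnected set of full measure. The trade-off is that your route specializes to non-negative Borel test functions; the paper needs the convex-function version anyway for the majorization machinery of Section~\ref{sec:major}, so it proves the harder statement once and reuses it. For the purpose of Lemma~\ref{lem:ent-pres} alone, your argument is both shorter and more general.
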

\vspace{.1in}

\begin{lem}\label{lem:var}
For any real random variable $X$,
\ben
\text{Var}(X^*)\leq\text{Var}(X).
\een
\end{lem}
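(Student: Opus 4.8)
The plan is to recenter $X$ at its mean and then reduce the claim to the statement that the symmetric decreasing rearrangement does not increase the second moment. There is nothing to prove when $\VAR(X)=\infty$, so assume $\VAR(X)<\infty$; let $f$ be the density of $X$ and $\mu=\E X$, which is then finite. Since $f^*$ is an even function, $X^*$ is symmetric about the origin, so once we establish $\E[(X^*)^2]<\infty$ we automatically get $\E X^*=0$ and hence $\VAR(X^*)=\E[(X^*)^2]=\int_\RL x^2 f^*(x)\,dx$. Thus it suffices to prove
\begin{equation}\label{eq:momcomp}
\int_\RL x^2 f^*(x)\,dx \;\leq\; \int_\RL (x-\mu)^2 f(x)\,dx \;=\; \VAR(X).
\end{equation}

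The key observation is the translation invariance of rearrangement: if $g(x)=f(x+\mu)$ is the density of $X-\mu$, then for every level $s$ the set $\{g>s\}$ is a translate of $\{f>s\}$ and so has the same Lebesgue measure, whence $g^*=f^*$. Therefore the right-hand side of \eqref{eq:momcomp} equals $\int_\RL y^2 g(y)\,dy$, and \eqref{eq:momcomp} is equivalent to
\begin{equation}\label{eq:rearrmom}
\int_\RL y^2 g^*(y)\,dy \;\leq\; \int_\RL y^2 g(y)\,dy .
\end{equation}
As a byproduct, \eqref{eq:rearrmom} also delivers the finiteness $\E[(X^*)^2]=\int_\RL y^2 g^* \leq \VAR(X)<\infty$ needed above, so the two displays together complete the argument.

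To prove \eqref{eq:rearrmom}, I would combine a truncation with the Hardy--Littlewood rearrangement inequality $\int uv\leq\int u^*v^*$. For $M>0$ the function $\psi_M(y)=M-\min(y^2,M)$ is nonnegative, even, nonincreasing in $|y|$, and compactly supported, hence equal to its own symmetric decreasing rearrangement; applying Hardy--Littlewood to the pair $(g,\psi_M)$ and using $\int g=\int g^*=1$ gives $\int g^*\min(y^2,M)\leq\int g\min(y^2,M)$, and letting $M\to\infty$ by monotone convergence yields \eqref{eq:rearrmom}. (Alternatively one can argue directly: straight from the definition of the rearrangement, $\int_{[-r,r]}g^*\geq\int_{[-r,r]}g$ for every $r>0$, because $|E^*\cap[-r,r]|\geq|E\cap[-r,r]|$ for any measurable $E$, and integrating over $r>0$ recovers \eqref{eq:rearrmom} via $y^2=\int_0^\infty 2r\,\mathbf 1\{|y|>r\}\,dr$.) I do not anticipate a genuine obstacle here; the one subtle point is the recentering step, since the naive bound $\int x^2 f^*\leq\int x^2 f$ is weaker than needed by exactly $\mu^2$, and it is precisely the translation invariance of $(\cdot)^*$ — applied to the centered density $g$ rather than to $f$ — that restores the missing $\mu^2$.
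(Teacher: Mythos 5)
Your proof is correct and essentially parallels the paper's: the paper also recenters $X$ at its mean using the translation-invariance of $(\cdot)^*$, and then derives the second-moment comparison from the Hardy--Littlewood/Rogers--Brascamp--Lieb--Luttinger inequality applied to a bounded radially-decreasing test function, followed by a limiting argument. The only cosmetic difference is the choice of truncation — you use $M-\min(y^2,M)$ while the paper (in its Lemma~\ref{lem:finvar}) uses $e^{-t g(\|x\|)}$ and lets $t\downarrow 0$ — but the underlying mechanism is identical.
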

\vspace{.1in}

First note that from Lemma~\ref{lem:ent-pres}, it follows that
\begin{equation*}
h(X^*_i)=h(X_i)=h(Z_i),\quad i=1,2.
\end{equation*}
Furthermore, if $X_1$ and $X_2$ have finite variance,
then Lemma~\ref{lem:var} implies that $X^*_1$ and $X^*_2$ have finite variance,
and therefore by the usual entropy power inequality (i.e., Theorem~\ref{thm:epi}), we have that
\be\label{eq:epicomp2}
h(X^*_1+ X^*_2)\geq h(Z_1+ Z_2).
\ee
On the other hand, the Main Theorem gives
\be\label{eq:epicomp1}
h(X_1+ X_2)\geq h(X^*_1+ X^*_2).
\ee
From \eqref{eq:epicomp1} and \eqref{eq:epicomp2}, we see that we have inserted the
quantity $h(X^*_1+ X^*_2)$ between the two sides of the entropy power inequality as stated in Theorem~\ref{thm:epi}.
In this sense, the $p=1$ and $k=2$ case is a kind of strengthening of Theorem~\ref{thm:epi}.

Let us outline how this note is organized.
Section~\ref{sec:rearr-basic} describes basic and well known facts about rearrangements
in a self-contained fashion.
Section~\ref{sec:rearr-strenth} discusses a result related to the Main Theorem but
for relative entropy (or more generally, R\'enyi divergence) rather than entropy.
In Section~\ref{sec:refine}, we
%extend the reasoning given above to show that the Main Theorem,
%in a sense, refines a
discuss connections of our result to a recent R\'enyi entropy power
inequality for independent random vectors
due to Bobkov and Chistyakov \cite{BC13:1}.

Then we give two related proofs of the Main Theorem, both of which are based
on the Rogers-Brascamp-Lieb-Luttinger inequality.
The first proof based on continuity considerations for R\'enyi entropy in the order is described in
Sections~\ref{sec:rearr-pre} and \ref{sec:pf}.
We include the first proof mainly because along the way, it clarifies various points that may be considered
folklore (in particular, the continuity of R\'enyi entropy in the order, which
has sometimes been taken for granted in the literature leading to incomplete statements of technical assumptions).

The second proof based on majorization ideas is simpler and more general, and described in Section~\ref{sec:major}.
Our approach here was inspired by slides %\cite{Haj08:slides}
of a 2008 talk of Bruce Hajek that we found online (after a draft of this paper was written with just the first proof).
Based on comments we received after the first draft of this paper was posted online,
it appears that the majorization-based approach to rearrangements is well known to experts
though there does not seem to be a clear exposition of it anywhere;
while its roots may be considered to lie implicitly in the famed text of Hardy, Littlewood and Polya \cite{HLP88:book}, it was significantly
taken forward in a difficult paper of Alvino, Trombetti and Lions \cite{ATL89}. As a result, a central technical result of this paper--
Theorem~\ref{thm:most-gen}-- may not be very surprising to experts. In fact,
after the first draft of this paper was circulated, it came to our attention that
when $\phi$ in Theorem~\ref{thm:most-gen} is non-negative, the $k=2$ case is Corollary 1 in Section 3.3
of Burchard's dissertation \cite{Bur94:phd}, where also the equality case is characterized
(this is much more difficult than the inequality itself). For non-negative $\phi$
and general $k$, Theorem~\ref{thm:most-gen} is proved in Corollary 3 in Section 3.4 of \cite{Bur94:phd}.

However, the main innovation in Theorem~\ref{thm:most-gen} is the extension to general convex functions
and the streamlined development using majorization that yields at one go a unified proof of
the Main Theorem for all $p$. It is pertinent to note that the most interesting case of the
Main Theorem, namely for Shannon differential entropy, would not follow from the earlier results. In the stated generality, Theorem~\ref{thm:most-gen} does not seem to have ever been written down before and its probabilistic implications-- including, in particular, the Main Theorem--
are developed for the first time in this work.

Section~\ref{sec:implicn} describes how various classical inequalities can be
seen as special cases of the Main Theorem,
while Section~\ref{sec:appln} discusses an application of the Main Theorem to bounding
the entropy of the sum of two independent uniform random vectors.

One application that rearrangement inequalities
have found in probability is in the area of isoperimetric inequalities for stochastic processes.
Representative works in this area include
Watanabe \cite{Wat83} on capacities associated to L\'evy processes, 
Burchard and Schmuckenschl\"ager \cite{BS01} on exit times of Brownian motions on the sphere or hyperbolic space,
Ba\~nuelos and M\'endez-Hern\'andez  on exit times and more for general L\'evy processes \cite{BM10},
and Drewitz, Sousi and Sun \cite{DSS13} on survival probabilities in a field of L\'evy-moving traps. 
Our results also have implications for stochastic processes and these are developed in Section~\ref{sec:levy}.

Finally, in Section~\ref{sec:pf-epi}, we give a new proof of the classical entropy power inequality
using the Main Theorem. This shares some features in common with the proof of Szarek and Voiculescu \cite{SV00},
but completely bypasses the use of Fisher information, MMSE or any differentiation of entropy functionals.

We have also obtained discrete analogues of several of the results of this paper; these 
analogues and their applications to combinatorics will be presented
elsewhere (see, e.g., \cite{WWM14:isit}).

For convenience of the reader, we collect here some (mostly standard) \textbf{notations} that will be used in the rest of the paper:
\begin{enumerate}
\item $\|x\|$: Euclidean norm of $x\in\mathbb{R}^n$.
\item $\{\text{Condition}\}$: equals $1$ if Condition is true and $0$ otherwise. For example, $\{f(x)>1\}=1$ if $f(x)>1$ and $0$ otherwise.
\item $\{x:\text{Condition}\}$: the set of $x$ such that Condition is true. For example, $\{x:f(x)>1\}$ is the set of all $x$ such that $f(x)>1$.
\item $\mathbb{I}_A(x)$: indicator function of the set $A$.
\item $t_+$:=$\max(t,0)$.
\item $f^+(x)$:=$\max(f(x),0)$.
\item $f^-(x)$:=$(-f)^+(x)$.
\item $f\star g$: the convolution of $f(x)$ and $g(x)$.
\item $\star_{1\leq i\leq n}f_i$: the convolution of the functions $f_i$, namely $f_1\star f_2\star\cdot\cdot\cdot\star f_n$.
\item $\phi'_+(t)$: right derivative of a function $\phi(t)$, defined on $\mathbb{R}$.
\end{enumerate}

\section{Basic facts about rearrangements}
\label{sec:rearr-basic}

We will try to make this section as self-contained as possible. For a Borel set $A$ with volume $|A|$, one can define its
spherically decreasing symmetric rearrangement $A^*$ by
\begin{equation*}
A^*=B(0,r),
\end{equation*}
where $B(0,r)$ stands for the open ball with radius $r$ centered at the origin and $r$ is determined by the condition that $B(0,r)$ has volume $|A|$. Here we use the convention that if $|A|=0$, then $A^*=\emptyset$ and that if $|A|=+\infty$, then $A^*=\mathbb{R}^n$.

\begin{figure}\label{fig}
\begin{center}
\includegraphics[height=2.4in]{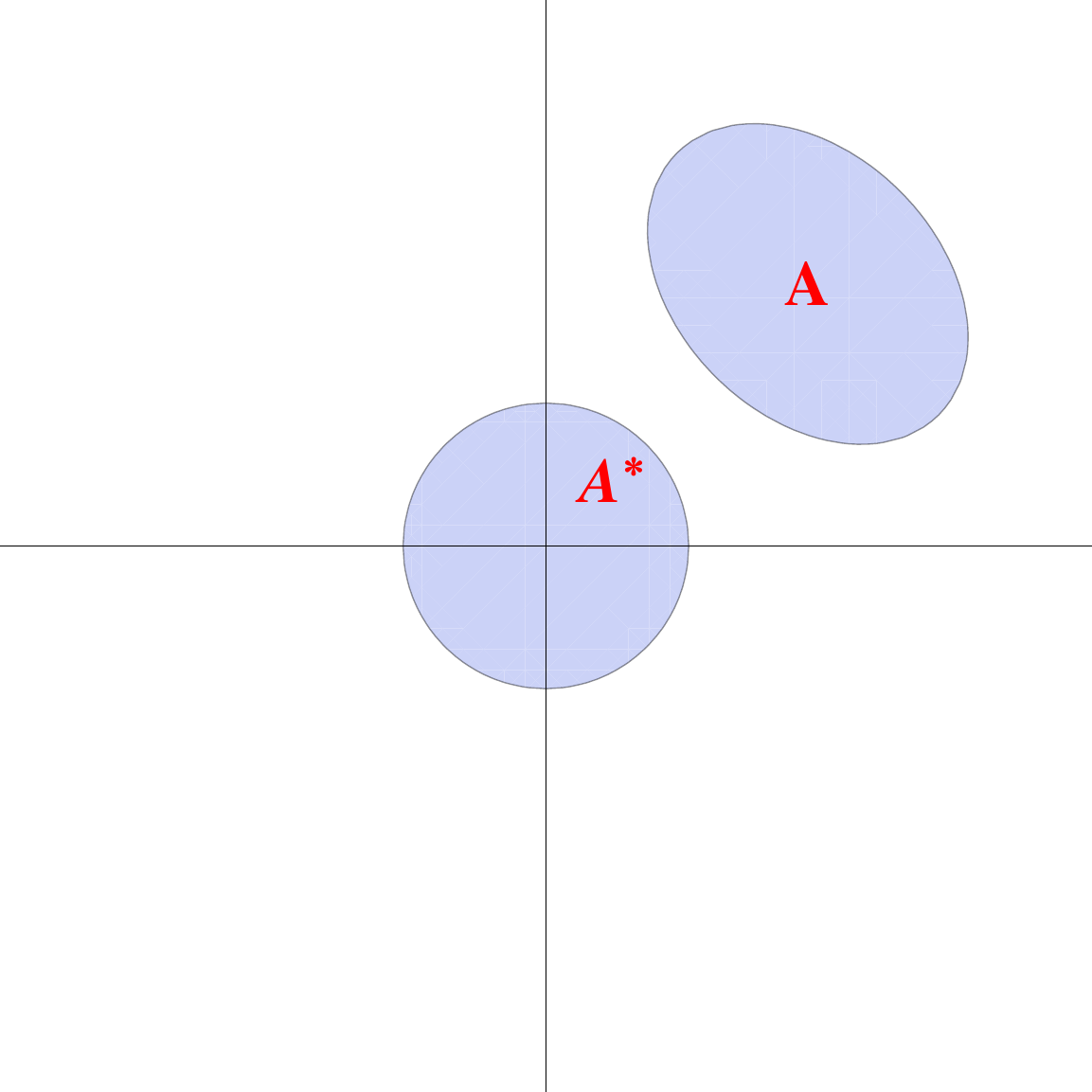}
\caption{Rearrangement of a set}
\end{center}
\end{figure}

Now for a measurable non-negative function $f$, we define its spherically decreasing symmetric rearrangement $f^*$ by:
\begin{equation*}
f^*(y)=\int_{0}^{+\infty}\{y\in B_t^*\}dt
\end{equation*}
where $B_t=\{x:f(x)>t\}$.

From the definition, it is intuitively clear that $\{x:f(x)> t\}^*=\{x:f^{*}(x)> t\}$ for all $t\geq0$. The proof of this is given in the following lemma, which is listed as an exercise in \cite{LL01:book}.

\begin{lem}\label{lem:basic}
$\{x:f(x)> t\}^*=\{x:f^{*}(x)> t\}$ for all $t\geq0$.
\end{lem}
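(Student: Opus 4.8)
The plan is to first extract a clean closed form for $f^*$ from the nested family of rearranged superlevel sets, and then compare superlevel sets by hand. Set $B_t=\{x:f(x)>t\}$ as in the statement. The starting observation is that $t\mapsto B_t$ is non-increasing (if $s<t$ then $B_t\subseteq B_s$), hence so is $t\mapsto|B_t|$, and therefore --- since $B_t^*$ is by definition the centered open ball of volume $|B_t|$, with $B_t^*=\emptyset$ when $|B_t|=0$ and $B_t^*=\mathbb{R}^n$ when $|B_t|=+\infty$ --- the family $t\mapsto B_t^*$ is non-increasing too: $s<t$ implies $B_t^*\subseteq B_s^*$. Consequently, for each fixed $y$ the set $\{t\ge0:y\in B_t^*\}$ is, whenever nonempty, an interval with left endpoint $0$, so the definition of $f^*$ collapses to
\[
f^*(y)=\int_0^{\infty}\{y\in B_t^*\}\,dt=\sup\{t\ge0:y\in B_t^*\},
\]
with the convention $\sup\emptyset=0$.

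From this formula, $f^*(y)>t$ if and only if $y\in B_s^*$ for some $s>t$. One inclusion is then immediate: if $f^*(y)>t$, choose such an $s>t$; since $s>t$ forces $B_s^*\subseteq B_t^*$, we get $y\in B_t^*$. This already gives $\{y:f^*(y)>t\}\subseteq B_t^*=\{x:f(x)>t\}^*$.

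For the reverse inclusion I would invoke continuity of Lebesgue measure. Because $f(x)>t$ exactly when $f(x)>s$ for some $s>t$, we have $B_t=\bigcup_{s>t}B_s$, an increasing union as $s\downarrow t$; continuity of measure from below then yields $|B_t|=\lim_{s\downarrow t}|B_s|=\sup_{s>t}|B_s|$, that is, $t\mapsto|B_t|$ is right-continuous. Since the radius of a centered ball is a continuous nondecreasing function of its volume, right-continuity of $t\mapsto|B_t|$ upgrades to $B_t^*=\bigcup_{s>t}B_s^*$ (in the unbounded-radius regime one just notes the nested balls exhaust $\mathbb{R}^n$). Hence $y\in B_t^*$ implies $y\in B_s^*$ for some $s>t$, which by the displayed formula means $f^*(y)\ge s>t$. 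Combining the two inclusions proves $\{x:f(x)>t\}^*=\{x:f^*(x)>t\}$ for every $t\ge0$.

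The only step that is not pure bookkeeping is the right-continuity of $t\mapsto|B_t|$ together with its transfer from volumes to balls; I expect this to be the crux and everything else to be routine. In writing it up I would dispatch the degenerate cases explicitly: when $|B_t|=0$ the claimed identity reads $\emptyset=\emptyset$ at that level, while when $|B_t|=+\infty$ one has $B_t^*=\mathbb{R}^n$ and $\sup_{s>t}|B_s|=+\infty$, so the nested balls $B_s^*$ do fill $\mathbb{R}^n$ --- these are the only places where the conventions defining $A^*$ come into play.
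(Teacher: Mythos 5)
Your proof is correct and follows essentially the same route as the paper's. Both arguments identify $\{x:f^*(x)>t\}$ with the increasing union $\bigcup_{s>t}B_s^*$ via the monotonicity of $t\mapsto B_t^*$, and then use continuity of Lebesgue measure from below to conclude that this union is the centered open ball of volume $|B_t|$, i.e.\ $B_t^*$; your reformulation of $f^*$ as a supremum is just a slightly more explicit phrasing of the same observation the paper makes about the indicator $g(x,t)$.
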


\begin{proof}
Consider the function
\begin{equation*}
g(x,t)=\mathbb{I}_{\{y:f(y)>t\}^{*}}(x).
\end{equation*}
Observe that for fixed $x$, if for some $t_1$ we have
\begin{equation*}
g(x,t_1)=1,
\end{equation*}
then for all $t\leq t_1$, we would have
\begin{equation*}
g(x,t)=1.
\end{equation*}
Because of this,
\begin{equation*}
\{x:f^{*}(x)>t_1\}=\cup_{s>t_1}\{x:g(x,s)=1\}
\end{equation*}
\begin{equation*}
=\cup_{s>t_1}\{x:f(x)>s\}^{*}.
\end{equation*}

Notice that for $s_1<s_2$,
\begin{equation*}
\{x:f(x)>s_2\}^{*}\subseteq\{x:f(x)>s_1\}^{*}.
\end{equation*}
Hence
\begin{equation*}
\{x:f^{*}(x)>t_1\}=\bigcup_{n}\bigg\{x:f(x)>t_1+\frac{1}{n}\bigg\}^{*}.
\end{equation*}

Now observe that $\cup_{n}\{x:f(x)>t_1+\frac{1}{n}\}^{*}$ is an open ball with the same
Lebesgue measure as $\{x:f(x)>t_1\}^{*}$, which is also an open ball.
\end{proof}
\vspace{.1in}

\begin{rmk}
For any measurable subset $A$ of $[0,\infty)$, a generating class argument shows easily that $|\{x:f(x)\in A\}|=|\{x:f^*(x)\in A\}|$.
\end{rmk}

\begin{rmk}
A natural consequence of this is  that $f^*$ is lower semicontinuous. By the layer cake representation,
another consequence is that if $f$ is integrable, so is $f^*$ and $\|f\|_1=\|f^*\|_1$. In particular, $f^*$ is a probability density if $f$ is.
\end{rmk}

The second simple observation is that all R\'enyi entropies are preserved by rearrangements.

\begin{lem}\label{lem:preser-ent}
For any $p\in [0,1)\cup(1,\infty]$,
\ben
h_p(f)=h_p(f^*).
\een
For $p=1$, if one of $h(f)$ and $h(f^*)$ is well defined, then so is the other one and we have:
\ben
h(f)=h(f^*).
\een
\end{lem}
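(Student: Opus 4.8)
The plan is to exploit the fact, recorded in the Remark following Lemma~\ref{lem:basic}, that $f$ and $f^*$ are \emph{equimeasurable}: for every Borel $A\subseteq[0,\infty)$ one has $|\{x:f(x)\in A\}|=|\{x:f^*(x)\in A\}|$. Taking $A$ of the form $(t,\infty)$ shows in particular that $f$ and $f^*$ share the same distribution function $d_f(t):=|\{x:f(x)>t\}|$. Restricting attention to $A\subseteq(0,\infty)$ says that the pushforwards of Lebesgue measure on $\mathbb{R}^n$ under $f$ and under $f^*$ agree as (possibly $\sigma$-finite) measures on $(0,\infty)$. Consequently, for every non-negative Borel function $\Phi$ on $[0,\infty)$ with $\Phi(0)=0$,
\[
\int_{\mathbb{R}^n}\Phi(f(x))\,dx=\int_{\mathbb{R}^n}\Phi(f^*(x))\,dx,
\]
both sides being simultaneously finite or infinite. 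This single identity will drive every case.

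For $p\in(0,1)\cup(1,\infty)$, I would simply apply the displayed identity with $\Phi(t)=t^p$ to get $\int f^p=\int(f^*)^p$, and hence $h_p(f)=h_p(f^*)$ straight from the definition. For $p=0$, recalling (from the limiting definition, Lemma~\ref{lem:key}) that $h_0(f)=\log|\{x:f(x)>0\}|$, equimeasurability applied to $A=(0,\infty)$ gives $|\{f>0\}|=|\{f^*>0\}|$ and therefore $h_0(f)=h_0(f^*)$. For $p=\infty$, since $d_f=d_{f^*}$ we have $\|f\|_\infty=\inf\{t:d_f(t)=0\}=\inf\{t:d_{f^*}(t)=0\}=\|f^*\|_\infty$, whence $h_\infty(f)=-\log\|f\|_\infty=h_\infty(f^*)$.

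The case $p=1$ needs slightly more care because $t\log t$ changes sign, so I would treat it last. Write $\psi_+(t):=(t\log t)\,\{t\geq 1\}$ and $\psi_-(t):=(-t\log t)\,\{0<t<1\}$, with the convention $0\log 0=0$, so that $\psi_+,\psi_-$ are non-negative Borel functions vanishing at $0$ and $-t\log t=\psi_-(t)-\psi_+(t)$ for all $t\geq 0$. By definition of Shannon differential entropy, $h(f)$ is well defined precisely when at least one of $\int\psi_+(f)$ and $\int\psi_-(f)$ is finite, in which case $h(f)=\int\psi_-(f)-\int\psi_+(f)\in[-\infty,+\infty]$. Applying the displayed identity to $\Phi=\psi_+$ and to $\Phi=\psi_-$ shows $\int\psi_\pm(f)=\int\psi_\pm(f^*)$; hence the well-definedness criterion holds for $f$ if and only if it holds for $f^*$, and when it does the two entropies are the same difference, so $h(f)=h(f^*)$.

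The only genuinely delicate point is the justification of the displayed change-of-variables identity in full generality — in particular making sure the possibly infinite measure of $\{x:f(x)=0\}$ causes no harm; this is exactly why I restrict to $\Phi$ with $\Phi(0)=0$ and phrase equimeasurability on $A\subseteq(0,\infty)$. Everything else is bookkeeping. (Alternatively one could derive the identity from the layer-cake formula when $\Phi$ is absolutely continuous and increasing and then reduce $t\mapsto t^p$ and the pieces $\psi_\pm$ to that case, but the pushforward-measure formulation handles all the required $\Phi$ uniformly and is cleaner.)
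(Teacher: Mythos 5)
Your proof is correct, and it takes a genuinely different route from the paper's. The paper does not prove Lemma~\ref{lem:preser-ent} directly: after stating it, the text defers to the more general Lemma~\ref{lem:allpreser} in Section~\ref{sec:major}, which establishes $\int\phi(f)\,dx=\int\phi(f^*)\,dx$ for \emph{convex} $\phi$ with $\phi(0)=0$ and continuity at $0$, by way of absolute continuity of $\phi$, the layer-cake representation, Tonelli, and (for the sign-changing case) a decomposition using the second-derivative measure of $\phi$. You instead pass straight from equimeasurability to the pushforward identity $\int\Phi(f)=\int\Phi(f^*)$ for an \emph{arbitrary} non-negative Borel $\Phi$ vanishing at $0$, then specialize: $\Phi(t)=t^p$ for $p\in(0,1)\cup(1,\infty)$, the distribution-function argument for $p=0$ and $p=\infty$, and the two non-negative pieces $\psi_\pm$ of $-t\log t$ for $p=1$. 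This avoids convexity and the layer-cake machinery entirely, and the well-definedness bookkeeping for $p=1$ is transparent because $\int\psi_+(f)=\int\psi_+(f^*)$ and $\int\psi_-(f)=\int\psi_-(f^*)$ hold separately. What your argument does \emph{not} give is the full convex-function identity of Lemma~\ref{lem:allpreser}, which the paper still needs for the majorization step (Proposition~\ref{prop:gen-id}); so within the paper's architecture the more elaborate proof is not redundant, but for Lemma~\ref{lem:preser-ent} by itself your pushforward argument is the cleaner one. One small point in your favor: restricting equimeasurability to $A\subseteq(0,\infty)$ is the right precaution, since when $|\{f>0\}|=\infty$ one can have $|\{f=0\}|\neq|\{f^*=0\}|$, so the blanket statement over $[0,\infty)$ in the remark after Lemma~\ref{lem:basic} needs exactly the caveat you build in.
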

\vspace{.1in}

The preservation of $L^p$-norms (for $p>1$) by rearrangements is a very classical fact (see, e.g., \cite[Lemma 1.4]{Bur09:tut}),
although somewhat surprisingly the preservation of Shannon entropy does not seem to have explicitly noted anywhere in the
literature. We give a complete proof of a more general result, namely Lemma~\ref{lem:allpreser}, later.

Another useful fact is that composition with non-negative increasing functions on the left commutes with
taking rearrangements.

\begin{lem}\cite{LL01:book}\label{lem:increasing}
If $\Psi(x)$ is a non-negative real valued strictly increasing function defined on the non-negative real line, then
\ben
(\Psi(f))^*=\Psi(f^*)
\een
for any non-negative measurable function $f$.
\end{lem}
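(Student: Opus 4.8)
The plan is to show that the two non-negative functions $(\Psi(f))^{*}$ and $\Psi(f^{*})$ have the same super-level sets $\{\,\cdot>t\,\}$ for every real $t$, and then to conclude that they coincide, using the elementary observation that any measurable $g\geq 0$ satisfies $g(x)=\sup\{t\geq 0: x\in\{g>t\}\}$, so that $g$ is determined by the family $\{\{g>t\}\}_{t\geq 0}$. The two ingredients that make this work are Lemma~\ref{lem:basic} (passing to a super-level set commutes with rearrangement) and the strict monotonicity of $\Psi$ (which lets one rewrite a super-level set of $\Psi\circ f$ as a super-level set of $f$ itself).

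First I would dispose of the trivial range of levels. Since $f\geq 0$, $f^{*}\geq 0$ and $\Psi$ is non-decreasing, for $t<\Psi(0)$ both $\{x:\Psi(f(x))>t\}$ and $\{x:\Psi(f^{*}(x))>t\}$ equal all of $\mathbb{R}^{n}$. For $t\geq\Psi(0)$, strict monotonicity forces $\{u\geq 0:\Psi(u)>t\}$ to be an up-set in $[0,\infty)$, hence a half-line with left endpoint $s_{t}:=\inf\{u\geq 0:\Psi(u)>t\}\in[0,\infty]$ (equal to $(\Psi^{-1}(t),\infty)$ when $\Psi$ is continuous, and empty once $t$ exceeds $\sup\Psi$). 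The key point is that $s_{t}$ depends only on $\Psi$ and $t$, not on the function being composed, so that
\ben
\{x:\Psi(f(x))>t\}=\{x:f(x)>s_{t}\}, \qquad \{x:\Psi(f^{*}(x))>t\}=\{x:f^{*}(x)>s_{t}\}.
\een
Two applications of Lemma~\ref{lem:basic} then finish it:
\ben
\{(\Psi(f))^{*}>t\}=\{f>s_{t}\}^{*}=\{f^{*}>s_{t}\}=\{\Psi(f^{*})>t\},
\een
where the first equality is Lemma~\ref{lem:basic} for the function $\Psi(f)$ combined with the identity just displayed, and the third is Lemma~\ref{lem:basic} for $f$. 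Since this holds for every $t$, we obtain $(\Psi(f))^{*}=\Psi(f^{*})$.

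I expect the only genuine obstacle to be the bookkeeping at a discontinuity of $\Psi$: for levels $t$ lying inside a jump, the half-line $\{u:\Psi(u)>t\}$ is \emph{closed} at its left endpoint, so $\{\Psi(f)>t\}$ becomes $\{f\geq s_{t}\}$ rather than $\{f>s_{t}\}$, and after rearrangement one is then comparing the open ball $\{f\geq s_{t}\}^{*}$ with $\{f^{*}\geq s_{t}\}$, which is a ball of equal volume but possibly closed. These can differ only on a sphere, hence on a set of Lebesgue measure zero; consequently the identity $(\Psi(f))^{*}=\Psi(f^{*})$ holds everywhere when $\Psi$ is continuous --- which is the only case needed in this paper, where $\Psi(x)=x^{p}$ --- and holds almost everywhere, hence harmlessly for every use of the lemma here, in general. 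If one insists on the pointwise statement in full generality, it suffices to note that the level-set computation above already exhibits $(\Psi(f))^{*}$ and $\Psi(f^{*})$ as equimeasurable, and to replace $\Psi$ by its left-continuous modification on the range of $f^{*}$, which alters $\Psi(f^{*})$ only on those countably many jump spheres.
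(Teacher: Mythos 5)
Your proof is correct and follows the same route as the paper's: reduce to showing equality of the super-level sets $\{\,\cdot>t\,\}$ for every $t$, use monotonicity of $\Psi$ to translate a super-level set of $\Psi\circ f$ into a super-level set of $f$, and then apply Lemma~\ref{lem:basic} twice. Your treatment is in fact somewhat more careful than the paper's, which simply writes $\Psi^{-1}(t)$ without addressing levels $t$ below $\Psi(0)$ or inside a jump of $\Psi$; your discussion of the threshold $s_t$ and of the closed-versus-open half-line at a discontinuity correctly fills in the edge cases the paper's proof glosses over (all of which are moot for the continuous choices $\Psi(x)=x^p$ actually used in the paper).
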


\begin{proof}
It suffices to show to show that the following two sets are equal for every $t$:
\begin{equation*}
\{x:(\Psi(f))^*>t\}=\{x:\Psi(f^*)>t\}.
\end{equation*}
By Lemma~\ref{lem:basic},
\begin{equation*}
\{x:\Psi(f^*)>t\}=\{x:f^*>\Psi^{-1}(t)\}=\{x:f>\Psi^{-1}(t)\}^{*}.
\end{equation*}
Again by Lemma~\ref{lem:basic},
\begin{equation*}
\{x:f>\Psi^{-1}(t)\}^{*}=\{x:\Psi(f)>(t)\}^{*}=\{x:(\Psi(f))^{*}>t\}.
\end{equation*}
\end{proof}
\vspace{.1in}

The final fact that will be useful later is that rearrangement decreases the $L_1$ distance between two functions. We refer to \cite{Bur09:tut} for a proof.

\begin{lem}\cite{Bur09:tut}\label{lem:L1}
Let $f$ and $g$ be two integrable non-negative functions. Then
\begin{equation*}
\|f^*-g^*\|_1\leq \|f-g\|_1.
\end{equation*}
\end{lem}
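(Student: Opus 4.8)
The plan is to reduce the statement to the $L^1$-preservation property of rearrangements (recorded in the Remark following Lemma~\ref{lem:basic}) together with a single elementary pointwise inequality. First I would use the identity, valid for non-negative integrable $f,g$,
\[
\|f-g\|_1 = \|f\|_1 + \|g\|_1 - 2\|f\wedge g\|_1,
\]
where $f\wedge g=\min(f,g)$; this is just $|a-b|=(a+b)-2\min(a,b)$ integrated, and it makes sense because $f\wedge g\le f$ is integrable. Applying the same identity to $f^*$ and $g^*$ and recalling $\|f^*\|_1=\|f\|_1$, $\|g^*\|_1=\|g\|_1$, the desired inequality $\|f^*-g^*\|_1\le\|f-g\|_1$ becomes equivalent to
\[
\|f^*\wedge g^*\|_1 \ge \|f\wedge g\|_1 .
\]

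Next I would prove the pointwise bound $(f\wedge g)^*\le f^*\wedge g^*$. For each $t\ge 0$ one has $\{x:(f\wedge g)(x)>t\}=\{x:f(x)>t\}\cap\{x:g(x)>t\}$, so this superlevel set has volume at most $\min(|\{f>t\}|,|\{g>t\}|)$; hence its rearrangement, being the centered ball of exactly that volume, is contained in the intersection of the centered balls $\{f>t\}^*$ and $\{g>t\}^*$. By Lemma~\ref{lem:basic}, $\{f>t\}^*=\{f^*>t\}$ and $\{g>t\}^*=\{g^*>t\}$, whose intersection is precisely $\{f^*\wedge g^*>t\}$. Thus $\{(f\wedge g)^*>t\}\subseteq\{f^*\wedge g^*>t\}$ for every $t$, and applying the layer-cake identity $h(x)=\int_0^\infty\{h(x)>t\}\,dt$ to $h=(f\wedge g)^*$ and to $h=f^*\wedge g^*$ yields $(f\wedge g)^*\le f^*\wedge g^*$ pointwise. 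Integrating and invoking $L^1$-preservation once more gives $\|f\wedge g\|_1=\|(f\wedge g)^*\|_1\le\|f^*\wedge g^*\|_1$, which is exactly the reduced claim.

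Finally I would record the hypotheses that make the arithmetic legitimate: since $f\wedge g\le f$ and $f^*\wedge g^*\le f^*$, every $L^1$ norm appearing above is finite, so the subtractions in the opening identity are unambiguous. I expect there to be no serious obstacle here; the only point requiring care is this finiteness bookkeeping, together with the routine observation underlying the second paragraph, namely that rearrangement is monotone under pointwise domination — equivalently, that concentric balls are nested according to their volumes. Once Lemma~\ref{lem:basic} and the $L^1$-preservation of rearrangements are available, the geometric content of the proof is entirely elementary, so this argument can replace the citation to \cite{Bur09:tut} if a self-contained treatment is wanted.
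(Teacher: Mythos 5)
Your argument is correct, and the paper itself gives no proof of Lemma~\ref{lem:L1} — it simply cites \cite{Bur09:tut} — so there is nothing in the paper to compare line by line. What you have written is the standard self-contained proof of the $L^1$-nonexpansiveness of rearrangement, and it is essentially the one found in Burchard's notes: reduce via $|a-b|=(a+b)-2\min(a,b)$ to showing $\|f^*\wedge g^*\|_1\ge\|f\wedge g\|_1$, then establish the pointwise bound $(f\wedge g)^*\le f^*\wedge g^*$ by comparing superlevel sets (the superlevel set of $f\wedge g$ at height $t$ is the intersection $\{f>t\}\cap\{g>t\}$, whose rearranged ball sits inside both $\{f^*>t\}$ and $\{g^*>t\}$ by volume monotonicity and Lemma~\ref{lem:basic}), and finish with the layer-cake identity and $L^1$-preservation. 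The finiteness bookkeeping you flag is indeed the only place requiring care, and your observation that $f\wedge g\le f$ keeps every term finite handles it. One small stylistic remark: you could instead deduce $\|f^*\wedge g^*\|_1\ge\|f\wedge g\|_1$ directly from the Hardy--Littlewood instance ($k=1$) of Theorem~\ref{thm:BLL} applied to the layer-cake decomposition, but the pointwise inclusion argument you give is more elementary and keeps the lemma independent of the rearrangement inequality machinery, which is preferable since Lemma~\ref{lem:L1} is itself invoked in several places where one would not want a circular dependence.
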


By construction, spherically symmetric decreasing rearrangements move the mass of
functions towards the origin. A fundamental rearrangement inequality expressing this
concentration is an inequality popularly known as the Brascamp-Lieb-Luttinger inequality,
which we state now.

\begin{thm}\label{thm:BLLfull}\cite{Rog57, BLL74}\,
For any measurable functions $f_i:\RL^n\ra [0,\infty)$, with $1\leq i \leq M$, and real numbers $a_{ij}, 1\leq i\leq M, 1\leq j\leq N$,
\ben\begin{split}
&\int_{\mathbb{R}^{nN}}\prod_{j=1}^N dx_j \prod_{i=1}^M f_i\bigg(\sum_{j=1}^N a_{ij}x_j\bigg)\\
&\leq\int_{\mathbb{R}^{nN}}\prod_{j=1}^N dx_j \prod_{i=1}^M f_i^*\bigg(\sum_{j=1}^N a_{ij}x_j\bigg).
\end{split}\een
\end{thm}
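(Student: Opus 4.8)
The plan is to follow the classical route of Rogers~\cite{Rog57} and of Brascamp, Lieb and Luttinger~\cite{BLL74}: strip off the functions by the layer-cake principle, reduce the resulting inequality for sets to one dimension by Steiner symmetrization, settle the one-dimensional case directly, and then recover the spherically symmetric rearrangement as a limit of Steiner symmetrizations. First I would move to a convenient setting. Replacing each $f_i$ by $\min(f_i,T)\,\mathbb{I}_{B(0,R)}$ and letting $T,R\to\infty$ -- both these truncations and their rearrangements increase pointwise to $f_i$ and to $f_i^*$, so monotone convergence applies to each side -- reduces matters to bounded $f_i$ supported in a common ball. Then, using $f_i(x)=\int_0^\infty\{f_i(x)>t\}\,dt$, expanding the product over $i$ and invoking Tonelli's theorem, the left-hand side becomes
\[
\int_{(0,\infty)^M}\!\left[\int_{\mathbb{R}^{nN}}\prod_{j=1}^N dx_j\prod_{i=1}^M\mathbb{I}_{\{f_i>t_i\}}\!\Big(\sum_{j=1}^N a_{ij}x_j\Big)\right]dt_1\cdots dt_M ,
\]
and likewise for the right-hand side, with $\{f_i>t_i\}$ replaced by its rearrangement; but $\{f_i>t_i\}^*=\{f_i^*>t_i\}$ by Lemma~\ref{lem:basic}. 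Hence it suffices to prove, for Borel sets $E_1,\dots,E_M\subseteq\mathbb{R}^n$ of finite measure lying in a common ball,
\[
I(E_1,\dots,E_M):=\int_{\mathbb{R}^{nN}}\prod_{j=1}^N dx_j\prod_{i=1}^M\mathbb{I}_{E_i}\!\Big(\sum_{j=1}^N a_{ij}x_j\Big)\ \le\ I(E_1^*,\dots,E_M^*) ,
\]
the degenerate situations -- a form $x\mapsto\sum_j a_{ij}x_j$ vanishing identically, or the combined map $x\mapsto\big(\sum_j a_{1j}x_j,\dots,\sum_j a_{Mj}x_j\big)$ failing to be injective (in which last case both sides are $+\infty$) -- being disposed of by inspection.

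The set inequality I would establish by Steiner symmetrization. Fix a unit vector $e$, write $\mathbb{R}^n=\mathbb{R}e\oplus e^\perp$, and decompose $x_j=(s_j,y_j)$ with $s_j\in\mathbb{R}$, $y_j\in e^\perp$. Since $\sum_j a_{ij}x_j\in E_i$ holds exactly when $\sum_j a_{ij}s_j$ lies in the one-dimensional slice of $E_i$ through the point $\sum_j a_{ij}y_j\in e^\perp$, Fubini's theorem expresses $I(E_1,\dots,E_M)$ as an integral over $(y_1,\dots,y_N)\in(e^\perp)^N$ of the one-dimensional version of the very same functional applied to those slices. The Steiner symmetral $S_eE_i$ has, above each point of $e^\perp$, the centred interval of the same length as the corresponding slice of $E_i$; so, granting the one-dimensional inequality and applying it slice by slice, then integrating back over the $y_j$, we get $I(E_1,\dots,E_M)\le I(S_eE_1,\dots,S_eE_M)$.

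The crux is therefore the one-dimensional case: for arbitrary finite-measure sets $G_1,\dots,G_M\subseteq\mathbb{R}$ and arbitrary reals $a_{ij}$,
\[
\int_{\mathbb{R}^N}\prod_{j=1}^N ds_j\prod_{i=1}^M\mathbb{I}_{G_i}\!\Big(\sum_{j=1}^N a_{ij}s_j\Big)\ \le\ \int_{\mathbb{R}^N}\prod_{j=1}^N ds_j\prod_{i=1}^M\mathbb{I}_{G_i^*}\!\Big(\sum_{j=1}^N a_{ij}s_j\Big) .
\]
By inner regularity and the $L^1$-continuity of the functional in each $\mathbb{I}_{G_i}$, one may first take each $G_i$ to be a finite union of open intervals. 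The inequality is then proved by the Rogers--Brascamp--Lieb--Luttinger transport argument: one slides the constituent intervals of the $G_i$ toward the origin through a carefully chosen sequence of elementary moves, checking that at each step the measure of the constraint set $\{s\in\mathbb{R}^N:\sum_j a_{ij}s_j\in G_i\text{ for every }i\}$ does not decrease, and that the process ends with each $G_i$ replaced by its centred interval $G_i^*$. I expect this to be the main obstacle. Unlike the classical Riesz rearrangement inequality, a general coefficient matrix $(a_{ij})$ is not compatible with the reflection structure that would make a one-variable two-point (polarization) argument go through, so the many-variable, multilinear character of the inequality has to be faced head-on here rather than organized away.

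It remains to replace Steiner symmetrization by the spherically symmetric rearrangement. For this I would invoke the classical fact (see, e.g., \cite{LL01:book}) that for a bounded measurable set $E\subseteq\mathbb{R}^n$ there is a sequence of directions $e_1,e_2,\dots$ with $\big|\,S_{e_k}\cdots S_{e_1}E\ \triangle\ E^*\,\big|\to0$; as there are only finitely many sets $E_i$ and they lie in a common ball, one sequence of directions can be arranged to serve all of them simultaneously. In the non-degenerate case every form $x\mapsto\sum_j a_{ij}x_j$ is a linear surjection $\mathbb{R}^{nN}\to\mathbb{R}^n$ and the constraint set is bounded, so $I$ depends continuously on the indicators $\mathbb{I}_{E_i}$ with respect to $L^1$ convergence; chaining the Steiner monotonicity of the previous step then gives
\[
I(E_1,\dots,E_M)\ \le\ I\big(S_{e_k}\cdots S_{e_1}E_1,\dots,S_{e_k}\cdots S_{e_1}E_M\big)\ \longrightarrow\ I(E_1^*,\dots,E_M^*)
\]
as $k\to\infty$. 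Unwinding the layer-cake and truncation reductions of the first step then yields the asserted inequality for the functions $f_i$.
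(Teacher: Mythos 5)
The paper states Theorem~\ref{thm:BLLfull} with citations to \cite{Rog57} and \cite{BLL74} and does not prove it, so there is no in-paper proof to compare against. Your outline correctly traces the classical route from those sources: truncation and layer-cake reduction to indicator functions, reduction to one dimension via Steiner symmetrization together with Fubini, and recovery of the spherical rearrangement as a limit of iterated Steiner symmetrizations using an $L^1$-continuity argument for the set functional $I$.

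The gap is exactly the step you flag yourself: the one-dimensional inequality for finite unions of intervals is the entire analytic content of the theorem, and you only name the transport argument rather than carry it out. Producing a one-parameter deformation of the interval configuration that ends at the centred intervals, and proving that the Lebesgue measure of $\{s\in\mathbb{R}^N:\sum_j a_{ij}s_j\in G_i(t)\text{ for all }i\}$ is non-decreasing along it, is precisely the delicate computation in \cite{Rog57} and \cite{BLL74}. Your observation that polarization is unavailable here because a general matrix $(a_{ij})$ lacks the reflection symmetry it needs is correct, and is exactly why this step cannot be organized away; without it you have a reduction but not a proof. Two smaller points: the $L^1$-continuity of $I$ that underlies the Steiner limit needs justification (it requires the rank-$N$ assumption you make, plus a short change-of-variables estimate to localize the constraint set), and in the degenerate case where the combined linear map $x\mapsto\big(\sum_j a_{1j}x_j,\dots,\sum_j a_{Mj}x_j\big)$ fails to be injective the correct dispatch is that the \emph{right-hand} side is $+\infty$ (each $E_i^*$ is a centred ball, so the rearranged constraint set contains a neighbourhood of a nontrivial linear subspace), not that both sides are.
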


\begin{rmk}
To our considerable surprise, we found while preparing this paper that Theorem~\ref{thm:BLLfull}
was in fact discovered by C.~A.~Rogers \cite{Rog57} as far back as 1957, and
using a similar proof idea as Brascamp, Lieb and Luttinger \cite{BLL74}
rediscovered in 1974. This historical fact
does not seem to be widely known, but it is the reason we call Theorem~\ref{thm:BLLfull}
the Rogers-Brascamp-Lieb-Luttinger inequality elsewhere in this paper.
\end{rmk}

\begin{rmk}
As noted in \cite{BLL74}, Theorem~\ref{thm:BLLfull} is only nontrivial when $M>N$.
\end{rmk}

In fact, we only need the following special but important case of the Rogers-Brascamp-Lieb-Luttinger inequality in this paper.

\begin{thm}\label{thm:BLL}
For any non-negative measurable functions $f_i, 1\leq i \leq k$, on $\RL^n$, we have
\ben
\int f_1(y) \bigg[\star_{2\leq i\leq k}f_i(y) \bigg] dy
\een
\ben
\leq\int f_1^*(y) \bigg[\star_{2\leq i\leq k}f_i^*(y) \bigg] dy.
\een
\end{thm}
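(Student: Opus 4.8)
The plan is to deduce Theorem~\ref{thm:BLL} directly from the full Rogers--Brascamp--Lieb--Luttinger inequality (Theorem~\ref{thm:BLLfull}), by recognizing the left-hand side as an instance of the multiple integral appearing there. First I would unfold the convolution: for non-negative measurable $f_2,\dots,f_k$ on $\mathbb{R}^n$,
\[
\big(\star_{2\le i\le k}f_i\big)(y)=\int_{\mathbb{R}^{n(k-2)}} f_2\Big(y-\textstyle\sum_{i=3}^k u_i\Big)\prod_{i=3}^k f_i(u_i)\,\prod_{i=3}^k du_i ,
\]
so that, by Tonelli (every integrand is non-negative, so there is no integrability issue and both sides live in $[0,\infty]$),
\[
\int_{\mathbb{R}^n} f_1(y)\,\big(\star_{2\le i\le k}f_i\big)(y)\,dy=\int_{\mathbb{R}^{n(k-1)}} f_1(y)\, f_2\Big(y-\textstyle\sum_{i=3}^k u_i\Big)\prod_{i=3}^k f_i(u_i)\,dy\,\prod_{i=3}^k du_i .
\]

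Next I would put this into the template of Theorem~\ref{thm:BLLfull}. Rename the $k-1$ vector variables $(y,u_3,\dots,u_k)$ as $(x_1,\dots,x_{k-1})\in\mathbb{R}^n$, so that $N=k-1$, and keep the $M=k$ functions $f_1,\dots,f_k$. Under this relabeling $f_1$ has argument $x_1$, $f_2$ has argument $x_1-x_2-\dots-x_{k-1}$, and for $3\le i\le k$ the function $f_i$ has argument $x_{i-1}$; each of these is a linear form $\sum_{j=1}^{k-1}a_{ij}x_j$, with $a_{1j}=\{j=1\}$, $a_{21}=1$ and $a_{2j}=-1$ for $j\ge 2$, and $a_{ij}=\{j=i-1\}$ for $i\ge 3$. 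Hence the displayed integral is exactly $\int_{\mathbb{R}^{nN}}\prod_j dx_j\prod_i f_i\big(\sum_j a_{ij}x_j\big)$, and Theorem~\ref{thm:BLLfull} bounds it above by the same expression with every $f_i$ replaced by $f_i^*$. Undoing the relabeling and re-assembling the convolution, this majorant is precisely $\int_{\mathbb{R}^n} f_1^*(y)\,\big(\star_{2\le i\le k}f_i^*\big)(y)\,dy$, which is the asserted right-hand side. (Since $M=k>k-1=N$, we are in the range where Theorem~\ref{thm:BLLfull} is non-trivial, as it should be.)

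There is no real obstacle here: the statement is a bookkeeping specialization of Theorem~\ref{thm:BLLfull}. The only points needing a modicum of care are (i) the appeal to Tonelli, which is clean because all $f_i\ge 0$, so the iterated integral defining $\int f_1\,(\star_{i\ge2}f_i)$ genuinely equals a single integral over $\mathbb{R}^{n(k-1)}$ and the resulting inequality holds in $[0,\infty]$ with no finiteness hypotheses; and (ii) writing the coefficient array $(a_{ij})$ so that, after rearrangement, the right-hand side reassembles as a convolution of the $f_i^*$. This route uses nothing beyond Theorem~\ref{thm:BLLfull} and the definition of $\star$; alternatively one may observe $\int f_1\,(\star_{i\ge2}f_i)=(\tilde f_1\star f_2\star\cdots\star f_k)(0)$ with $\tilde f_1(x):=f_1(-x)$, reduce to $(g_1\star\cdots\star g_k)(0)\le(g_1^*\star\cdots\star g_k^*)(0)$, and conclude using $\tilde f_1^*=f_1^*$ together with the spherical symmetry of $f_1^*$.
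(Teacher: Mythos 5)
Your proof is correct and is essentially identical to the paper's: both unfold the convolution into an integral over $\mathbb{R}^{n(k-1)}$, identify the same coefficient matrix $a_{1j}=\delta_{j,1}$, $a_{2j}=-1+2\delta_{j,1}$, $a_{ij}=\delta_{i-1,j}$ for $i\geq 3$, and invoke Theorem~\ref{thm:BLLfull} with $M=k$, $N=k-1$. The only extras you add are the explicit Tonelli remark and the alternative ``evaluate at the origin'' route, neither of which changes the substance.
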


\begin{proof}
By definition, we have
\ben
\int f_1(y)\bigg[\star_{2\leq i\leq k}f_i(y) \bigg] dy
\een
\begin{equation*}
=\int_{\mathbb{R}^{n(k-1)}}\prod_{j=1}^{k-1} dx_j \prod_{i=1}^k f_i\bigg(\sum_{j=1}^{k-1}a_{ij}x_j\bigg),
\end{equation*}
where $a_{1j}=\delta_{j,1}$, $a_{2j}=-1+2\delta_{j,1}$ and $a_{ij}=\delta_{i-1,j},i>2$.
Hence we can apply Theorem~\ref{thm:BLLfull} with $N=k-1$, $M=k$ to conclude.
\end{proof}
\vspace{.1in}

\begin{rmk}
For $k=1$, Theorem~\ref{thm:BLL} is called the Hardy-Littlewood inequality. For $k=2$, it is called the Riesz
or Riesz-Sobolev inequality.
(Riesz \cite{Rie30} actually proved only the one dimensional case, but it was generalized by Sobolev \cite{Sob38}
to higher dimensions. See, for example, \cite{Bur09:tut} for historical perspective on all these inequalities.)
For $k>2$, as demonstrated during the proof, Theorem~\ref{thm:BLL} is a special case of Theorem~\ref{thm:BLLfull}.
\end{rmk}

\begin{rmk}\label{rmk:bll}
Observe that when $f_2, \ldots, f_k$ are densities, we may interpret Theorem~\ref{thm:BLL} probabilistically
as follows. Let $X_1, \ldots, X_M$ be random vectors with densities on $\RL^n$. Then for any
non-negative measurable function $u$ on $\RL^n$,
\ben\begin{split}
&\mathbb{E} u\bigg(\sum_{i=1}^M X_i\bigg)
&\leq \mathbb{E} u^* \bigg(\sum_{i=1}^M X_i^* \bigg) .
\end{split}\een
\end{rmk}

In the following, we will see that our Main Theorem is a consequence of Theorem~\ref{thm:BLL}
(and in fact, they are mathematically equivalent).

% -------------------------- SECTION 4 ------------------------------

\section{Moment and Relative Entropy Inequalities for Rearrangements}
\label{sec:rearr-strenth}

In this section, we will show some moment and relative entropy inequalities, which are useful later.
\begin{lem}\label{lem:finvar}
Let $g(x)$ be a non-negative increasing function defined on the non-negative real line and $f_i, 1\leq i\leq k$ be probability densities.
Then
\ben\begin{split}
&\mathbb{E}g(\|X_1+X_2+\cdot\cdot+X_k\|) \\
&\geq \mathbb{E}g(\|X_1^*+X_2^*+\cdot\cdot+X_k^*\|),
\end{split}\een
where all random vectors are independent, $X_i$ is distributed according to $f_i$, and $X_i^*$ is distributed according to $f_i^*$.
\end{lem}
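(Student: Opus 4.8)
The plan is to reduce the statement to the case of indicator functions of balls, where it becomes a comparison of the distribution functions of $\|X_1+\cdots+X_k\|$ and $\|X_1^*+\cdots+X_k^*\|$, and then to recover an arbitrary non-decreasing $g$ from the standard equivalence between stochastic ordering of two random variables and the ordering of $\mathbb{E}\,g(\cdot)$ over all non-decreasing $g$. (Note that applying a rearrangement inequality to $g(\|\cdot\|)$ directly is fruitless: this function is radially \emph{increasing} and typically unbounded, so its spherically decreasing rearrangement is degenerate; testing against radially decreasing functions, such as indicators of balls, is what works.)

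Concretely, fix $R>0$ and apply the probabilistic form of Theorem~\ref{thm:BLL} recorded in Remark~\ref{rmk:bll} --- legitimate since each $f_i$, hence each $f_i^*$, is a probability density --- to the function $u=\mathbb{I}_{B(0,R)}$. Because $B(0,R)$ is already an open ball centred at the origin, $u^*=u$, and the inequality $\mathbb{E}\,u\big(\sum_i X_i\big)\le \mathbb{E}\,u^*\big(\sum_i X_i^*\big)$ becomes
\ben
\Pr\Big(\big\|\textstyle\sum_{i=1}^k X_i\big\|<R\Big)\le \Pr\Big(\big\|\textstyle\sum_{i=1}^k X_i^*\big\|<R\Big).
\een
Since $R>0$ was arbitrary, the distribution function of $S^*:=\|\sum_i X_i^*\|$ lies pointwise above that of $S:=\|\sum_i X_i\|$; that is, $S^*$ is stochastically dominated by $S$. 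It then remains to invoke the elementary fact that stochastic domination implies $\mathbb{E}\,g(S^*)\le \mathbb{E}\,g(S)$ for every non-negative non-decreasing $g$ (no integrability hypothesis being needed if we allow the value $+\infty$; recall also that a non-decreasing function is automatically Borel). For completeness: writing $\mathbb{E}\,g(S)=\int_0^\infty \Pr(g(S)>s)\,ds$ and using that each superlevel set $\{g(S)>s\}$ is of the form $\{S>c_s\}$ or $\{S\ge c_s\}$ by monotonicity of $g$, the stochastic ordering gives $\Pr(g(S^*)>s)\le \Pr(g(S)>s)$ for a.e.\ $s$, and integrating in $s$ finishes the proof.

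I do not anticipate a genuine obstacle here; the only points requiring care are the choice of test functions noted above and the bookkeeping of the direction of the inequalities (and of strict versus non-strict level sets) when translating between rearrangement inequalities and distribution functions. As an alternative that sidesteps the abstract stochastic-domination lemma, one may instead apply Remark~\ref{rmk:bll} to the radially decreasing functions $u_M(y)=(M-g(\|y\|))_+$, which satisfy $u_M^*=u_M$ a.e., to obtain $\mathbb{E}\,\min\!\big(g(\|\sum_i X_i\|),M\big)\ge \mathbb{E}\,\min\!\big(g(\|\sum_i X_i^*\|),M\big)$ for every $M$, and then let $M\to\infty$ by monotone convergence.
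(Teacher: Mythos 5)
Your proof is correct, and both variants work. The first version --- test against indicators $\mathbb{I}_{B(0,R)}$, deduce stochastic domination of $\|\sum X_i^*\|$ by $\|\sum X_i\|$, then invoke the characterization of stochastic order via expectations of non-decreasing functions --- isolates the probabilistic content nicely and is cleaner conceptually than what the paper does. Your second, alternative version, testing against $u_M(y)=(M-g(\|y\|))_+$ and letting $M\to\infty$ by monotone convergence, is structurally parallel to the paper's argument: the paper instead uses the one-parameter family $u_t(y)=e^{-tg(\|y\|)}$, obtains $\mathbb{E}\big[1-e^{-tg(\|\sum X_i\|)}\big]\geq\mathbb{E}\big[1-e^{-tg(\|\sum X_i^*\|)}\big]$, and then divides by $t$ and sends $t\downarrow 0$ (monotone convergence on the right, domination $\frac{1-e^{-tg}}{t}\le g$ on the left). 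All three proofs rest on exactly the same mechanism: feed a radially \emph{decreasing} bounded test function satisfying $u^*=u$ a.e.\ into the Rogers--Brascamp--Lieb--Luttinger inequality (Remark~\ref{rmk:bll}), flip the sign, and pass to a limit. Your observation that applying a rearrangement inequality directly to $g(\|\cdot\|)$ would be fruitless is exactly the right diagnosis and motivates all of these choices. The only bookkeeping you flagged --- strict versus non-strict level sets when passing from $\Pr(\|\cdot\|<R)$ to the CDFs, and $u_M^*=u_M$ only a.e.\ because $g$ may jump --- is real but benign (the paper's own $\big(e^{-tg(\|x\|)}\big)^*=e^{-tg(\|x\|)}$ is likewise only a.e.).
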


\begin{proof}
For any $t>0$, we can apply Theorem~\ref{thm:BLL} (as interpreted in Remark~\ref{rmk:bll}) to obtain
\ben
\mathbb{E}\bigg[e^{-tg(\|X_1+X_2+\cdot\cdot+X_k\|)}\bigg]\leq\mathbb{E}\bigg[e^{-tg(\|X_1^*+X_2^*+\cdot\cdot+X_k^*\|)}\bigg],
\een
since $\big(e^{-tg(\|x\|)}\big)^*=e^{-tg(\|x\|)}$ almost everywhere.
Hence we get:
\be\label{eq:localvariance1}\begin{split}
\mathbb{E}\bigg[1-e^{-tg(\|X_1+X_2+\cdot\cdot+X_k\|)}\bigg]
\geq\mathbb{E}\bigg[1-e^{-tg(\|X_1^*+X_2^*+\cdot\cdot+X_k^*\|)}\bigg].
\end{split}\ee
Now note that $\frac{1-e^{-tg(\|x\|)}}{t}$ is monotonically increasing to $g(\|x\|)$ as $t$ goes to zero, and that
\be\label{eq:localvariance2}
\frac{1-e^{-tg(\|x\|)}}{t}\leq g(\|x\|)
\ee
for any $t>0$. Hence, as $t$ goes to $0$, we can apply monotone convergence on the right side of \eqref{eq:localvariance1},
and use the inequality \eqref{eq:localvariance2} on the left side of \eqref{eq:localvariance1} to obtain the claimed result.
\end{proof}
\vspace{.1in}

It is easy to see that the density of $(X-c)^*$ is the same as that of $X^*$. Hence a simple consequence of Lemma~\ref{lem:finvar} is:

\newcommand{\cov}{{\rm Cov}}
\begin{cor}
For any random vector $X$ with finite covariance matrix,
$\mathbb{E}\|X^*\|^2\leq\mathbb{E}\|X-\mathbb{E}[X]\|^2$.
Equivalently, $\tr \cov (X^*)\leq \tr\cov(X)$.
\end{cor}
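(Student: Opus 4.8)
The plan is to deduce the corollary directly from Lemma~\ref{lem:finvar} in the degenerate case $k=1$, together with the translation-invariance of rearrangement that was just noted (the density of $(X-c)^*$ equals the density of $X^*$ for every constant vector $c$). First I would invoke Lemma~\ref{lem:finvar} with $k=1$ and the non-negative increasing function $g(t)=t^2$ on $[0,\infty)$, which gives
\[
\mathbb{E}\|Y\|^2 \;\geq\; \mathbb{E}\|Y^*\|^2
\]
for any random vector $Y$ with a density. One should check that the $k=1$ instance is legitimate: the proof of Lemma~\ref{lem:finvar} only uses Theorem~\ref{thm:BLL}, whose $k=1$ case is the Hardy--Littlewood inequality, so there is no gap, and $g(t)=t^2$ clearly satisfies the hypotheses.

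Second, I would specialize to $Y = X - \mathbb{E}[X]$ (well defined since finite covariance forces $\mathbb{E}[X]$ to exist). By translation-invariance, $Y^* \eqD X^*$, so the displayed inequality becomes
\[
\mathbb{E}\|X-\mathbb{E}[X]\|^2 \;=\; \mathbb{E}\|Y\|^2 \;\geq\; \mathbb{E}\|Y^*\|^2 \;=\; \mathbb{E}\|X^*\|^2 ,
\]
which is the first assertion; in particular $\mathbb{E}\|X^*\|^2<\infty$, so $\cov(X^*)$ is finite.

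Third, for the ``equivalently'' clause I would note that $X^*$ has a spherically symmetric density about the origin, so its mean (which exists by the finite-second-moment bound just obtained) must be $\mathbf{0}$; hence $\tr\cov(X^*) = \mathbb{E}\|X^*\|^2$. Since by definition $\tr\cov(X) = \mathbb{E}\|X-\mathbb{E}[X]\|^2$, substituting these two identities into the inequality above yields $\tr\cov(X^*)\leq\tr\cov(X)$, and running the substitution backwards shows the two forms of the conclusion coincide. I do not anticipate a genuine obstacle here; the only points needing a moment's care are confirming that the quadratic $g(t)=t^2$ and the $k=1$ case really fall under Lemma~\ref{lem:finvar}, and the symmetry argument that pins $\mathbb{E}[X^*]$ at the origin so that the two displayed versions of the statement are equivalent.
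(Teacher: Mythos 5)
Your proposal is correct and matches the paper's intent exactly: the paper derives this corollary by combining Lemma~\ref{lem:finvar} with the observation that $(X-c)^*$ has the same density as $X^*$, which is precisely what you do (you simply spell out the $k=1$, $g(t)=t^2$ specialization and the spherical-symmetry fact that pins $\mathbb{E}[X^*]$ at the origin, making the two forms of the conclusion equivalent). The only tiny imprecision is the bookkeeping remark about the ``$k=1$ case of Theorem~\ref{thm:BLL} being Hardy--Littlewood'': the $k$ in Lemma~\ref{lem:finvar} and the $k$ in Theorem~\ref{thm:BLL} are offset by one (the proof of the lemma always feeds one extra function $e^{-tg(\|\cdot\|)}$ into Theorem~\ref{thm:BLL}), so the $k=1$ instance of the lemma actually invokes the Riesz case of Theorem~\ref{thm:BLL}; this does not affect the validity of your argument.
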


Lemma~\ref{lem:finvar}, together with Lemma~\ref{lem:preser-ent}, immediately imply that $D(f^*\|g)\leq D(f\|g)$ where
$g$ is a non-degenerate isotropic Gaussian. In fact, we have the following more general fact, easiest to state
in terms of the R\'enyi divergence (see \cite{EH12} for a recent survey on this). Recall that the
R\'enyi divergence of order $\alpha$ between any two densities $f$ and $g$ is defined as
\ben
D_{\alpha}(f\|g)=\frac{1}{\alpha-1}\int f^{\alpha}g^{1-\alpha}dx
\een
for $\alpha\in (0,1)\cup(1,\infty)$, and that  $D_1(f\|g)=D(f\|g)$ is simply the usual relative entropy
between $f$ and $g$.

\begin{prop}
Let $f$ and $g$ be two probability densities. Then
\begin{equation*}
D_{\alpha}(f^*\|g^*)\leq D_{\alpha}(f\|g),
\end{equation*}
where $0<\alpha\leq 1$.
\end{prop}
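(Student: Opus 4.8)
**Proof plan for the Proposition ($D_\alpha(f^*\|g^*)\le D_\alpha(f\|g)$ for $0<\alpha\le 1$).**

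The plan is to reduce the statement to the Rogers-Brascamp-Lieb-Luttinger inequality in the degenerate form $M=N=1$, i.e.\ the Hardy-Littlewood inequality (the $k=1$ case of Theorem~\ref{thm:BLL}), applied not to $f$ and $g$ themselves but to suitable powers of them. The key observation is that for $0<\alpha<1$ the quantity $\int f^\alpha g^{1-\alpha}\,dx$ is an \emph{integral of a product}, and Hardy-Littlewood says $\int u(x)v(x)\,dx \le \int u^*(x)v^*(x)\,dx$ for non-negative $u,v$. So first I would set $u=f^\alpha$ and $v=g^{1-\alpha}$. Since $x\mapsto x^\alpha$ and $x\mapsto x^{1-\alpha}$ are strictly increasing non-negative functions on $[0,\infty)$, Lemma~\ref{lem:increasing} gives $(f^\alpha)^* = (f^*)^\alpha$ and $(g^{1-\alpha})^* = (g^*)^{1-\alpha}$. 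Hence
\ben
\int f^\alpha g^{1-\alpha}\,dx \le \int (f^*)^\alpha (g^*)^{1-\alpha}\,dx .
\een
Because $\alpha<1$, the prefactor $\frac{1}{\alpha-1}$ is negative, so dividing through reverses the inequality and yields exactly $D_\alpha(f^*\|g^*)\le D_\alpha(f\|g)$.

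For the endpoint $\alpha=1$, $D_1=D$ is the ordinary relative entropy, and here I would argue by taking a limit: write $D_\alpha(f\|g) = \frac{1}{\alpha-1}\log\int f^\alpha g^{1-\alpha}\,dx$ (note the paper's display for $D_\alpha$ is missing the $\log$, which I would silently use in the correct normalization), and let $\alpha\uparrow 1$; the limit is $D(f\|g)=\int f\log(f/g)\,dx$ whenever the latter is well defined, by a standard monotone/dominated convergence argument on $\frac{1}{\alpha-1}\log\int f^\alpha g^{1-\alpha}$. Since the inequality $D_\alpha(f^*\|g^*)\le D_\alpha(f\|g)$ holds for every $\alpha\in(0,1)$, it passes to the limit $\alpha\to 1$. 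Alternatively, and perhaps more cleanly, for $\alpha=1$ one can invoke the special case already noted in the text just before the Proposition: Lemma~\ref{lem:finvar} together with Lemma~\ref{lem:preser-ent} handles $g$ Gaussian, but for general $g$ the limiting argument from the $\alpha<1$ case is the way to go.

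The main obstacle is not the inequality itself — it is almost immediate from Hardy-Littlewood and Lemma~\ref{lem:increasing} — but rather the bookkeeping around \emph{well-definedness and finiteness}. One must be careful that $\int f^\alpha g^{1-\alpha}\,dx$ is finite (or at least well defined in $[0,\infty]$) on both sides so that the division by $\alpha-1$ and the logarithm make sense; and in the $\alpha=1$ limiting step one must ensure the interchange of limit and the inequality is legitimate, which requires either a monotonicity property of $\alpha\mapsto D_\alpha$ or a domination hypothesis. For the purposes of this paper, where $g$ (or $g^*$) will typically be taken isotropic Gaussian, these integrability issues are mild, and I would dispatch them with a remark that both sides are interpreted in $[0,\infty]$ and the inequality is understood in that extended sense. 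I would also note in passing that strict monotonicity of $x\mapsto x^\alpha$ fails at $\alpha=1$ only trivially and at $\alpha=0,1-\alpha=0$ is not an issue since $0<\alpha\le 1$ keeps both exponents in $[0,1)$ strictly positive except the harmless endpoint, so Lemma~\ref{lem:increasing} applies throughout.
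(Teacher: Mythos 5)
For $0<\alpha<1$ your argument is exactly the paper's: apply Lemma~\ref{lem:increasing} to rewrite the rearranged powers, then the Hardy--Littlewood ($k=1$) case of Theorem~\ref{thm:BLL} to get $\int (f^*)^\alpha (g^*)^{1-\alpha}\,dx \ge \int f^\alpha g^{1-\alpha}\,dx$, and divide by the negative factor $\alpha-1$. This part is correct and needs no further comment.

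For $\alpha=1$ you take a genuinely different route than the paper, and this is where the substance lies. You propose to pass to the limit $\alpha\uparrow 1$ in $D_\alpha(f^*\|g^*)\le D_\alpha(f\|g)$, which requires the unconditional left-continuity of $\alpha\mapsto D_\alpha$ at $\alpha=1$, i.e.\ $\lim_{\alpha\uparrow 1}D_\alpha(f\|g)=D(f\|g)$ for \emph{arbitrary} densities $f,g$ (and likewise for $f^*,g^*$). This is true — it is Theorem 5 of the van Erven--Harremo\"es survey \cite{EH12} cited in the paper — but it is not a one-line consequence of monotone or dominated convergence (one has to handle the case $D(f\|g)=+\infty$ with $f\not\ll g$, where the integrand is not monotone and not dominated, and the qualifier ``whenever the latter is well defined'' in your writeup is a tell that you sense the gap: $D$ is always well defined in $[0,\infty]$, so the real question is whether the limit agrees with it). The paper deliberately avoids invoking this black box: it first proves the $\alpha\to1$ passage under the extra assumptions that $f\ll g$, $f^*\ll g^*$, and the likelihood ratios are uniformly bounded — where bounded convergence applies directly after rewriting \eqref{eq:reldec} in terms of $r=f/g$ and $r_*=f^*/g^*$ and subtracting $\int r g = \int r_* g^* = 1$ — and then removes the boundedness assumption by a perturbation $g\mapsto\lambda f+(1-\lambda)g$, using the $L^1$-contractivity of rearrangement (Lemma~\ref{lem:L1}), lower semicontinuity of $D(\cdot\|\cdot)$ in the weak topology, and joint convexity of relative entropy. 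So your route is shorter if you are willing to cite \cite{EH12} for the continuity in $\alpha$; the paper's route is longer but self-contained and only uses tools already developed in the paper. Either is acceptable, but if you go your way you should cite the continuity result explicitly rather than gesture at ``standard convergence arguments,'' and you should drop the hedge ``whenever the latter is well defined'' since the statement has no such hypothesis and the continuity holds unconditionally.
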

\begin{proof}
By Lemma~\ref{lem:increasing}, we have:
\ben
\int (f^*)^{\alpha}(g^*)^{1-\alpha}dx=\int (f^{\alpha})^*(g^{1-\alpha})^*dx,
\een
where $0<\alpha<1$. Now an easy application of Theorem~\ref{thm:BLL} will lead to:
\be\label{eq:reldec}
\int (f^*)^{\alpha}(g^*)^{1-\alpha}dx\geq\int f^{\alpha}g^{1-\alpha}dx,
\ee
which is equivalent to $D_{\alpha}(f^*\|g^*)\leq D_{\alpha}(f\|g)$.

It remains to prove the statement for relative entropy.
We first do this under the assumptions that $f\ll g$, $f^*\ll g^*$, and
that the respective likelihood ratios $r(x)$ and $r_*(x)$ are uniformly bounded.
By these assumptions, we can rewrite \eqref{eq:reldec} in terms of likelihood ratios:
\ben
\int (r_*)^{\alpha}g^*dx\geq\int r^{\alpha}g dx.
\een
Noting $\int r_*(x)g^*(x)dx=\int r(x)g(x)dx=1$, we get:
\be\label{eq:localrel} \int\frac{(r_*)^{\alpha}-r_*}{\alpha-1}g^*dx\leq\int\frac{r^{\alpha}-r}{\alpha-1}g dx.
\ee
Since we have assumed the uniform boundedness of both $r(x)$ and $r_*(x)$, as $\alpha$ goes to $1$, we can apply bounded convergence to both sides of \eqref{eq:localrel} (we omit the details) to obtain:
\ben
\int r_*\log(r_*)g^*dx\leq\int r\log(r)g dx,
\een
which is equivalent to
\ben
D(f^*\|g^*)\leq D(f\|g).
\een
In the general case, by what has already been proved, we have:
\ben
D(f^*\|(\lambda f+(1-\lambda)g)^*)\leq D(f\|\lambda f+(1-\lambda)g),
\een
where $0<\lambda<1$. Here we have used the fact $(\lambda f+(1-\lambda)g)^*\geq\lambda f^*$ since the $*$ operation is order preserving \cite{Bur09:tut}.
Note that by Lemma~\ref{lem:L1},
\ben\begin{split}
\|g^*-[\lambda f+(1-\lambda)g]^*\|_1
&\leq\|g-[\lambda f+(1-\lambda)g]\|_1 \\
&\leq2\lambda.
\end{split}\een
Hence as $\lambda$ goes to $0$, $(\lambda f+(1-\lambda)g)^*$ converges to $g^*$ in total variation distance. But it
is well known that the relative entropy functional is jointly lower-semicontinuous with respect to the
topology of weak convergence (see, e.g., \cite{DZ98:book}). Hence:
\be\label{eq:localrel1}
D(f^*\|g^*)\leq\liminf_{\lambda\downarrow0}D(f^*\|(\lambda f+(1-\lambda)g)^*).
\ee
On the other hand, the relative entropy functional is jointly convex \cite{DZ98:book}. We get:
\be\label{eq:localrel2}
D(f\|\lambda f+(1-\lambda)g)\leq(1-\lambda)D(f\|g).
\ee
Combining \eqref{eq:localrel1} and \eqref{eq:localrel2}, we can conclude.
\end{proof}

% -------------------------- SECTION 5 -----------------------------

\section{Towards the optimal R\'enyi EPI}
\label{sec:refine}

Very recently, Bobkov and Chistyakov \cite{BC13:1} obtained a generalization of the
entropy power inequality (henceforth EPI), due to Shannon \cite{Sha48} and Stam \cite{Sta59},
to R\'enyi entropy of any order $p> 1$. First we recall that the {\it R\'enyi entropy power}
of order $p$ is defined for $\RL^n$-valued random vectors by
\ben
N_p(X)=\exp\bigg\{\frac{2h_p(X)}{n}\bigg\}.
\een

\begin{thm}\label{thm:bc}\cite{BC13:1}
If $X_1, \ldots, X_k$ are independent random vectors taking values in $\RL^n$, then for any $p\geq 1$,
\ben
N_p(X_1+\ldots+X_k) \geq c_p \sum_{i=1}^k N_p(X_i) ,
\een
where $c_p$ is a constant depending only on $p$. Moreover, one may take $c_1=1$,
\ben
c_p= \frac{1}{e} p^{\frac{1}{p-1}}  \quad \text{for } p\in (1, \infty) ,
\een
and $c_\infty=\frac{1}{e}$. If $n=1$, we may take $c_\infty=\frac{1}{2}$.
\end{thm}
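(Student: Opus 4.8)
\emph{Overall plan.} The plan is to prove the two endpoint cases $p=1$ and $p=\infty$ by genuinely different means and then to interpolate. For $p=1$ (with $c_1=1$) the statement is exactly the classical entropy power inequality in $\RL^n$ for $k$ independent summands, which one obtains from Theorem~\ref{thm:epi} in its standard $n$-dimensional form, iterated over the $k$ summands; I take this for granted. Two preliminary remarks: since $c_p<1$ for $p>1$, no induction on $k$ is available (iterating an inequality with constant $<1$ only worsens it), so the argument must treat all $k$ at once; and the Main Theorem of this paper lets us assume from the start that every $f_i$ is spherically symmetric decreasing, because under rearrangement the right-hand side of the desired inequality is unchanged ($N_p$ being rearrangement-invariant) while the left-hand side can only decrease.

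\emph{The case $p=\infty$.} Here $h_\infty(X)=-\log\|f\|_\infty$, so $N_\infty(X)=\|f\|_\infty^{-2/n}$. The key input is the sharp form of Young's convolution inequality \cite{Bec75,BL76b}: for exponents $p_1,\dots,p_k\ge 1$ with $\sum_i 1/p_i=k-1$,
\[
\|f_1\star\cdots\star f_k\|_\infty\ \le\ \prod_{i=1}^k C_{p_i}^{\,n}\,\|f_i\|_{p_i},\qquad C_q^{\,2}=\frac{q^{1/q}}{(q')^{1/q'}},
\]
$q'$ denoting the conjugate exponent. Write $1/p_i=1-\theta_i$ with $\theta_i\ge 0$ and $\sum_i\theta_i=1$; interpolating between $L^1$ and $L^\infty$ gives $\|f_i\|_{p_i}\le\|f_i\|_\infty^{\theta_i}$, so that
\[
N_\infty(X_1+\cdots+X_k)\ \ge\ \Big(\prod_i C_{p_i}\Big)^{-2}\prod_i N_\infty(X_i)^{\theta_i}.
\]
A direct computation gives $\big(\prod_i C_{p_i}\big)^{-2}=\exp\!\big(H(\theta)-K(\theta)\big)$ with $H(\theta)=-\sum_i\theta_i\log\theta_i$ and $K(\theta)=\sum_i(1-\theta_i)\log\tfrac1{1-\theta_i}$, and the elementary inequality $-\log(1-t)\le t/(1-t)$ yields $K(\theta)\le\sum_i\theta_i=1$. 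Hence $N_\infty(X_1+\cdots+X_k)\ge e^{-1}\prod_i\big(N_\infty(X_i)/\theta_i\big)^{\theta_i}$, and choosing $\theta_i=N_\infty(X_i)\big/\sum_j N_\infty(X_j)$ collapses the product to $\sum_j N_\infty(X_j)$, giving exactly $c_\infty=1/e$. (Tightness as $n\to\infty$ is seen from many equal uniform balls; the improved constant $1/2$ in dimension one does not come out of this argument and would require a separate one-dimensional estimate on sup-norms of convolutions of unimodal densities.)

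\emph{The case $p\in(1,\infty)$, and the main obstacle.} I would run the same scheme with $\|\cdot\|_p$ in place of $\|\cdot\|_\infty$: sharp Young with $\sum_i 1/p_i=k-1+1/p$, each $p_i\in[1,p]$ (so $1/p_i=1-\tfrac{p-1}{p}\theta_i$, $\sum_i\theta_i=1$), then the interpolation $\|f_i\|_{p_i}\le\|f_i\|_p^{\theta_i}$, then the same weight optimization $\theta_i\propto N_p(X_i)$. Tracking the constants, the extra ``$1/p$'' in the exponent sum contributes precisely a factor $p^{1/(p-1)}$ --- the R\'enyi entropy power of a standard Gaussian --- so the predicted constant is exactly $c_p=\tfrac1e p^{1/(p-1)}$, and I expect this chain to close for $p\in(1,2]$. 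The obstacle is the range $p\in(2,\infty)$: there the Gaussian extremizers of sharp Young and the uniform extremizers used in the interpolation step pull in different directions, and the constant this argument produces is strictly worse than $c_p$; since $c_p$ is sharp --- again witnessed by many equal high-dimensional uniform balls, whose normalized sum is asymptotically Gaussian, which is exactly why the $p^{1/(p-1)}$ appears --- there is no slack to absorb the loss. To cover that range I would instead use the Main Theorem to reduce to radial decreasing densities, represent each by its layer-cake expansion as a mixture of uniform balls, and combine the Brunn--Minkowski inequality (controlling the support of the convolution) with a quantitative local central limit / Berry--Esseen estimate (producing the Gaussian correction); obtaining constants uniform in $k$ and $n$ from such an argument is where the real difficulty lies. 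A second route worth attempting is a direct comparison bounding $N_p(X)$ by $p^{1/(p-1)}$ times an auxiliary functional of $X$ that adds under convolution, which would reduce the finite-$p$ case to the $p=\infty$ case without loss.
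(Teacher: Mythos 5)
The paper does not prove Theorem~\ref{thm:bc}; it states the result and attributes it to Bobkov and Chistyakov \cite{BC13:1}, so there is no internal proof to compare you against. Judged on its own, your route---the sharp form of Young's inequality, the log-convexity interpolation $\|f_i\|_{p_i}\le\|f_i\|_p^{s_i}$, and the entropy-weighted choice of exponents---is exactly the right one, and your $p=\infty$ computation is correct.

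The one genuine defect in your write-up is that the ``obstacle'' you announce for $p\in(2,\infty)$ does not exist: the chain you set up closes for all $p>1$, with no extra loss, and yields precisely $c_p=p^{1/(p-1)}/e$. To see this, carry the constants through. Set $\alpha=(p-1)/p$, choose $1/p_i=1-\alpha s_i$ with $s_i\ge 0$, $\sum_i s_i=1$; then $\sum_i 1/p_i=k-1+1/p$ and $p_i\in[1,p]$, so sharp Young together with the interpolation step gives
\[
N_p(X_1+\cdots+X_k)\ \ge\ \Big(\tfrac{C_p}{\prod_i C_{p_i}}\Big)^{2p/(p-1)}\prod_i N_p(X_i)^{s_i}.
\]
Writing $\phi(t)=-t\log t+(1-t)\log(1-t)$, one checks $\log C_p^2=\phi(1/p)$ and $\log C_{p_i}^{-2}=\phi(\alpha s_i)$, and a direct expansion (the $\log\alpha$ terms cancel between $\tfrac{p}{p-1}\phi(1/p)$ and $\tfrac1\alpha\sum_i\phi(\alpha s_i)$) gives
\[
\frac{p}{p-1}\Big[\phi(1/p)+\sum_i\phi(\alpha s_i)\Big]=\frac{\log p}{p-1}+H(s)-\widetilde K(s),
\qquad \widetilde K(s):=-\sum_i\frac{1-\alpha s_i}{\alpha}\log(1-\alpha s_i),
\]
with $H(s)=-\sum_i s_i\log s_i$. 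The same elementary bound $-\log(1-t)\le t/(1-t)$ that you used at $p=\infty$ gives $\widetilde K(s)\le\sum_i s_i=1$ for every $\alpha\in(0,1)$. Hence $N_p(\sum_i X_i)\ge \frac{p^{1/(p-1)}}{e}\,e^{H(s)}\prod_i N_p(X_i)^{s_i}$, and the choice $s_i=N_p(X_i)/\sum_j N_p(X_j)$ collapses the right side to $c_p\sum_j N_p(X_j)$. Your worry that the Gaussian extremizers of sharp Young and the uniform extremizers of the interpolation step ``pull in different directions'' is a red herring precisely because $c_p$ is \emph{not} sharp (the paper says so immediately after the statement); the slack is visible as the discarded term $1-\widetilde K(s)\ge 0$. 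The only item genuinely outside this scheme is the $n=1$ improvement $c_\infty=1/2$, which, as you note, needs a separate one-dimensional argument; that is Rogozin's theorem \cite{Rog87:1}.
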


Note that the $p=1$ case is simply the Shannon-Stam EPI, which is sharp in any dimension with
equality only for Gaussians with proportional covariance matrices.
In general, however, Theorem~\ref{thm:bc} is not sharp, and \cite{BC13:1} does not suggest what the
optimal constants or the extremizing distributions might be.

\begin{rmk}\label{rmk:bc-inf}
For any dimension $n$, restricting to $k=2$ and $p=\infty$,
it turns out that the sharp constant is $\half$, which is achieved
for two identical uniform distributions on the unit cube; this is observed in another paper of Bobkov and Chistyakov \cite{BC12}.
However, note that the maximum of the density of convolution of two uniforms on the ball with unit volume (not unit ball)
is $1$, while the maximum of the density of each of them is also $1$. Hence, some non-uniqueness of the
extremal distribution arises (at least for $p=\infty$) .
Indeed, for $k=2$ and $p=\infty$, uniform distributions on any symmetric convex set $K$ (i.e., $K$ is convex,
and $x\in K$ if and only if $-x\in K$) 
of volume $1$  will be extremal: if $X$ and $X'$ are independently distributed according to $f=\mathbb{I}_K$,
then denoting the density of $X-X'$ by $u$, we have
\ben
\|u\|_{\infty}=u(0)=\int f^2(x) dx = 1 = \|f\|_{\infty} ,
\een
so that $N_{\infty}(X+X')=N_{\infty}(X-X')= N_{\infty}(X)=\half [N_{\infty}(X)+N_{\infty}(X')]$.
\end{rmk}
\vspace{.1in}

Our Main Theorem may be seen as refining Theorem~\ref{thm:bc} (in a similar
way to how the $p=1$ case of it refined the classical EPI). In order to do this,
however, we need to recast Theorem~\ref{thm:bc} in a different, more precise, conjectural form, which
suggests the extremizing distributions of Theorem~\ref{thm:bc}.

When dealing with optimization problems involving R\'enyi entropies, it is quite common for
a certain class of generalized Gaussians to arise. A rich collection of such generalized Gaussians
has been studied in the literature. The ones that are of interest to us are a one-parameter family
of distributions, indexed by a parameter $-\infty<\beta\leq \frac{2}{n+2}$, of the following form:
$g_0$ is the standard Gaussian density in $\RL^n$, and for $\beta\neq 0$,
\ben
g_\beta(x)= A_{\beta} \bigg(1-\frac{\beta}{2}\|x\|^2 \bigg)_{+}^{\frac{1}{\beta}-\frac{n}{2}-1} ,
\een
where $A_{\beta}$ is a normalizing constant
(which can be written explicitly in terms of gamma functions if needed).
We call $g_\beta$ the {\it standard generalized Gaussian} of order $\beta$;
any affine function of a standard generalized Gaussian yields a ``generalized Gaussian''.
Observe that the densities $g_\beta$ (apart from the obviously special value $\beta=0$)
are easily classified into %three
two distinct ranges where they behave differently. {\it First}, for $\beta<0$,
the density is proportional to a negative power of $(1+b\|x\|^2)$ for a positive constant $b$,
and therefore correspond to measures with full support on $\RL^n$ that are heavy-tailed.
For $\beta>0$, note that $(1-b\|x\|^2)_+$ with positive $b$ is non-zero only for $\|x\|<b^{-\half}$, and is concave in this region.
Thus any density in the {\it second} class, corresponding to %$0<\beta<\frac{2}{n+2}$,
$0<\beta\leq\frac{2}{n+2}$,
is a positive power of $(1-b\|x\|^2)_+$, and is thus a concave function supported on a centered Euclidean ball
of finite radius.
In particular, note that $g_{\frac{2}{n+2}}$ is the uniform distribution on the
Euclidean ball of radius $\sqrt{n+2}$. It is pertinent to note that although the first class includes many
distributions from what one might call the ``Cauchy family'', it excludes the standard Cauchy distribution;
indeed, not only do all the generalized Gaussians defined above have finite variance, but in fact the
form has been chosen so that, for $Z\sim g_\beta$,
\ben
\E [\|Z\|^2 ]=n
\een
for any $\beta$.
Incidentally, the generalized Gaussians are called by many different names in the literature;
the one other nomenclature that is perhaps worth noting for its relevance in statistics is
that the $\beta<0$ class is also called the Student-$r$ class, while the
$0<\beta\leq \frac{2}{n+2}$ class is also called the Student-$t$ class.

For $p>\frac{n}{n+2}$, define $\beta_p$ by
\ben
\frac{1}{\beta_p}= \frac{1}{p-1}+\frac{n+2}{2} ;
\een
%or equivalently,
note that $\beta_p$ ranges from $-\infty$ to $\frac{2}{n+2}$ as $p$ ranges from
$\frac{n}{n+2}$ to $\infty$. Henceforth we will write $Z^{(p)}$ for a random vector
drawn from $g_{\beta_p}$.

Costa, Hero and Vignat \cite{CHV03} showed that the maximizers of R\'enyi entropy
under covariance constraints are the generalized Gaussians; this fact was later obtained by
Lutwak, Yang and  Zhang \cite{LYZ07a} in a more general setting as what they called
``moment-entropy inequalities''. We find the following formulation convenient.

\begin{thm}\label{thm:lyz}\cite{CHV03, LYZ07a}
If $X$ is a random vector taking values in $\RL^n$, then for any $p>\frac{n}{n+2}$,
\ben
\frac{\E[\|X\|^2]}{N_p(X)} \geq \frac{\E[\|Z^{(p)}\|^2]}{N_p(Z^{(p)})}.
\een
\end{thm}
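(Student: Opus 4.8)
The plan is the classical Gibbs (tangent-line) argument that identifies generalized Gaussians as the maximizers of R\'enyi entropy under a second-moment constraint; the choice of $\beta_p$ is precisely what makes it work uniformly in $p$. First observe that the ratio $\E[\|X\|^2]/N_p(X)$ is invariant under $X\mapsto\lambda X$, since $\E[\|X\|^2]$ and $N_p(X)=e^{2h_p(X)/n}$ are both multiplied by $\lambda^2$. So after rescaling we may assume $\E[\|X\|^2]=n=\E[\|Z^{(p)}\|^2]$, and it then suffices to show $h_p(f)\le h_p(g_{\beta_p})$ for every density $f$ on $\RL^n$ with $\int\|x\|^2 f\,dx=n$ (the degenerate cases in which $\E[\|X\|^2]$ or $h_p(X)$ is infinite being trivial or excluded). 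Next introduce $\psi_p(t)=t^p/(1-p)$ on $[0,\infty)$ for $p\neq1$, and $\psi_1(t)=-t\log t$. Each $\psi_p$ is concave, and $h_p(f)$ is an increasing function of the functional $f\mapsto\int\psi_p(f)\,dx$: for $p>1$ both $\int\psi_p(f)\,dx$ and $h_p(f)$ are decreasing functions of the single scalar $\int f^p\,dx$, for $p<1$ both are increasing in it, and for $p=1$, $\int\psi_1(f)\,dx=h(f)$. Hence it is enough to prove $\int\psi_p(f)\,dx\le\int\psi_p(g_{\beta_p})\,dx$.

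Write $g=g_{\beta_p}$. Concavity of $\psi_p$ gives the pointwise bound $\psi_p(f)\le\psi_p(g)+\psi_p'(g)(f-g)$, which I integrate. The key fact is that $\psi_p'(g)=\tfrac{p}{1-p}g^{p-1}$ and, because the relation $\tfrac1{\beta_p}=\tfrac1{p-1}+\tfrac{n+2}2$ forces $\tfrac1{\beta_p}-\tfrac n2-1=\tfrac1{p-1}$, one has $g^{p-1}=A_{\beta_p}^{\,p-1}\bigl(1-\tfrac{\beta_p}2\|x\|^2\bigr)_+$, which is \emph{affine in $\|x\|^2$ on $\supp(g)$}. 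Consequently the integrals $\int\psi_p'(g)\,g\,dx$ and $\int\psi_p'(g)\,f\,dx$ are both determined by the two numbers $\int(\cdot)\,dx=1$ and $\int\|x\|^2(\cdot)\,dx=n$, which $f$ and $g$ share. There are two regimes. If $\tfrac n{n+2}<p<1$ then $\beta_p<0$, so $g$ has full support, $1-\tfrac{\beta_p}2\|x\|^2>0$ everywhere, and $\int\psi_p'(g)(f-g)\,dx=0$ exactly; since here $\psi_p\ge0$ and $\psi_p(g),\psi_p'(g)g$ are integrable, integrating the tangent-line bound gives $\int\psi_p(f)\,dx\le\int\psi_p(g)\,dx$. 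If $p>1$ then $0<\beta_p<\tfrac2{n+2}$, so $g$ is supported on the ball $\{\tfrac{\beta_p}2\|x\|^2<1\}$ (where $\psi_p(g)$ and $\psi_p'(g)$ are bounded), and since $1-\tfrac{\beta_p}2\|x\|^2<0$ outside that ball, restricting $\int\bigl(1-\tfrac{\beta_p}2\|x\|^2\bigr)f\,dx$ to $\supp(g)$ only increases it (the discarded part being $\le0$); as $\tfrac p{1-p}<0$, this forces $\int\psi_p'(g)(f-g)\,dx\le0$, so again $\int\psi_p(f)\,dx\le\int\psi_p(g)\,dx$. The remaining case $p=1$ (where $\beta_1=0$, $g$ is the standard Gaussian, and $\psi_1'(g)=-\log g-1$ is again affine in $\|x\|^2$) is the classical identity $h(f)-h(g)=-D(f\|g)\le0$.

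The hypothesis $p>\tfrac n{n+2}$ enters only through these finiteness statements: for $p\ge1$ they are automatic ($g$ is Gaussian or compactly supported), while for $p<1$ the tail estimate $g_{\beta_p}(x)^p\asymp\|x\|^{2p/(p-1)}$ as $\|x\|\to\infty$ shows that $g_{\beta_p}\in L^p(\RL^n)$ and $\int\|x\|^2 g_{\beta_p}\,dx<\infty$ precisely when $\tfrac{2p}{1-p}>n$, i.e.\ when $p(n+2)>n$; and for $p<1$ the term $\psi_p'(g)f$ is integrable because $\int\|x\|^2 f\,dx=n<\infty$. These facts license splitting the integrated tangent-line inequality into its integrable pieces, and in particular make $\int\psi_p(f)\,dx$ automatically finite. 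The step I expect to be the main obstacle is the identification of $\psi_p'(g_{\beta_p})$ as an affine function of $\|x\|^2$ on $\supp(g_{\beta_p})$ — equivalently, the recognition that $g_{\beta_p}$ is exactly the Gibbs maximizer of $\int\psi_p$ under a quadratic moment constraint, which is what the relation $\tfrac1{\beta_p}=\tfrac1{p-1}+\tfrac{n+2}2$ encodes — together with the sign-and-support bookkeeping needed when $p>1$, and the routine verification of the convergence conditions tied to $p>\tfrac n{n+2}$. (Alternatively, one could first invoke Lemma~\ref{lem:preser-ent} and the corollary following Lemma~\ref{lem:finvar} to reduce to the case where $f$ is spherically symmetric and decreasing, as is $g_{\beta_p}$, though this reduction is not needed for the argument above.)
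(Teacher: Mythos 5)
The paper states Theorem~\ref{thm:lyz} as a quoted result from \cite{CHV03, LYZ07a} and supplies no proof, so there is no internal argument for you to be compared against; what you have written is a proof of the cited theorem from scratch. Your argument is correct, and it is the classical Gibbs/Lagrangian-duality derivation of the R\'enyi-maxent inequality. The scaling reduction to $\E\|X\|^2=n$, the passage from $h_p$ to the concave integral functional $f\mapsto\int\psi_p(f)\,dx$ (with the monotonicity checked in both regimes $p>1$ and $p<1$), the pointwise bound $\psi_p(f)\le\psi_p(g)+\psi_p'(g)(f-g)$, and the key algebraic identity $g_{\beta_p}^{\,p-1}\propto\bigl(1-\tfrac{\beta_p}{2}\|x\|^2\bigr)_+$ (which is precisely what $1/\beta_p=1/(p-1)+(n+2)/2$ encodes, and which makes $\psi_p'(g_{\beta_p})$ affine in $\|x\|^2$ on its support) are all stated and verified correctly. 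The support/sign bookkeeping is right in each case: for $\tfrac{n}{n+2}<p<1$ the weight is positive everywhere and the two linear functionals $\int 1\cdot(\cdot)$ and $\int\|x\|^2(\cdot)$ match, giving $\int\psi_p'(g)(f-g)=0$ exactly; for $p>1$, $\psi_p'(0)=0$ so the tangent term vanishes off $\supp(g)$, and since $1-\tfrac{\beta_p}{2}\|x\|^2<0$ there while the prefactor $p/(1-p)$ is negative, one gets $\int\psi_p'(g)(f-g)\le 0$; and $p=1$ is the usual Gaussian maximum-entropy identity (valid under the single constraint $\E\|X\|^2=n$ because $\log g_0$ depends on $x$ only through $\|x\|^2$). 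The integrability checks tied to $p>\tfrac{n}{n+2}$ (both $g_{\beta_p}\in L^p$ and $\E\|Z^{(p)}\|^2<\infty$ reduce to exactly this threshold) are also right. The only small thing worth flagging is that the theorem as stated does not assume $X$ has a density, but in that case $h_p(X)=-\infty$ for $p\ge 1$ (and the conclusion is trivial), so your implicit assumption that a density $f$ exists is harmless; likewise your remark that the statement is trivial when $\E\|X\|^2=\infty$ disposes of that degenerate case.
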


Clearly  Theorem~\ref{thm:lyz} implies that under a variance constraint,
the R\'enyi entropy power of order $p$ is maximized by the generalized Gaussian $Z^{(p)}$.

This leads us to the following conjecture, refining Theorem~\ref{thm:bc}.

\begin{conj}\label{conj:renyi-epi}
Let $X_1, \ldots, X_k$ be independent random vectors taking values in $\RL^n$,
and $p> \frac{n}{n+2}$. Suppose $Z_i$ are independent random vectors, each a scaled version of $Z^{(p)}$.
such that $h_p(X_i)=h_p(Z_i)$. Then
\ben
N_p(X_1+\ldots+X_k) \geq N_p(Z_1+\ldots+Z_k) .
\een
\end{conj}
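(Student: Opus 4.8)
The plan is to follow the template by which the $p=1$, $k=2$ case of the Main Theorem was placed inside the classical EPI above, now with the generalized Gaussians $g_{\beta_p}$ playing the role of ordinary Gaussians. \emph{Reduction to spherically symmetric decreasing densities.} By the Main Theorem, $h_p(X_1+\cdots+X_k)\ge h_p(X_1^*+\cdots+X_k^*)$, and since $N_p=\exp\{2h_p/n\}$ is strictly increasing in $h_p$ the same holds for $N_p$. By Lemma~\ref{lem:preser-ent}, $h_p(X_i^*)=h_p(X_i)=h_p(Z_i)$, so the entropy-matching hypothesis survives the replacement $X_i\mapsto X_i^*$. Hence it is enough to prove the conjecture assuming each $f_i$ is already spherically symmetric and radially non-increasing; using $h_p(\lambda X)=h_p(X)+n\log\lambda$ one may further normalize the common scale of the $Z_i$. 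This is the only step that is essentially free.

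\emph{Interpolating to generalized Gaussians.} The densities $g_{\beta_p}$ are precisely the self-similar Barenblatt profiles of the nonlinear diffusion $\partial_t u=\Delta(u^m)$ for a suitable exponent $m=m(p)$ (the porous-medium range $m>1$ for $p>1$, the fast-diffusion range for $\frac{n}{n+2}<p<1$, and the heat equation $m=1$ for $p=1$). Evolve each symmetric decreasing $f_i$ under this flow, run simultaneously and independently, and compose with the time-dependent dilation that holds $h_p(f_i(t))$ fixed; then $f_i(t)\to Z_i$ as $t\to\infty$, and $f_1(t)\star\cdots\star f_k(t)\to Z_1\star\cdots\star Z_k$. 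The goal is to show that
\ben
t\ \longmapsto\ h_p\big(f_1(t)\star\cdots\star f_k(t)\big)
\een
is non-increasing, for then letting $t\to\infty$ yields $h_p(X_1+\cdots+X_k)\ge h_p(Z_1+\cdots+Z_k)$, which is the conjecture. Differentiating in $t$ and invoking a de Bruijn-type identity for $h_p$ along $\partial_t u=\Delta(u^m)$, the derivative should collapse to the difference between a ``R\'enyi--Fisher information'' of the $k$-fold convolution and the sum of those of the marginals, so that monotonicity becomes a R\'enyi analogue of Stam's superadditivity inequality, i.e. the infinitesimal form of the conjectured R\'enyi EPI itself. For $k=2$ an alternative is to bypass the flow and adapt the Carlen--Loss competing-symmetries proof of the sharp Young inequality---whose extremizers are exactly the generalized Gaussians---alternating the spherically decreasing rearrangement (under which Theorem~\ref{thm:BLL} moves the relevant functional in the favorable direction) with a suitable scaling symmetry, and identifying $g_{\beta_p}$ as the unique joint fixed point of the composition.

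\emph{Main obstacle.} The crux, and the reason only a conjecture is recorded, is that---unlike the Gaussians---the family $\{g_{\beta_p}\}$ is \emph{not} closed under convolution, so $Z_1+\cdots+Z_k$ is not itself a generalized Gaussian. Consequently the $k=2$ case does not reduce general $k$ via the trivial induction available for the classical EPI, the central-limit/entropy-monotonicity route is unavailable, and naive heat flow cannot close the loop. One is thus forced either to prove a sharp R\'enyi--Stam inequality together with the requisite regularity and concavity of $N_p$ along $\partial_t u=\Delta(u^m)$---facts presently known only in restricted regimes---or to carry the competing-symmetries fixed-point analysis through for the R\'enyi entropy of a genuine $k$-fold convolution, which is considerably more delicate than the one-function estimates behind the sharp Young inequality. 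Any successful argument must moreover tolerate the degeneracy of Remark~\ref{rmk:bc-inf}: for $k=2$, $p=\infty$ every symmetric convex body of the right volume is extremal, so uniqueness of extremizers simply fails.
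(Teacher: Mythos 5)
You were given a statement that the paper itself labels a \emph{conjecture}; the paper offers no proof of it, only the remark that it holds in three special cases ($p=1$ for all $n,k$; $p=\infty$, $k=2$ via Bobkov--Chistyakov; $p=\infty$, $n=1$, all $k$ via Rogozin). So there is no proof in the paper to compare against, and your write-up correctly does not claim to be one: it is a reduction plus a research program plus a candid explanation of why the program cannot presently be closed.

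The one step you actually carry out is sound and is exactly the refinement the paper has in mind. By the Main Theorem, $h_p(X_1+\cdots+X_k)\geq h_p(X_1^*+\cdots+X_k^*)$, and by Lemma~\ref{lem:preser-ent} the entropy-matching constraint $h_p(X_i^*)=h_p(X_i)=h_p(Z_i)$ survives rearrangement; hence Conjecture~\ref{conj:renyi-epi} is equivalent to the same statement with the $X_i$ assumed spherically symmetric decreasing, and the Main Theorem inserts $N_p(X_1^*+\cdots+X_k^*)$ between the two sides, precisely mirroring how \eqref{eq:epicomp1}--\eqref{eq:epicomp2} sandwich the classical EPI. That is the ``free'' part, and you flag it as such.

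Everything after that is a genuine gap, and you say so yourself. The nonlinear-diffusion route (evolving along $\partial_t u=\Delta(u^m)$ toward the Barenblatt profiles $g_{\beta_p}$ and hoping for a de Bruijn/Stam-type monotonicity of $h_p$ of the $k$-fold convolution) and the competing-symmetries route are both plausible programs, but neither the required R\'enyi analogue of Stam's superadditivity nor the fixed-point analysis is established here, and the key structural obstruction you name --- that $\{g_{\beta_p}\}$ is not closed under convolution, so $Z_1+\cdots+Z_k$ is not a generalized Gaussian and the classical induction on $k$ collapses --- is real and is presumably why the authors recorded this as a conjecture rather than a theorem. Your observation about the degeneracy in Remark~\ref{rmk:bc-inf} (non-uniqueness of extremizers at $p=\infty$, $k=2$) is a legitimate additional caution for any would-be uniqueness-based argument. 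In short: correct reduction, correct diagnosis of the difficulty, but no proof --- which, for an open conjecture, is the honest place to stop.
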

This conjecture is true for at least three important special cases:
\begin{enumerate}
\item $p=1,\forall n,\forall k$: This is the classical EPI;
\item $p=\infty,k=2,\forall n$: This follows from a recent result of Bobkov and Chistyakov \cite{BC12} as explained in Remark~\ref{rmk:bc-inf};
\item $p=\infty,n=1,\forall k$: This is a relatively old but nontrivial result of Rogozin \cite{Rog87:1}. %Proposition 4.2 in \cite{BC13:2}.
\end{enumerate}
In principle, Conjecture~\ref{conj:renyi-epi} suggests optimal constants for Theorem~\ref{thm:bc}; they should
simply be those that arise in comparing
$N_p(Z_1+\ldots+Z_k)$ with $\sum_{i=1}^k N_p(Z_i)$ (which,
by construction is simply $\sum_{i=1}^k N_p(X_i)$).
The optimal constants would depend on $k$, $p$ and $n$.
This explains the precise reason why the optimal constant is 1
in the classical case $p=1$; it is because sums of independent Gaussians
are independent Gaussians.
For the case $k=2$, we make a more aggressive conjecture.
We would like to know
\ben
\inf_{a_1, a_2} \frac{N_p(a_1 Z^{(p)}_1+ a_2 Z^{(p)}_2)}{N_p(a_1 Z^{(p)}_1)+N_p(a_2 Z^{(p)}_2)},
\een
where $Z^{(p)}_1$ and $Z^{(p)}_2$ are independently drawn from $g_{\beta_p}$.
Although we do not have a rigorous argument, symmetry seems to suggest that $a_1=a_2$ would be optimal here,
and this is borne out by some numerical tests in one dimension.
One can compute (see, e.g., \cite{LLYZ13}) that, for $p\neq 1$,
\ben
N_p(Z^{(p)}) = A_{\beta}^{-\frac{2}{n}} \big(1-\half n\beta_p\big)^{\frac{2}{n(1-p)}} .
\een
(For $p=1$, $N_p(Z^{(p)})$ is just the Shannon entropy power of the standard Gaussian,
which is well known to be $2\pi e$.) It appears to be much harder to compute $N_p$ for the self-convolution
of $Z^{(p)}$.

\begin{conj}\label{conj:renyi-epi-2}
Let $X_1$ and $X_2$ be independent random vectors taking values in $\RL^n$,
and $p> \frac{n}{n+2}$. Then
\be\label{eq:conj}
N_p(X_1+X_2) \geq C_{p,n} [N_p(X_1)+ N_p(X_2)] ,
\ee
where $C_{p,n}=\frac{1}{2}\frac{N_p(Z^{(p)}_1+Z^{(p)}_2)}{N_p(Z^{(p)})}$
and $Z^{(p)}_1, Z^{(p)}_2$ are independently drawn from $g_{\beta_p}$.
\end{conj}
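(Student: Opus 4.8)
The plan is to reduce \eqref{eq:conj}, via rearrangement, to a sharp convolution inequality among spherically symmetric decreasing densities, and then to attack that inequality by the direct method of the calculus of variations. First I would invoke the Main Theorem together with Lemma~\ref{lem:preser-ent}: since $h_p(X_i)=h_p(X_i^*)$ we get $N_p(X_i)=N_p(X_i^*)$, and since $h_p(X_1+X_2)\ge h_p(X_1^*+X_2^*)$ we get $N_p(X_1+X_2)\ge N_p(X_1^*+X_2^*)$, so it suffices to prove \eqref{eq:conj} when $X_1,X_2$ already have spherically symmetric decreasing densities $f_1,f_2$. Using the homogeneity $N_p(aX)=a^2N_p(X)$ and the symmetry $X_1\leftrightarrow X_2$, I would normalize so that $N_p(X_1)=1$ and $N_p(X_2)=\rho$ with $\rho\in(0,1]$; this turns \eqref{eq:conj} into a one-parameter family of inequalities $N_p(X_1+X_2)\ge C_{p,n}(1+\rho)$, whose tightest instance is conjecturally $\rho=1$, matching the equal-scaling case of the generalized Gaussians. (Note that Theorem~\ref{thm:bc} already supplies a version of \eqref{eq:conj} with the non-sharp constant $c_p\le C_{p,n}$, so what is really at stake is the sharpness of the constant and the identification of the extremizers.)

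I would then split the remaining task into two pieces. Piece (A) is the $k=2$ case of Conjecture~\ref{conj:renyi-epi}: among spherically symmetric decreasing densities $f_1,f_2$ with prescribed R\'enyi entropies, $h_p(f_1\star f_2)$ is minimized by scaled standard generalized Gaussians of order $\beta_p$. Piece (B) is the purely finite-dimensional optimization $\inf_{a_1,a_2} N_p(a_1Z^{(p)}_1+a_2Z^{(p)}_2)/[N_p(a_1Z^{(p)}_1)+N_p(a_2Z^{(p)}_2)]$, which by common-scale invariance is really a one-variable problem in the ratio $a_1/a_2$; here one works with the explicit densities (the normalizing constants $A_{\beta_p}$ are gamma expressions, and the $L^p$ norm of the convolution of two scaled copies of $g_{\beta_p}$, though not elementary, is amenable to special-function manipulation), differentiates the ratio in $a_1/a_2$, and shows that the infimum is attained at $a_1=a_2$, where it equals $C_{p,n}=\tfrac12 N_p(Z^{(p)}_1+Z^{(p)}_2)/N_p(Z^{(p)})$. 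Applying (A) with $h_p(Z_i)=h_p(X_i)$ and then invoking (B) yields \eqref{eq:conj}.

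For piece (A) I would use the direct method. The restriction to spherically symmetric decreasing densities is precisely what supplies compactness: along a minimizing sequence for $h_p(f_1\star f_2)$, the fixed values of $h_p(f_i)$ (equivalently, of $\|f_i\|_p$ together with $\|f_i\|_1=1$) preclude both concentration at a point and escape of mass to infinity, so a subsequence converges to a minimizing pair. Writing $g=f_1\star f_2$, the Euler--Lagrange condition for variations of $f_1$ subject to $\int f_1=1$ and $h_p(f_1)$ held fixed reads $(g^{p-1}\star f_2)(x)=\lambda_1 f_1(x)^{p-1}+\mu_1$ on $\{x:f_1(x)>0\}$, with a symmetric relation in $f_2$. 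The heart of the matter is to show that this coupled system forces $f_1,f_2$ to be scaled generalized Gaussians of order $\beta_p$: checking that $g_{\beta_p}$ does satisfy these relations is a mechanical (if lengthy) computation using $1/\beta_p=1/(p-1)+(n+2)/2$ and the Bessel form of the convolution, but establishing uniqueness of the solution --- while also coping with the lack of strict convexity of the functional when $p<1$ and with the boundary behaviour of $g$ on the edge of its (possibly compact) support --- is the genuinely hard part.

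I expect piece (A) to be the main obstacle, and it is the reason \eqref{eq:conj} is stated as a conjecture: no technique is currently known that treats all orders $p>\tfrac{n}{n+2}$ uniformly. The endpoint $p=1$ is the classical entropy power inequality, whose proofs (heat-flow/Fisher information, Szarek--Voiculescu \cite{SV00}, or the argument via the Main Theorem in Section~\ref{sec:pf-epi}) all exploit Gaussian-specific structure unavailable when $\beta_p\ne0$; the endpoint $p=\infty,k=2$ is the theorem of Bobkov and Chistyakov \cite{BC12}, and $p=\infty,n=1$ is Rogozin's theorem \cite{Rog87:1}. A plausible route to the general case is to interpolate in $p$ between these endpoints, using the continuity of $h_p$ in $p$, the majorization machinery behind Theorem~\ref{thm:most-gen}, and Theorem~\ref{thm:lyz} to keep second moments controlled along the interpolation; but making such an argument rigorous --- in particular tracking $C_{p,n}$ as $p$ varies, and proving the still-only-numerically-tested claim from piece (B) that $a_1=a_2$ is the extremal scaling --- is open.
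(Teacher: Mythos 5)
The statement you are treating is explicitly a \emph{conjecture} in the paper, not a theorem; the paper does not prove it. The paper's own discussion consists only of verifying the two endpoint cases ($p=1$, where $C_{1,n}=1$ and the claim is the classical EPI; $p=\infty$, where $C_{\infty,n}=\tfrac12$ follows from Bobkov--Chistyakov \cite{BC12} as explained in Remark~\ref{rmk:bc-inf}) together with the observation that proving Conjecture~\ref{conj:renyi-epi} for $k=2$ would reduce the matter to a calculus problem, plus a one-dimensional numerical check suggesting $a_1=a_2$ is the extremal scaling. Your proposal mirrors this almost exactly: your Piece (A) is the $k=2$ case of Conjecture~\ref{conj:renyi-epi}, your Piece (B) is precisely the paper's calculus problem, and your opening reduction to spherically symmetric decreasing densities via the Main Theorem and Lemma~\ref{lem:preser-ent} is exactly the link the paper draws between its main result and this conjecture. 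Your conclusion that the central step is open is therefore correct and honest, and the decomposition is faithful to the paper.

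One caveat is worth flagging: you assert that verifying the Euler--Lagrange system $(g^{p-1}\star f_2)(x)=\lambda_1 f_1(x)^{p-1}+\mu_1$ (and its counterpart in $f_2$) for scaled generalized Gaussians is ``a mechanical (if lengthy) computation.'' The paper offers no support for this, and in fact points in the opposite direction: in the discussion of Johnson and Vignat \cite{JV07} it is noted that for $p\neq 1$ the generalized Gaussians $g_{\beta_p}$ are \emph{not} closed under ordinary convolution, which was Johnson and Vignat's motivation for introducing a nonstandard $p$-dependent convolution. Since $g=g_{\beta_p}\star g_{\beta_p}$ is then not itself a generalized Gaussian, there is no ready-made identity relating $g^{p-1}\star g_{\beta_p}$ to $g_{\beta_p}^{\,p-1}$, and the stationarity of $g_{\beta_p}$ is far from a routine verification. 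That does not invalidate the plan --- stationarity does not require closure under convolution --- but the claim should be demoted from ``mechanical'' to ``unverified.'' Establishing even this existence half of Piece (A), let alone uniqueness and the compactness of minimizing sequences for $p<1$ where the functional is not convex, is part of what remains open.
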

This conjecture is true in two known cases: when $p=1$, $C_{1,n}=1$, and it is the classical EPI;
when $p=\infty$, $C_{\infty,n}=\frac{1}{2}$, and how this follows from
the work of Bobkov and Chistyakov \cite{BC12} is explained in Remark~\ref{rmk:bc-inf}.

We remark that in terms of importance, Conjecture~\ref{conj:renyi-epi} is far more important.
If we can prove Conjecture~\ref{conj:renyi-epi} for $k=2$, then proving or disproving
Conjecture~\ref{conj:renyi-epi-2} reduces in principle to a calculus problem. Also, we mention that maybe
the right formulation to start with is Conjecture~\ref{conj:renyi-epi}, since $C_{p,n}$ might not be
analytically computable or may have a complicated expression. For example, when $n=1, p=2$, we computed $C_{2,1}$ using Mathematica:
\ben
C_{2,1}=\frac{166753125}{16 \left(\frac{573635 \sqrt{\frac{5}{2}}}{2}-142365 \sqrt{10}\right)^2}\doteq0.956668.
\een

We mention for completeness that Johnson and Vignat \cite{JV07} also demonstrated what they
call an ``entropy power inequality for R\'enyi entropy'', for any order $p\geq 1$. However, their
inequality does not pertain to the usual convolution, but a new and somewhat complicated
convolution operation (depending on $p$). This new operation reduces to the usual convolution for $p=1$,
and has the nice property that the convolution of affine transforms of independent copies
of $Z^{(p)}$ is an  affine transform of $Z^{(p)}$ (which, as observed above, fails for the usual convolution).

A variant of the classical entropy power inequality is the
concavity of entropy power of a sum when one of the summands is Gaussian,
due to Costa \cite{Cos85b}.
Savar\'e and Toscani  \cite{ST14} recently proposed a generalization of Costa's
result to R\'enyi entropy power, but the notion of concavity they use based on
solutions of a nonlinear heat equation does not have obvious probabilistic meaning,
and their work also does not seem directly connected to the approach discussed in this
section. The definition of R\'enyi entropy power used in \cite{ST14}
has a different constant in the exponent ($\frac{2}{n}+p-1$ as opposed to $\frac{2}{n}$),
and it is conceivable, as a reviewer suggested, that Conjecture~\ref{conj:renyi-epi-2} 
is true for $p\in (1,\infty)$ only with this modified definition of $N_p$.

%%%%%%%%
%%%%%%%%%%%%%%%%
%%%        SECTION 2
%%%%%%%%%%%%%%%

\section{Preliminaries for the First Proof}
\label{sec:rearr-pre}

We first state a lemma, which seems to be folklore in the continuous case \cite{DCT91} and allows us to to obtain the cases when $p=0,1,\infty$ as limiting cases.
\begin{lem}\label{lem:key}
\begin{enumerate}
%\noindent {\bf 1)}
\item[(i)] The following limit is well defined:
\ben
h_0(f)=\lim_{p\rightarrow0^{+}}h_p(f)=\log|supp(f)|,
\een
where $supp(f)$ is the support of $f$, defined as the set $\{x: f(x)\neq0\}$.

%\noindent {\bf 2)}
\item[(ii)]  The following two limits are well defined:
\ben
h^{+}_1(f)=\lim_{p\rightarrow1^{+}}h_p(f),
\een
\ben
h^{-}_1(f)=\lim_{p\rightarrow1^{-}}h_p(f).
\een
If $h^{+}_1(f)>-\infty$, then $h(f)$ is well defined (possibly $+\infty$) and we have:
\ben
h^{+}_1(f)=h(f).
\een
If $h^{-}_1(f)<+\infty$, then $h(f)$ is well defined (possibly $-\infty$) and we have:
\ben
h^{-}_1(f)=h(f).
\een

%\noindent {\bf 3)}
\item[(iii)]  The following limit is well defined:
\ben
h_{\infty}(f)=\lim_{p\rightarrow+\infty}h_p(f),
\een
and we have:
\ben
h_{\infty}(f)=-\log\|f\|_{\infty},
\een
where $\|f\|_{\infty}$ is the essential supremum of $f$.

If we suppose that $f$ is defined everywhere and lower semicontinuous, then
\ben
h_{\infty}(f)=-\log\sup_{x}f(x).
\een

%\noindent {\bf 4)}
\item[(iv)]  Suppose $h(f)$ is well defined (possibly $\infty$) and let $q\in(0,1),p\in(1,+\infty)$. Then
\ben
h_q(f)\geq h(f)\geq h_p(f).
\een
\end{enumerate}
\end{lem}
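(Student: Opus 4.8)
The plan is to reduce all four parts to convexity properties of the single function $\Lambda(s):=\log\int_{\RL^n}f(x)^{1+s}\,dx$, defined for $s\in(-1,\infty)$ with values in $(-\infty,+\infty]$ and satisfying $\Lambda(0)=\log\int f\,dx=0$, so that $h_p(f)=\Lambda(p-1)/(1-p)=-[\Lambda(p-1)-\Lambda(0)]/[(p-1)-0]$ for $p\neq1$; i.e.\ $h_p(f)$ is the negative of the slope of the chord of $\Lambda$ joining $0$ and $p-1$. Writing $\mu$ for the law with density $f$ and $W:=\log f(X)$ for $X\sim\mu$, one has $\Lambda(s)=\log\mathbb{E}_\mu[e^{sW}]$ (the logarithmic moment generating function of $W$) and $\int f\log f\,dx=\mathbb{E}_\mu[W]$. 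The first step is to note that $\Lambda$ is convex on $(-1,\infty)$: this is H\"older's inequality applied to $f^{1+s}=f^{\lambda(1+s_1)}f^{(1-\lambda)(1+s_2)}$ with conjugate exponents $1/\lambda,1/(1-\lambda)$, valid even when some values are $+\infty$. Convexity with $\Lambda(0)$ finite gives one-sided derivatives $\Lambda'_\pm(0)\in[-\infty,+\infty]$ and a non-decreasing chord-slope $s\mapsto\Lambda(s)/s$ on $(-1,0)\cup(0,\infty)$; hence $p\mapsto h_p(f)$ is non-increasing on $(0,1)$ and on $(1,\infty)$, with $h_1^-(f)=-\Lambda'_-(0)\ge-\Lambda'_+(0)=h_1^+(f)$. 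Since monotone functions have one-sided limits everywhere in $[-\infty,+\infty]$, this already proves that all the limits in (i)--(iii) exist. Part (iv) then follows from Jensen's inequality $\Lambda(s)\ge s\,\mathbb{E}_\mu[W]=-s\,h(f)$ (valid whenever $h(f)$ is well defined): for $p\in(1,\infty)$ either $\int f^p=+\infty$ and $h_p(f)=-\infty\le h(f)$, or dividing by $s=p-1>0$ gives $h_p(f)\le h(f)$; for $q\in(0,1)$ either $\int f^q=+\infty$ and $h_q(f)=+\infty\ge h(f)$, or dividing by $s=q-1<0$ reverses the inequality to give $h_q(f)\ge h(f)$.

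For (i) I would evaluate $\lim_{p\to0^+}\int f^p\,dx$ by splitting over $\{f\ge1\}$ and $\{0<f<1\}$: on the former $f^p\downarrow1$ and is dominated by $f$ once $p\le1$, so dominated convergence gives $|\{f\ge1\}|$; on the latter $f^p\uparrow1$, so monotone convergence gives $|\{0<f<1\}|$ (possibly $+\infty$); adding, $\int f^p\,dx\to|\mathrm{supp}(f)|$, and since $1/(1-p)\to1$ we conclude $h_0(f)=\log|\mathrm{supp}(f)|$. For (iii) I would use the classical fact that, for $f\in L^1$, $(\int f^p\,dx)^{1/p}\to\esssup f$ as $p\to\infty$ (the limit being $+\infty$ if $f$ is essentially unbounded); writing $h_p(f)=-\tfrac{p}{p-1}\cdot\tfrac1p\log\int f^p\,dx$ then yields $h_\infty(f)=-\log\|f\|_\infty$. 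When $f$ is lower semicontinuous and everywhere defined, each superlevel set $\{f>t\}$ is open, hence of positive Lebesgue measure as soon as it is nonempty, so $\esssup f=\sup_x f(x)$, giving the stated refinement.

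The main work is (ii): I would show $h_1^+(f)=h(f)$ under the hypothesis $h_1^+(f)>-\infty$, the statement for $h_1^-$ being entirely symmetric. By the monotonicity of the chord-slope, $h_1^+(f)>-\infty$ is equivalent to $\Lambda(s_0)<\infty$ for some $s_0>0$, i.e.\ $\int f^{1+s_0}\,dx<\infty$; the elementary bound $w^+\le s_0^{-1}e^{s_0w}$ then forces $\mathbb{E}_\mu[W^+]<\infty$, so that $\mathbb{E}_\mu[W]\in[-\infty,\infty)$ is well defined and $h(f)=-\mathbb{E}_\mu[W]$ is well defined (possibly $+\infty$). Jensen gives $\liminf_{s\downarrow0}\Lambda(s)/s\ge\mathbb{E}_\mu[W]$; for the matching upper bound I would use $\log x\le x-1$ to get $\Lambda(s)/s\le\mathbb{E}_\mu[(e^{sW}-1)/s]$ and let $s\downarrow0$, noting that on $\{W\ge0\}$ the integrand decreases to $W$ dominated by $s_0^{-1}e^{s_0W}$ once $s\le s_0$, while on $\{W<0\}$ it increases to $W$ from below; dominated and monotone convergence respectively then give $\mathbb{E}_\mu[(e^{sW}-1)/s]\to\mathbb{E}_\mu[W]$, hence $\lim_{s\downarrow0}\Lambda(s)/s=\mathbb{E}_\mu[W]$ and $h_1^+(f)=-\lim_{s\downarrow0}\Lambda(s)/s=h(f)$.

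I expect this last step to be the main obstacle. It is the well-known assertion that R\'enyi entropy converges to Shannon entropy, but the purpose of the lemma is to do it with care: identifying that the one-sided hypotheses $h_1^\pm(f)\neq\mp\infty$ are precisely what makes $h(f)$ well defined on the appropriate side, and tracking every case in which $h(f)$, $\mathbb{E}_\mu[W^\pm]$, or one of the integrals $\int f^p\,dx$ equals $\pm\infty$. Once the convexity/monotonicity skeleton is in place, everything else is a routine interchange of limit and integral.
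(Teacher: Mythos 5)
Your proposal is correct (modulo one sign slip, noted below) and, while it shares the same backbone as the paper's argument — convexity of $g(p)=\log\int f^p$ via H\"older/Lyapunov, hence monotonicity of $h_p$ on each side of $1$, which secures existence of all four limits, plus Jensen for (iv) — it proves the hard part, (ii), by a genuinely different and arguably cleaner route. The paper shows that $g$ is continuous at $1$, establishes differentiability of $g$ on $(1,1+\epsilon)$ with $g'(p)=\int f^p\log f\,dx$ (citing \cite{BMW11}), computes $\lim_{p\downarrow 1}g'(p)=-h(f)$ by a DCT/MCT split, and finishes with L'H\^opital. You instead sandwich $\Lambda(s)/s$ between $\mathbb{E}[W]$ (Jensen from below) and $\mathbb{E}[(e^{sW}-1)/s]$ (from $\log x\le x-1$ above), and push $s\downarrow 0$ inside the expectation; no differentiation of $g$ and no L'H\^opital are needed, and the hypothesis $h_1^+(f)>-\infty$ enters exactly where it must — as $\int f^{1+s_0}<\infty$ for some $s_0>0$, forcing $\mathbb{E}[W^+]<\infty$ so that $h(f)$ is well defined. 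For (iii) your observation that lower semicontinuity makes every nonempty superlevel set open, hence of positive measure, so $\esssup f=\sup_x f$, is slicker than the paper's two-sided estimate; and (i), (iv) match the paper essentially line for line.

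One small error to fix in the (ii) argument: on $\{W<0\}$, the function $s\mapsto(e^{sW}-1)/s$ does \emph{not} increase to $W$ from below as $s\downarrow 0$; it decreases to $W$ from above, exactly as it does on $\{W\ge 0\}$ (for every real $w$, $s\mapsto(e^{sw}-1)/s$ is nondecreasing on $(0,\infty)$). The fix is harmless: on $\{W<0\}$ the integrand is trapped in $[-1/s,0]$, so the first element of the decreasing family is integrable, and the decreasing form of monotone convergence gives $\int_{\{W<0\}}(e^{sW}-1)/s\,d\mu\to\mathbb{E}[W\mathbb{I}_{\{W<0\}}]$, possibly $-\infty$. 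Combined with dominated convergence on $\{W\ge0\}$ (with dominating function $s_0^{-1}e^{s_0W}$), the squeeze still closes and $h_1^+(f)=h(f)$ follows as you intended; the symmetric case $h_1^-$ works the same way with the roles of $W^+$ and $W^-$ exchanged.
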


\begin{proof}
We first show that all four limits are well defined. By Lyapunov's inequality, the function $g$, defined by
\begin{equation*}
g(p)=\log\int f^p(x)dx
\end{equation*}
is a convex function from $(0,+\infty)$ to $(-\infty,+\infty]$. Since $f$ is assumed to be probability density, we have $g(1)=0$.

Consider $p,q\in(0,1)$ with $p<q$. Convexity will give us:
\begin{equation*}
\frac{g(1)-g(p)}{1-p}\leq\frac{g(1)-g(q)}{1-q},
\end{equation*}
which is equivalent to $h_p(f)\geq h_q(f)$. This monotonicity guarantees that the $h_0(f)$ and $h^{-}_1(f)$ exist, possibly as extended real numbers.

Similarly, if we consider $p,q\in(1,\infty)$ with $p<q$, convexity will give us:
\begin{equation*}
\frac{g(p)-g(1)}{p-1}\leq\frac{g(q)-g(1)}{q-1},
\end{equation*}
which is equivalent to $h_p(f)\geq h_q(f)$. This will guarantee the existence of the other two limits, possibly as extended real numbers.

Now we prove the lemma.

\begin{enumerate}
\item[(i)]
We first assume $|supp(f)|<\infty$. It's clear that we only need to show:
\ben
\lim_{p\rightarrow0^{+}}\int f^p(x)dx=|supp(f)|.
\een
Note that
\ben
\int f^p(x)dx=\int\{f(x)\neq0\}f^p(x)dx
\een
and
\ben
\{f(x)\neq0\}f^p(x)\leq\max(f(x),1)\{f(x)\neq0\}.
\een
But $\max(f(x),1)\{f(x)\neq0\}$ is integrable by our assumption that $|supp(f)|<\infty$. Hence by dominated convergence, we obtain:
\ben
\lim_{p\rightarrow0^{+}}\int f^p(x)dx=\lim_{p\rightarrow0^{+}}\int\{f(x)\neq0\}f^p(x)dx
=\int\{f(x)\neq0\}dx=|supp(f)|.
\een
On the other hand, if $|supp(f)|=+\infty$, then
\ben
\lim_n \bigg|\bigg\{x:f(x)\geq\frac{1}{n}\bigg\}\bigg|=+\infty.
\een
Note that
\ben
\lim_{p\rightarrow0^{+}}\int f^p(x) dx
\geq \lim_{p\rightarrow0^{+}} \frac{1}{n^p} \bigg|\bigg\{x:f(x)\geq\frac{1}{n}\bigg\}\bigg|
=\bigg|\bigg\{x:f(x)\geq\frac{1}{n}\bigg\}\bigg| .
\een
Hence we have
\ben
\lim_{p\rightarrow0^{+}}\int f^p(x)dx=+\infty=|supp(f)|.
\een

%\mathbb{I}

\item[(ii)]
Clearly, assuming $h^{+}_1(f)>-\infty$ implies the existence of $\epsilon>0$, such that:
\ben
\int f^{1+\epsilon}(x)dx<+\infty.
\een
We first show that the function $g(p)$ is continuous at $p=1$. Take $1<p<1+\epsilon$. Then
\ben
f^p(x)\leq\{f(x)\leq1\}f(x)+\{f(x)>1\}f^{1+\epsilon}(x).
\een
But $\{f(x)\leq1\}f(x)+\{f(x)>1\}f^{1+\epsilon}(x)$ is integrable, hence we can apply dominated convergence to conclude that $g(p)$ is continuous at $p=1$.

Next we show that for $1<p<1+\epsilon$, $g(p)$ is differentiable, with a finite derivative given by:
\ben
g'(p)=\int f^p\log(f)dx.
\een
This basically  follows from dominated convergence and the details of the argument can be found in Lemma 6.1 in \cite{BMW11}. We do not repeat it here.

Now, we show that the limit
\ben
-\lim_{p\rightarrow1^{+}}g'(p)
\een
exists (possibly $+\infty$) and equals $h(f)$. Note that
\ben
\int f^p\log(f)dx=\int_{\{x:f(x)>1\}}f^p(x)\log^{+}(f(x))dx
-\int_{\{x:0<f(x)<1\}}f^p(x)\log^{-}(f(x))dx.
\een
For the first term, we can apply dominated convergence using $\int f^{1+\epsilon}dx<+\infty$. While for the second term, we can apply monotone convergence.

Finally, we can apply a version of L'Hospital's rule to conclude that
\ben
\lim_{p\rightarrow1^+}\frac{g(p)}{1-p}=-\lim_{p\rightarrow1^+}g'(p)=h(f).
\een

A similar analysis applies to the assumption $h^{-}_1(f)<+\infty$.

\item[(iii)]
The statement $h_\infty(f)=-\log\|f\|_\infty$ follows from a classical analysis fact \cite{Rud87:book}.

For the other statement, one direction is easy:
\begin{equation*}
h_p(f)=\frac{1}{1-p}\log\int f^p(x)dx
\end{equation*}
\begin{equation*}
\geq\frac{1}{1-p}\log\bigg(\sup_xf^{p-1}(x)\bigg)=-\log\sup_xf(x).
\end{equation*}
Note that the above is automatically true if $\sup_xf(x)=+\infty$. Now fix $y\in supp(f)$ and $0<\delta<f(y)$. By lower semicontinuity, we can find an open ball $B(y,r)$, centered at $y$ with radius $r>0$, such that:
\begin{equation*}
\inf_{x\in B(y,r)}f(x)\geq f(y)-\delta.
\end{equation*}
From this, we easily obtain:
\begin{equation*}
h_p(f)\leq\frac{\log\bigg((f(y)-\delta)^p|B(y,r)|\bigg)}{1-p}.
\end{equation*}
By first sending $p$ to infinity and then sending $\delta$ to zero, we arrive at:
\begin{equation*}
h_\infty(f)\leq-\log f(y).
\end{equation*}
Taking infimum of the right hand side over $y\in supp(f)$, we conclude:
\begin{equation*}
h_\infty(f)\leq-\log\sup_{y\in supp(f)}f(y)=-\log\sup_{y}f(y).
\end{equation*}

\item[(iv)]
Since $h(f)$ is assumed to be well defined, we can apply Jensen's inequality to obtain:
\ben
\log\int f^p(x)dx=\log\int f^{p-1}(x)f(x)dx
\geq(p-1)\int\log(f(x))f(x)dx.
\een
This will give $h(f)\leq h_p(f)$. Similarly, we can obtain the other inequality.
\end{enumerate}
\end{proof}
\vspace{.1in}

\begin{rmk}
As was observed in \cite{BM11:it}, %(see there the proof of Proposition I.2),
there exist densities such that $h_p(X)=-\infty$
for every $p>1$ but $h(X)$ is well defined and finite. An example of such a density is
\ben
f(x)=\frac{c}{x\log^3(1/x)} , \,\, 0<x<\half ,
\een
where $c$ is a normalizing constant. This shows that the continuity of R\'enyi entropy
in the order at $p=1$ is not automatic, and means that one has to be cautious about
the generality in which proofs based on this continuity apply (e.g., the proof
of the entropy power inequality from Young's inequality with sharp constant in \cite{DCT91}
needs an additional condition as in Lemma~\ref{lem:key}(ii)).
\end{rmk}

For later reference, we also record the following corollary, which is clear from the proof of Lemma~\ref{lem:key}(ii).

\begin{cor}\label{cor:easy}
The fact that $h^{+}_1(f)>-\infty$ is equivalent to the existence of $\epsilon>0$, such that:
\begin{equation*}
\int f^{1+\epsilon}(x)dx<+\infty.
\end{equation*}
Similarly, the fact that $h^{-}_1(f)<+\infty$ is equivalent to the existence of $1>\epsilon>0$, such that:
\begin{equation*}
\int f^{1-\epsilon}(x)dx<+\infty.
\end{equation*}
\end{cor}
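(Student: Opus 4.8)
The plan is to extract the equivalence directly from the mechanics of the proof of Lemma~\ref{lem:key}(ii), turning the implications used there into a clean iff. First I would record the ``easy'' direction. Suppose there exists $\epsilon>0$ with $\int f^{1+\epsilon}(x)\,dx<+\infty$, i.e.\ $g(1+\epsilon)<+\infty$ in the notation $g(p)=\log\int f^p$. Since $g$ is convex on $(0,+\infty)$ with $g(1)=0$, for $p\in(1,1+\epsilon)$ the three-point convexity inequality gives
\begin{equation*}
\frac{g(p)-g(1)}{p-1}\leq\frac{g(1+\epsilon)-g(1)}{\epsilon},
\end{equation*}
so $g(p)\leq (p-1)\,g(1+\epsilon)/\epsilon$, whence $h_p(f)=g(p)/(1-p)\geq -g(1+\epsilon)/\epsilon>-\infty$ uniformly for $p\in(1,1+\epsilon)$. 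Letting $p\downarrow 1$ and using the existence of the limit $h_1^+(f)$ from Lemma~\ref{lem:key}, we conclude $h_1^+(f)\geq -g(1+\epsilon)/\epsilon>-\infty$.

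For the converse, I would argue by contraposition: if $\int f^{1+\epsilon}(x)\,dx=+\infty$ for \emph{every} $\epsilon>0$, then $g(p)=+\infty$ for all $p>1$, so $h_p(f)=g(p)/(1-p)=-\infty$ for all $p>1$, and therefore $h_1^+(f)=\lim_{p\downarrow 1}h_p(f)=-\infty$. This is exactly the negation of $h_1^+(f)>-\infty$, establishing the equivalence. The $h_1^-$ statement is handled by the mirror-image argument: if $\int f^{1-\epsilon}<+\infty$ for some $\epsilon\in(0,1)$, i.e.\ $g(1-\epsilon)<+\infty$, convexity of $g$ on $(0,+\infty)$ through the points $1-\epsilon$ and $1$ bounds $g(p)$ above for $p\in(1-\epsilon,1)$, giving $h_p(f)=g(p)/(1-p)\leq g(1-\epsilon)/\epsilon<+\infty$ there, hence $h_1^-(f)<+\infty$ by passing to the limit; conversely, if $\int f^{1-\epsilon}=+\infty$ for every $\epsilon\in(0,1)$ then $g(p)=+\infty$ for all $p\in(0,1)$, so $h_p(f)=+\infty$ on $(0,1)$ and $h_1^-(f)=+\infty$.

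I do not anticipate a genuine obstacle here, since this is essentially a bookkeeping distillation of Lemma~\ref{lem:key}(ii); the one point requiring a little care is to make sure the argument is stated as a true biconditional rather than a one-directional implication, which is why the contrapositive form is the cleanest route. The only subtlety worth a sentence in the write-up is that $g$ may take the value $+\infty$, so one should phrase the convexity-based upper bounds on $g(p)$ over a sub-interval as holding vacuously when the relevant endpoint value is infinite, and note that finiteness of $g$ at one point $>1$ (resp.\ $<1$) automatically forces finiteness on the whole interval between that point and $1$ — which is precisely what makes the ``there exists $\epsilon$'' formulation equivalent to the limit condition.
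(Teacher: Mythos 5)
Your proof is correct and is essentially the argument the paper has in mind: the corollary is stated with the remark that it is ``clear from the proof of Lemma~\ref{lem:key}(ii),'' and the mechanics you use --- the convexity of $g(p)=\log\int f^p$ with $g(1)=0$, the resulting slope inequality giving a uniform lower bound on $h_p(f)$ near $p=1$ when $g(1+\epsilon)<\infty$, and the contrapositive observation that if $g(p)=+\infty$ for every $p>1$ then $h_p(f)\equiv-\infty$ there --- are exactly the ingredients already set up at the start of that lemma's proof. The one stylistic difference is that you package the forward direction explicitly as a contrapositive, whereas the paper simply asserts it in passing inside the lemma proof; your version is cleaner and makes the biconditional transparent.
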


% -------------------------- SECTION 3 ------------------------------

\section{First Proof of Main Theorem}
\label{sec:pf}

\begin{thm}\label{thm:main1}
If $f_i$ are densities on $\RL^n$ and $f_i^*$ are their spherically symmetric rearrangements, then
\ben
h_p(f_1\star f_2\star\cdot\cdot\star f_k) \geq h_p(f_1^* \star f_2^*\star\cdot\cdot f_k^*) ,
\een
for any $p\in[0,1)\cup(1,\infty]$. For $p=1$, if $h(f_1\star f_2\star\cdot\cdot f_k)$ is well defined, then
\ben
h(f_1\star f_2\star\cdot\cdot\star f_k)\geq h^{+}_1(f_1^* \star f_2^*\star\cdot\cdot f_k^*);
\een
if $h(f_1^* \star f_2^*\star\cdot\cdot f_k^*)$ is  well defined, then
\ben
h^{-}_1(f_1\star f_2\star\cdot\cdot\star f_k) \geq h(f_1^* \star f_2^*\star\cdot\cdot f_k^*).
\een
In particular, if one of the densities, say $f_1$, satisfies
\ben
\int f^{1+\epsilon}_1(x)dx<+\infty,
\een
for some $\epsilon>0$, then both $h(f_1^* \star f_2^*\star\cdot\cdot f_k^*)$ and
$h(f_1\star f_2\star\cdot\cdot f_k)$ are well defined and we have:
\ben
h(f_1\star f_2\star\cdot\cdot\star f_k)\geq h(f_1^* \star f_2^*\star\cdot\cdot f_k^*).
\een
\end{thm}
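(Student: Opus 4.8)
The plan is to derive everything from the Rogers--Brascamp--Lieb--Luttinger inequality (Theorem~\ref{thm:BLL}, in the probabilistic form of Remark~\ref{rmk:bll}) for the orders $p\in(0,1)\cup(1,\infty)$, and then to recover the three endpoint orders $p\in\{0,1,\infty\}$ by invoking the continuity of $h_p$ in the order (Lemma~\ref{lem:key}). Throughout I write $g=f_1\star\cdots\star f_k$ and $\bar g=f_1^{*}\star\cdots\star f_k^{*}$; both are probability densities. The first thing I would establish is a majorization statement: for any measurable set $A$ of finite volume, applying Remark~\ref{rmk:bll} with the test function $\mathbb{I}_A$ (whose spherically decreasing rearrangement is the indicator of the centered ball $A^{*}$ of equal volume) gives $\int_A g\le\int_{A^{*}}\bar g$; taking the supremum over all $A$ of a fixed volume $s$, and using that $\bar g$ is radially decreasing so that the right-hand supremum is attained at the centered ball, yields $\int_0^s g^{\downarrow}\le\int_0^s \bar g^{\downarrow}$ for every $s>0$ (with equality at $s=\infty$, since both integrate to $1$), where $\cdot^{\downarrow}$ denotes the decreasing rearrangement on $[0,\infty)$. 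In other words, $g$ is majorized by $\bar g$.

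Next I would pass to truncated tails. Since $\int (h-u)_+\,dx=\sup_{s>0}\bigl(\int_0^s h^{\downarrow}-us\bigr)$ for any probability density $h$ and $u\ge0$, the majorization just proved gives $\int (g-u)_+\,dx\le\int (\bar g-u)_+\,dx$ for every $u\ge0$, equivalently $\int\min(g,u)\,dx\ge\int\min(\bar g,u)\,dx$ (using $\int g=\int\bar g=1$). I would then feed these into the elementary superposition identities $t^p=p(p-1)\int_0^\infty u^{p-2}(t-u)_+\,du$ for $p>1$, and $t^p=p(1-p)\int_0^\infty u^{p-2}\min(t,u)\,du$ for $0<p<1$, both valid for $t\ge0$. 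Integrating in $x$ and interchanging integrals by Tonelli (all integrands are non-negative), $\int g^p$ and $\int\bar g^p$ become non-negative superpositions of precisely the quantities just compared; hence $\int g^p\le\int\bar g^p$ when $p>1$ and $\int g^p\ge\int\bar g^p$ when $0<p<1$. Multiplying by $\tfrac1{1-p}$ (negative in the first case, positive in the second) gives $h_p(g)\ge h_p(\bar g)$ throughout $(0,1)\cup(1,\infty)$. If $\int g^p=+\infty$, which can happen only for $0<p<1$, then $h_p(g)=+\infty$ and there is nothing to prove, so the identities may be used without any worry about convergence.

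For the endpoint orders, Lemma~\ref{lem:key}(i) and (iii) identify $h_0$ and $h_\infty$ as the limits of $h_p$ as $p\downarrow0$ and $p\to\infty$, so letting $p$ tend to these limits in the inequality just proved gives $h_0(g)\ge h_0(\bar g)$ and $h_\infty(g)\ge h_\infty(\bar g)$. For $p=1$: if $h(g)$ is well defined then $h(g)\ge h_p(g)\ge h_p(\bar g)$ for all $p>1$ by Lemma~\ref{lem:key}(iv), and letting $p\downarrow1$ gives $h(g)\ge h_1^{+}(\bar g)$; symmetrically, if $h(\bar g)$ is well defined then $h_q(g)\ge h_q(\bar g)\ge h(\bar g)$ for all $q\in(0,1)$, and letting $q\uparrow1$ gives $h_1^{-}(g)\ge h(\bar g)$. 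Finally, when $\int f_1^{1+\epsilon}<\infty$, rearrangement preserves the $L^{1+\epsilon}$-norm (Lemma~\ref{lem:preser-ent}) and Young's convolution inequality gives $\|g\|_{1+\epsilon},\|\bar g\|_{1+\epsilon}\le\|f_1\|_{1+\epsilon}<\infty$; by Corollary~\ref{cor:easy} this forces $h(g)$ and $h(\bar g)$ to be well defined with $h(\bar g)=h_1^{+}(\bar g)$ (Lemma~\ref{lem:key}(ii)), and combining with the previous line yields $h(g)\ge h(\bar g)$.

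The step I expect to be the main obstacle is the range $0<p<1$. For $p>1$ there are alternatives (for instance the $L^p$--$L^q$ duality $\|g\|_p=\sup_{\|u\|_q\le1,\,u\ge0}\int ug$, into which RBLL inserts in exactly the useful direction), but for $p<1$ every available form of RBLL produces only \emph{upper} bounds on $\int ug$ while $\|g\|_p$ is an \emph{infimum}, so the naive duality collapses. Routing through the majorization of the first step and the truncated-tail comparison, and superposing $t^p$ over the family of \emph{concave} functions $t\mapsto\min(t,u)$, is what makes $0<p<1$ (and, uniformly, $p>1$) go through; the only remaining care is the bookkeeping around $\int g^p=+\infty$, the Tonelli interchanges, and checking that $\bar g$ inherits enough integrability from $f_1$ in the $p=1$ statement.
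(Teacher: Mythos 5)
Your proof is correct, but it takes a genuinely different route from the paper's proof of Theorem~\ref{thm:main1}. The paper handles $p>1$ via $L^p$--$L^{p'}$ duality and $p\in(0,1)$ via the reverse H\"older inequality together with the spherically increasing rearrangement $^*h$ and a truncation argument; you instead first establish the majorization $f_1\star\cdots\star f_k\prec f_1^{*}\star\cdots\star f_k^{*}$ (by applying the Rogers--Brascamp--Lieb--Luttinger inequality, Theorem~\ref{thm:BLL}, with an indicator test function and using that a centered ball attains the right-hand supremum), and then feed that majorization into explicit superposition identities for $t^p$. This is, in spirit, the paper's second proof (Section~\ref{sec:major}, Theorem~\ref{thm:most-gen} via Proposition~\ref{prop:gen-id}) specialized to $\phi(t)=t^p$: the paper treats arbitrary convex $\phi$ by decomposing through the second-derivative measure of $\phi$, while you work directly with the concrete formulas $t^p=p(p-1)\int_0^\infty u^{p-2}(t-u)_+\,du$ for $p>1$ and $t^p=p(1-p)\int_0^\infty u^{p-2}\min(t,u)\,du$ for $0<p<1$. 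Your route is cleaner for this specific family, avoids the reverse-H\"older machinery entirely, and treats $p>1$ and $p<1$ by the same superposition device; the trade-off is that you lose the generality (arbitrary convex $\phi$) that Theorem~\ref{thm:most-gen} delivers. The endpoint arguments (letting $p\to 0,1,\infty$ via Lemma~\ref{lem:key}) and the ``in particular'' reduction via Young's inequality coincide with the paper's Cases 2, 4 and 5. One small inaccuracy worth fixing: $\int g^p$ can also be $+\infty$ for $p>1$ (densities with a strong local singularity), so the parenthetical ``which can happen only for $0<p<1$'' is false. It is harmless, though: Tonelli needs only nonnegativity of the integrand, and if $\int g^p=+\infty$ for $p>1$ then the inequality $\int g^p\le\int\bar g^p$ forces $\int\bar g^p=+\infty$ as well, so $h_p(g)=h_p(\bar g)=-\infty$ and the claimed inequality still holds.
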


\begin{proof}
\noindent {\bf Case 1: $p\in(1,+\infty)$}.\\
By definition of R\'enyi entropy, it suffices to show
\be\label{eq:lem-f}
\|f_1\star f_2\star\cdot\cdot f_k\|_p\leq\|f_1^* \star f_2^*\star\cdot\cdot f_k^*\|_p.
\ee
Note that by duality,
\begin{equation*}
\|f_1\star f_2\star\cdot\cdot f_k\|_p=\sup_{\|g\|_q=1}\int_{\mathbb{R}^n}g(x)f_1\star f_2\star\cdot\cdot f_k(x)dx,
\end{equation*}
where $\frac{1}{p}+\frac{1}{q}=1$. Hence we can apply Theorem~\ref{thm:BLL} to obtain:
\ben
\int_{\mathbb{R}^n}g(x)f_1\star f_2\star\cdot\cdot f_kdx
\leq\int_{\mathbb{R}^n}|g|^*(x)f_1^* \star f_2^*\star\cdot\cdot f_k^*dx.
\een
Due to Lemma~\ref{lem:preser-ent}, $\||g|^*\|_q=1$ and again by duality,
\ben
\int_{\mathbb{R}^n}|g|^*(x)f_1^* \star f_2^*\star\cdot\cdot f_k^*dx
\leq\|f_1^* \star f_2^*\star\cdot\cdot f_k^*\|_p.
\een
Hence inequality \eqref{eq:lem-f} is shown.

\noindent {\bf Case 2: $p=\infty$}.\\
This follows from Case 1 and Lemma~\ref{lem:key}.

\noindent {\bf Case 3: $p\in(0,1)$}.\\
As mentioned before, the $k=2$ case was proved by \cite[Proposition 9]{BL76b}. It is straightforward to extend
the argument there to general $k$. We give the proof here for completeness. First of all, by the reverse of Holder's inequality
that applies for $p<1$,
\begin{equation*}
\int f(x)h(x)dx\geq\|f\|_p\|h\|_{p'},
\end{equation*}
where $\frac{1}{p}+\frac{1}{p'}=1$ (here $p'<0$) and $f,h$ are non-negative. When $h(x)=\alpha f^{p-1}(x)$, with $\alpha>0$, there is equality. Hence we have:
\ben
\|f\|_p=\inf_{\|h\|_{p'}=1, h\geq0}\int f(x)h(x)dx.
\een
Applying this to $f_1\star f_2\star\cdot\cdot f_k(x)$, we get:
\ben
\|f_1\star f_2\star\cdot\cdot f_k(x)\|_p=\inf_{\|h\|_{p'}=1,h\geq0}\int f_1\star f_2\star\cdot\cdot f_k(x)h(x)dx.
\een
Now define the spherically increasing symmetric rearrangement $^*h$ of $h$ by:
\ben
^*h=\frac{1}{(\frac{1}{h})^*}.
\een
Then $\|^*h\|_{p'}=\|h\|_{p'}$. For $A>0$, define:
\ben
h_A(x)=\min(A,h(x));k^A(x)=A-h_A(x).
\een
Then, as $A\rightarrow+\infty$,
\ben
h_A(x)\uparrow h(x),
\een
\be\label{eq:explain}
A-(k_A(x))^*\uparrow {^*h}.
\ee
By monotone convergence, we obtain
\ben\begin{split}
\int f_1\star f_2\star\cdot\cdot f_k(x)h(x)dx
&=\lim_{A\uparrow\infty}\int f_1\star f_2\star\cdot\cdot f_k(x)h_A(x)dx \\
&=\lim_{A\uparrow\infty}A-\int f_1\star f_2\star\cdot\cdot f_k(x)k_A(x)dx.
\end{split}\een
Now, similar to the proof of Case 1, we can apply Theorem~\ref{thm:BLL} to obtain
\be\label{eq:lem-f-inter1}\begin{split}
\int f_1\star f_2\star\cdot\cdot f_k(x)h(x)dx.
&\geq\lim_{A\uparrow\infty}A-\int f_1^*\star f_2^*\star\cdot\cdot f_k^*(x)(k_A)^*(x)dx \\
&=\lim_{A\uparrow\infty}\int f_1^*\star f_2^*\star\cdot\cdot f_k^*(x)(A-(k_A)^*)(x)dx \\
&=\int f_1^*\star f_2^*\star\cdot\cdot f_k^*(x) \, [{^*h(x)}]\, dx \\
&\geq\|f_1^*\star f_2^*\star\cdot\cdot f_k^*\|_p,
\end{split}\ee
where the last step follows from duality. Again by duality, taking an infimum over
nonnegative functions $h$ with $\|h\|_{p'}=1$ in inequality \eqref{eq:lem-f-inter1}
gives us
\ben\begin{split}
\|f_1\star f_2\star\cdot\cdot f_k(x)\|_p
&\geq\|f_1^*\star f_2^*\star\cdot\cdot f_k^*(x)\|_p.
\end{split}\een

\noindent {\bf Case 4: $p=0$}.\\
This follows from Case 3 and Lemma~\ref{lem:key}.

\noindent {\bf Case 5: $p=1$}.\\
If $h(f_1\star f_2\star\cdot\cdot f_k)$ is well defined, then by Lemma~\ref{lem:key}(iv), we have:
\ben
h(f_1\star f_2\star\cdot\cdot f_k)\geq h_p(f_1\star f_2\star\cdot\cdot f_k), \,\forall1<p<+\infty.
\een
By Case 1, we have
\begin{equation*}
h(f_1\star f_2\star\cdot\cdot f_k)\geq h_p(f_1^* \star f_2^*\star\cdot\cdot f_k^*), \,\forall1<p<+\infty.
\end{equation*}
Taking the limit as $p\downarrow 1$, we obtain
\ben
h(f_1\star f_2\star\cdot\cdot f_k)\geq h^{+}_1(f_1^* \star f_2^*\star\cdot\cdot f_k^*).
\een
The other inequality is proved similarly.

Finally, if $\|f_1\|_{1+\epsilon}<+\infty$, by Lemma~\ref{lem:preser-ent} and Young's inequality, we have:
\ben\begin{split}
\|f_1^* \star f_2^*\star\cdot\cdot f_k^*\|_{1+\epsilon} &< +\infty,\\
\|f_1\star f_2\star\cdot\cdot f_k\|_{1+\epsilon} &< +\infty.
\end{split}\een
Now we can apply Case 5, Corollary~\ref{cor:easy} and Lemma~\ref{lem:key}(ii) to conclude.
\end{proof}
\vspace{.1in}

\begin{rmk}
In \cite{BL76b}, the inequality \eqref{eq:explain} was claimed without proof. We sketch a proof here. Let
\ben
m(x)=\frac{1}{\lim_{A\uparrow+\infty}(A-k_A^*(x))}.
\een
To show $m(x)=(\frac{1}{h})^*$, we show that $\{x:m(x)>t\}=\{x:(\frac{1}{h})^*>t\}$. But
\ben
\{x:m(x)>t\}=\bigcup_{n=1}^{\infty}\bigcap_{M=1}^{\infty}\bigg\{x:(M-k_M^*(x))<\frac{1}{t+\frac{1}{n}}\bigg\}.
\een
Using Lemma~\ref{lem:basic}, we obtain:
\ben\begin{split}
\{x:m(x)>t\} &= \bigcup_{n=1}^{\infty}\bigcap_{M=1}^{\infty} \bigg\{x:h_M(x)<\frac{1}{t+\frac{1}{n}}\bigg\}^* \\
&=\bigcup_{n=1}^{\infty}\bigcap_{M=1}^{\infty} \bigg\{x:\bigg(\frac{1}{h_M}\bigg)^*>t+\frac{1}{n}\bigg\} \\
&=\bigg\{x: \bigg(\frac{1}{h}\bigg)^*>t\bigg\},
\end{split}\een
where the last step follows since:
\ben
\bigg(\frac{1}{h_M}\bigg)^*=\frac{1}{M}+\int_{\frac{1}{M}}^{\infty} \bigg\{\frac{1}{h(x)}>t\bigg\}^* dt
\,\downarrow\, \bigg(\frac{1}{h}\bigg)^*,
\een
as $M$ goes to $\infty$, where the equality follows directly from definition.
\end{rmk}
\vspace{.1in}

%%%%%%%%%%%%%%%%%%%%%%%

We now prove a variation of Theorem~\ref{thm:main1} when $p=1$.
Although we will prove the best possible version of the $p=1$ case (as stated in the
Main Theorem in Section~\ref{sec:intro}) later, the method of proof of the following seems interesting.

\begin{thm}\label{thm:mainver2}
Let $f_i, 1\leq i\leq k$ be probability densities on $\mathbb{R}^n$ and
$f^*_i, 1\leq i\leq k$, be their respective spherically symmetric decreasing  rearrangements.
If
\ben
\int f_1\star f_2\star\cdot\cdot\star f_k(x)\|x\|^2dx<+\infty ,
\een
then
\ben
h(f_1\star f_2\star\cdot\cdot\cdot\star f_k)\geq h(f^*_1\star f^*_2\star\cdot\cdot\cdot\star f^*_k).
\een
\end{thm}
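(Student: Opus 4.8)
The plan is to deduce this from the $p=1$ case of Theorem~\ref{thm:main1}, which already gives the inequality once one knows that the relevant Shannon entropies are well defined; the only real work is to extract well-definedness from the moment hypothesis. Write $g=f_1\star f_2\star\cdots\star f_k$ and $\tilde g=f_1^*\star f_2^*\star\cdots\star f_k^*$, and let $X_i\sim f_i$, $X_i^*\sim f_i^*$ be independent. The hypothesis says precisely that $\mathbb{E}\|X_1+\cdots+X_k\|^2=\int g(x)\|x\|^2\,dx<+\infty$. First I would apply Lemma~\ref{lem:finvar} with the non-negative increasing function $t\mapsto t^2$ to get $\mathbb{E}\|X_1^*+\cdots+X_k^*\|^2\le\mathbb{E}\|X_1+\cdots+X_k\|^2<+\infty$, so $\tilde g$ also has finite second moment.

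The key step is the elementary claim: if $\rho$ is a probability density on $\mathbb{R}^n$ with $\int\rho(x)\|x\|^2\,dx<+\infty$, then $\int\rho^{1-\epsilon}(x)\,dx<+\infty$ for every $\epsilon\in\bigl(0,\tfrac{2}{n+2}\bigr)$. To prove it, split the integral over $\{\|x\|\le1\}$ and $\{\|x\|>1\}$. On the bounded region, H\"older's inequality with exponents $\tfrac1{1-\epsilon}$ and $\tfrac1\epsilon$ gives $\int_{\|x\|\le1}\rho^{1-\epsilon}\le\bigl(\int\rho\bigr)^{1-\epsilon}|B(0,1)|^{\epsilon}<+\infty$. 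On $\{\|x\|>1\}$, write $\rho^{1-\epsilon}=(\rho\|x\|^2)^{1-\epsilon}\|x\|^{-2(1-\epsilon)}$ and apply H\"older with the same exponents:
\ben
\int_{\|x\|>1}\rho^{1-\epsilon}\,dx\le\Bigl(\int_{\|x\|>1}\rho(x)\|x\|^2\,dx\Bigr)^{1-\epsilon}\Bigl(\int_{\|x\|>1}\|x\|^{-\frac{2(1-\epsilon)}{\epsilon}}\,dx\Bigr)^{\epsilon},
\een
where the last integral converges exactly because $\tfrac{2(1-\epsilon)}{\epsilon}>n$ whenever $\epsilon<\tfrac{2}{n+2}$. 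Applying this claim to both $g$ and $\tilde g$ and invoking Corollary~\ref{cor:easy}, I get $h^-_1(g)<+\infty$ and $h^-_1(\tilde g)<+\infty$; hence by Lemma~\ref{lem:key}(ii) both $h(g)$ and $h(\tilde g)$ are well defined (possibly $-\infty$), and $h(g)=h^-_1(g)$.

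It then remains only to assemble the pieces. Since $h(\tilde g)=h(f_1^*\star\cdots\star f_k^*)$ is well defined, the $p=1$ part of Theorem~\ref{thm:main1} applies and gives $h^-_1(f_1\star\cdots\star f_k)\ge h(f_1^*\star\cdots\star f_k^*)$, i.e.\ $h^-_1(g)\ge h(\tilde g)$. Combining with $h(g)=h^-_1(g)$ yields $h(g)\ge h(\tilde g)$, which is the claim. I expect the only non-routine ingredient to be the H\"older argument of the second paragraph: it is what upgrades the well-definedness condition needed to run Theorem~\ref{thm:main1} from an integrability assumption near $p=1^+$ to the natural second-moment condition on the convolution, and it is also where the threshold $\tfrac{2}{n+2}$ (and hence the dimension $n$) enters. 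Everything else is bookkeeping with Lemma~\ref{lem:finvar}, Corollary~\ref{cor:easy}, and Lemma~\ref{lem:key}(ii).
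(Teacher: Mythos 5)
Your proof is correct, but it is genuinely different from the paper's. The paper also begins by invoking Lemma~\ref{lem:finvar} to transfer the second-moment bound to the rearranged convolution, and then simply \emph{asserts} that finite second moment makes both Shannon entropies well defined before moving to the real work: convolving both sides with $\sqrt{t}Z$ for a Gaussian $Z$, applying the Main Theorem to the smoothed densities, and then passing to the limit $t\downarrow 0$. That limiting step uses joint lower-semicontinuity of relative entropy (to get $\limsup_{t\downarrow 0}h(\sqrt{t}Z+X)\le h(X)$) together with a Jensen-inequality argument (to get $h(X_*+\sqrt{t}Z)\ge h(X_*)$). Your route skips the smoothing entirely: you make the "well defined" assertion precise via the H\"older estimate showing that finite second moment forces $\int\rho^{1-\epsilon}<\infty$ for $\epsilon<\tfrac{2}{n+2}$, feed that through Corollary~\ref{cor:easy} and Lemma~\ref{lem:key}(ii) to obtain $h(g)=h_1^-(g)$ and the well-definedness of $h(\tilde g)$, and then directly invoke the $h_1^-$ version of the $p=1$ case of Theorem~\ref{thm:main1}. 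The upshot is that your argument is more self-contained and elementary --- it uses nothing beyond the carefully stated one-sided limits already in Theorem~\ref{thm:main1} plus an explicit H\"older bound --- while the paper's Gaussian-smoothing route is chosen because, as the authors themselves remark, the \emph{method} (regularize, apply the inequality in a smooth setting, then pass to the limit via semicontinuity of relative entropy) is of independent interest and illustrates a robust technique reused elsewhere in the paper (e.g.\ in Corollary~\ref{cor:fishdec}). Both are valid; yours is shorter, theirs is more instructive as a template.
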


\begin{proof}
By Lemma~\ref{lem:finvar},
\ben
\int f_1^*\star f_2^*\star\cdot\cdot\star f_k^*(x)\|x\|^2dx<+\infty.
\een
Hence both $h(f_1\star f_2\star\cdot\cdot\cdot\star f_k)$ and $h(f^*_1\star f^*_2\star\cdot\cdot\cdot\star f^*_k)$ are well defined. Clearly, the Main Theorem implies that
\be\label{eq:ver1key3}
h(\sqrt{t}Z+X_1+\cdot\cdot+X_k)\geq h(\sqrt{t}Z+X_1^*+\cdot\cdot+X_k^*),
\ee
where $t>0$, $Z$ is a $n$ dimensional standard normal, $X_i$ is distributed according to $f_i$, $X_i^*$ is distributed according to $f_i^*$ and all random vectors are independent. The rest of the argument follows by taking the limit as $t$ goes to $0$. To simplify the notation, let
\ben
X=X_1+\cdot\cdot+X_k,X_*=X_1^*+\cdot\cdot+X_k^*.
\een
The joint lower-semicontinuity of relative entropy functional \cite{DZ98:book} gives
\ben
D(X\|G)\leq\liminf_{t\downarrow0}D(\sqrt{t}Z+X\|\sqrt{t}Z+G),
\een
where $G$ is a Gaussian random vector matching the mean and covariance of $X$, independent of $Z$.
But by a standard equality in information theory \cite{CT06:book}, we get
\ben
D(X\|G)=h(G)-h(X),
\een
\ben
D(\sqrt{t}Z+X\|\sqrt{t}Z+G)=h(\sqrt{t}Z+G)-h(\sqrt{t}Z+X).
\een
It's also easy to check directly that
\ben
\lim_{t\downarrow0}h(\sqrt{t}Z+G)=h(G).
\een
Hence we obtain
\be\label{eq:ver1key1}
\limsup_{t\downarrow0}h(\sqrt{t}Z+X)\leq h(X).
\ee
On the other hand, the density  $f_{t*}$ of $\sqrt{t}Z+X_*$ can be expressed as
\ben
\mathbb{E}f_*(x-\sqrt{t}Z),
\een
with $f_*$ being the density of $X_*$. If we apply Jensen's inequality to the concave function $-u\log(u)$, we obtain
\be\label{eq:ver1local}
-f_{t*}(x)\log(f_{t*}(x))
\geq-\mathbb{E}f_*(x-\sqrt{t}Z)\log(f_*(x-\sqrt{t}Z)).
\ee
It is easy to check the right hand side of the above is well defined due to the boundedness of the normal density
and the finiteness of the second moment of $X_*$. If $h(X_*)=-\infty$, we trivially have
\ben
h(X_*+\sqrt{t}Z)\geq h(X_*).
\een
Hence we can assume $h(X_*)$ is finite. In this case, we can integrate both sides of \eqref{eq:ver1local}
with respect to $x$ and use Fubini's Theorem to conclude that
\be\label{eq:ver1key2}
h(X_*+\sqrt{t}Z)\geq h(X_*).
\ee
Combining the inequalities \eqref{eq:ver1key1}, \eqref{eq:ver1key2} and \eqref{eq:ver1key3}, we can conclude the proof.
\end{proof}
\vspace{.1in}

\begin{rmk}\label{rmk:entconti}
We remark that the second part of the proof can be applied to $X$, instead of to $X_*$, as well. Hence,
under the only assumption of the finiteness of the second moment, we obtain:
\ben
\lim_{t\downarrow 0}h(X+\sqrt{t}Z)=h(X).
\een
This is known implicitly in \cite{CS91}, with a slightly different proof, but never seems
to have been explicitly noted. Of course, the continuity of $h(X+\sqrt{t}Z)$ in $t$ is trivial when $t>0$
because then one already has as much smoothness as desired to start with.
\end{rmk}

%%%%%%%%%%%%%%%%%%%%%%%%%%

\section{Second Proof of Main Theorem via Majorization}
\label{sec:major}

In this section, we give a new and unified proof of the Main Theorem for all values of $p$, using ideas from majorization theory.
In particular, we will show a best possible version of the $p=1$ case of the Main Theorem. A generalization of the Main Theorem
is also be obtained. We first define majorization as a partial order on the set of densities.

\begin{defn}
For probability densities $f$ and $g$ on $\mathbb{R}^n$, we say that $f$ is majorized by $g$
if
\ben
\int_{\{x:\|x\|<r\}}f^*(x)dx\leq\int_{\{x:\|x\|<r\}}g^*(x)dx
\een
for all $r>0$. In this case, we write $f\prec g$.
\end{defn}

We also need the following lemma, which includes Lemma~\ref{lem:preser-ent} as a special case.

\begin{lem}\label{lem:allpreser}
Let $f$ be a probability density and $\phi(x)$ be a convex function defined on the non-negative real line
such that $\phi(0)=0$ and it is continuous at $0$. Then
\ben
\int\phi(f(x))dx=\int\phi(f^*(x))dx,
\een
provided that one of these integrals is well defined (which guarantees that the other is).
\end{lem}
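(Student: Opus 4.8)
The plan is to reduce the identity $\int\phi(f(x))\,dx=\int\phi(f^*(x))\,dx$ to the fact (Remark following Lemma~\ref{lem:basic}) that $f$ and $f^*$ are equimeasurable, i.e.\ $|\{x:f(x)\in A\}|=|\{x:f^*(x)\in A\}|$ for every Borel $A\subseteq[0,\infty)$. The crux is that for any non-negative Borel function $\psi$ on $[0,\infty)$, equimeasurability gives $\int\psi(f(x))\,dx=\int\psi(f^*(x))\,dx$ whenever the integrals make sense; this is immediate for $\psi=\mathbb{I}_A$ by the equimeasurability remark, then extends to non-negative simple functions by linearity and to general non-negative $\psi$ by monotone convergence. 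So if $\phi$ were non-negative we would be done instantly. The real work is only in handling the sign of $\phi$ and the well-definedness clause.

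First I would record the structural facts about $\phi$: since $\phi$ is convex on $[0,\infty)$, continuous at $0$, and $\phi(0)=0$, it has a right derivative $\phi'_+(0)=:c\in[-\infty,\infty)$; actually $c$ is finite or $-\infty$, and convexity forces $\phi(t)\ge ct$ for all $t\ge0$ when $c$ is finite (and more care is needed if $c=-\infty$, but since $\phi(0)=0$ and $\phi$ is real-valued on $(0,\infty)$, $\phi'_+(0)$ is finite, so $c\in\mathbb{R}$). Write $\phi(t)=\phi(t)-ct+ct=:\tilde\phi(t)+ct$ where $\tilde\phi(t):=\phi(t)-ct$ is convex, non-negative (being a convex function with a minimum of $0$ at $t=0$), continuous at $0$, and $\tilde\phi(0)=0$. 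Then $\int\phi(f)\,dx=\int\tilde\phi(f)\,dx+c\int f\,dx=\int\tilde\phi(f)\,dx+c$, using $\int f=1$; similarly $\int\phi(f^*)\,dx=\int\tilde\phi(f^*)\,dx+c$ since $\int f^*=1$. Now apply the equimeasurability argument of the previous paragraph to the \emph{non-negative} function $\tilde\phi$: $\int\tilde\phi(f)\,dx=\int\tilde\phi(f^*)\,dx$ in $[0,\infty]$, with no integrability hypothesis needed. Adding back $c$ gives the claimed equality, and it also shows one side is well defined (i.e.\ not of the form $\infty-\infty$) if and only if the other is, since the only possibly-infinite piece $\int\tilde\phi$ is common to both. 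To finish, I would remark that Lemma~\ref{lem:preser-ent} is the special case $\phi(t)=t\log t$ for $p=1$ (which is convex on $[0,\infty)$ with $\phi(0)=0$ and continuous at $0$ under the convention $0\log0=0$), and $\phi(t)=t^p$, suitably signed, for $p\in(0,1)\cup(1,\infty)$ — here $t^p$ with $p>1$ is convex and vanishes at $0$, while for $p\in(0,1)$ one uses $\phi(t)=-t^p$, which is convex, and the sign flips through the $\tfrac{1}{1-p}$ prefactor in $h_p$.

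The main obstacle I anticipate is purely bookkeeping around the degenerate cases: making sure $\phi'_+(0)$ is genuinely finite (it is, because $\phi$ is finite-valued on all of $[0,\infty)$ including a neighborhood of $0$ by continuity at $0$, so the difference quotients $(\phi(t)-\phi(0))/t=\phi(t)/t$ are bounded below by convexity and the limit exists in $\mathbb{R}$), and handling the possibility that $\tilde\phi(f)$ or $\tilde\phi(f^*)$ is not integrable — which is fine, since the equimeasurability identity for non-negative integrands holds in $[0,\infty]$ regardless, and that is exactly what lets us phrase the ``provided one integral is well defined'' clause cleanly. No deep input beyond the equimeasurability remark and elementary convex-function facts is required.
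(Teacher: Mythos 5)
Your reduction to equimeasurability is the right idea and is cleaner than what appears in the paper, but the argument as written has a genuine gap at exactly the step you flagged as bookkeeping: the claim that $\phi'_+(0)$ is finite. Convexity gives that the difference quotients $\phi(t)/t$ are \emph{non-decreasing} in $t$, hence they \emph{decrease} as $t\downarrow 0$ and the limit $\phi'_+(0)$ lies in $[-\infty,\infty)$; nothing rules out $-\infty$. For example $\phi(t)=-\sqrt{t}$ is convex on $[0,\infty)$, vanishes and is continuous at $0$, yet $\phi'_+(0)=-\infty$; more to the point, $\phi(t)=-t^p$ with $p\in(0,1)$ has the same behavior, and this is precisely the choice of $\phi$ that the paper later plugs into this lemma to recover the R\'enyi case $p\in(0,1)$. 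When $c=-\infty$ the decomposition $\phi(t)=\tilde\phi(t)+ct$ is vacuous, so your argument covers only part of the hypothesis class.

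The paper sidesteps $\phi'_+(0)$ altogether: it splits into $\phi^+$ (handled via the layer-cake representation, since $\phi^+$ is a nonnegative convex increasing function null at zero) and $\phi^-$, and for $\phi^-$ it subtracts the tangent line at the positive zero crossing $\alpha$ of $\phi$ rather than at $0$, using the representation $\phi^-(u)+\phi'_+(\alpha)u\{u\le\alpha\}=\{u\le\alpha\}\int_0^u(\phi'_+(\alpha)-\phi'(t))_+\,dt$ together with Tonelli; the slope $\phi'_+(\alpha)$ at an interior point is automatically finite, which is exactly what makes that version work.

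If you want to preserve your equimeasurability route, the cleanest repair is to drop the linear subtraction entirely: write $\phi=\phi^+-\phi^-$. Both $\phi^+$ and $\phi^-$ are nonnegative Borel functions vanishing at $0$ (you do not need them to be convex once you are invoking equimeasurability rather than the layer cake), so your indicator $\to$ simple $\to$ monotone limit argument gives $\int\phi^+(f)\,dx=\int\phi^+(f^*)\,dx$ and $\int\phi^-(f)\,dx=\int\phi^-(f^*)\,dx$ in $[0,\infty]$ with no integrability hypothesis, and the well-definedness clause falls out because ``$\int\phi(f)$ is well defined'' means exactly that these two nonnegative integrals are not both $+\infty$. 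This is shorter than both your version and the paper's, and it also exposes that convexity of $\phi$ is not actually used beyond Borel measurability and $\phi(0)=0$. One minor caution for this or any equimeasurability argument: the identity $|\{x:f(x)\in A\}|=|\{x:f^*(x)\in A\}|$ can fail for sets $A$ containing $0$ (take $f$ with support of infinite measure but a null set of positive measure), so it is important that the simple approximants to $\phi^{\pm}$ vanish at $0$, which they do because $\phi^{\pm}(0)=0$.
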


\begin{proof}
%The proof is entirely similar to that of Lemma~\ref{lem:preser-ent}.
Note that a convex function satisfying the assumed conditions is always absolutely continuous.
If $\phi(x)\geq0,\forall x\geq0$, then $\phi$ must be increasing and we have $\phi'\geq 0$;
if $\phi(x)\leq0,\forall x\geq0$, then $\phi$ must be decreasing and we have $\phi'\leq0$.
In both cases, straightforward applications of Tonelli's theorem and the layer cake representation will do.

For the remaining case, we can assume that there is a unique $\alpha>0$, such that $\phi(\alpha)=0$,
$\phi(x)>0,x>\alpha$ and $\phi(x)\leq0,x<\alpha$. Noting that $\phi^+$ is also a continuous convex function null at zero,
by what has been proved, we obtain
\be\label{eq:plus}
\int\phi^+(f(x))dx=\int\phi^+(f^*(x))dx.
\ee
On the other hand, we have
\ben
\phi^-(u)+\phi'_{+}(\alpha)u\{u\leq\alpha\} = \{u\leq\alpha\}\int_0^{u}(\phi'_{+}(\alpha)-\phi'(t))_+dt.
\een
Tonelli's theorem gives us
\ben
\int\bigg[\phi^-(f(x))+\phi'_{+}(\alpha)f(x)\{f(x)\leq\alpha\}\bigg]dx
= \int\bigg[\phi^-(f^*(x))+\phi'_{+}(\alpha)f^*(x)\{f^*(x)\leq\alpha\}\bigg]dx.
\een
But it is easy to see that
\ben
-\infty<\int\phi'_{+}(\alpha)f(x)\{f(x)\leq\alpha\}dx
= \int\phi'_{+}(\alpha)f^*(x)\{f^*(x)\leq\alpha\}dx<+\infty.
\een
Hence
\be\label{eq:minus}
\int\phi^-(f(x))dx=\int\phi^-(f^*(x))dx ,
\ee
and we can conclude, provided the quantities in \eqref{eq:plus} and \eqref{eq:minus} are not both $+\infty$ or both $-\infty$.
\end{proof}
\vspace{.1in}

The following is well known in majorization theory. In the discrete case,
it is first proved in Hardy, Littlewood and Polya \cite{HLP29}. Various extensions to the continuous
setting are discussed by Chong \cite{Cho74}. However, we are not able to find a
direct reference that covers all cases of our interest. So we give the proof here.

\begin{prop}\label{prop:gen-id}
Let $\phi(x)$ be a convex function defined on the non-negative real line such that $\phi(0)=0$ and it is continuous at $0$.
If $f$ and $g$ are probability densities, with $f\prec g$, then
\ben
\int\phi(f(x))dx\leq\int\phi(g(x))dx,
\een
provided that both sides are well defined.
\end{prop}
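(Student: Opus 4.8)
The plan is to transport everything onto the radially decreasing rearrangements $f^*,g^*$ and then reduce the convex functional to a one-parameter family of comparisons of single super-level sets --- which is exactly the information encoded in the relation $f\prec g$. First I would reduce to rearrangements: since $\phi$ is convex on $[0,\infty)$, vanishes at $0$, and is continuous there, Lemma~\ref{lem:allpreser} applies and yields $\int\phi(f)\,dx=\int\phi(f^*)\,dx$ and $\int\phi(g)\,dx=\int\phi(g^*)\,dx$, each equality valid as soon as one side makes sense (so the hypothesis of the proposition supplies well-definedness throughout). Hence it suffices to prove $\int_{\mathbb{R}^n}\phi(f^*)\,dx\le\int_{\mathbb{R}^n}\phi(g^*)\,dx$, and I will freely use $\int f^*=\int g^*=1$. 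This is essentially the continuous Hardy--Littlewood--P\'olya principle, but the cleanest route I know is the following.

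The crux is the claim that for every $t>0$,
\[
\int_{\mathbb{R}^n}\bigl(f^*(x)-t\bigr)_+\,dx\ \le\ \int_{\mathbb{R}^n}\bigl(g^*(x)-t\bigr)_+\,dx .
\]
To prove this I would fix a probability density $h$, set $\Theta_h(r):=\int_{B(0,r)}h^*\,dx-t\,|B(0,r)|$, and note that, regarded as a function of the enclosed volume $v=|B(0,r)|$, $\Theta_h$ is concave (the increment of $\int_{B(0,r)}h^*$ per unit enclosed volume is the value of $h^*$ on the bounding sphere, which is non-increasing in $r$ because $h^*$ is radially non-increasing) and its one-sided derivative changes sign from $\ge 0$ to $\le 0$ exactly at $v=|\{h^*>t\}|$, a finite number since $h^*\in L^1$ and $t>0$. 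Therefore $\sup_{r>0}\Theta_h(r)=\int_{\{h^*>t\}}(h^*-t)\,dx=\int_{\mathbb{R}^n}(h^*-t)_+\,dx$. Since $f\prec g$ means exactly that $\int_{B(0,r)}f^*\le\int_{B(0,r)}g^*$ for every $r>0$, we get $\Theta_f\le\Theta_g$ pointwise, hence $\sup_r\Theta_f\le\sup_r\Theta_g$, which is the claim.

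Given the claim, I would finish in two steps. If $c:=\phi'_+(0)$ is finite, then $\phi$ is absolutely continuous on compact intervals with non-decreasing right derivative; letting $\mu$ be the non-negative measure on $(0,\infty)$ with $\mu((a,b])=\phi'_+(b)-\phi'_+(a)$, Tonelli's theorem gives the representation $\phi(u)=cu+\int_{(0,\infty)}(u-t)_+\,d\mu(t)$ for $u\ge0$. Substituting $u=f^*(x)$, integrating in $x$, and applying Tonelli again ($\mu$ is $\sigma$-finite and the integrand $(f^*(x)-t)_+$ is $\ge 0$) gives
\[
\int_{\mathbb{R}^n}\phi(f^*)\,dx= c\int_{\mathbb{R}^n}f^*\,dx + \int_{(0,\infty)}\Bigl(\int_{\mathbb{R}^n}(f^*(x)-t)_+\,dx\Bigr)d\mu(t),
\]
and likewise for $g^*$; the linear terms agree and the claim dominates the $f^*$-term by the $g^*$-term. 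For a general $\phi$ --- including $\phi'_+(0)=-\infty$, as happens for $\phi(u)=u\log u$ and for $\phi(u)=-u^p$ with $p\in(0,1)$, which are precisely the cases corresponding to Shannon and to R\'enyi-$(p<1)$ entropy --- I would truncate: $\phi_m(u):=\max\bigl(\phi(u),-mu\bigr)$ is convex, null and continuous at $0$, has finite $\phi_{m,+}'(0)=\max(\phi'_+(0),-m)$, and decreases pointwise to $\phi$ on $[0,\infty)$. The previous step gives $\int\phi_m(f^*)\,dx\le\int\phi_m(g^*)\,dx$, and since $\phi_m(f^*)\ge\phi(f^*)$ it remains only to check $\int\phi_m(g^*)\,dx\to\int\phi(g^*)\,dx$. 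This is clear once one observes that $\max(\phi_m(u),0)=\max(\phi(u),0)$ for all $u,m$ (because $-mu\le0$), so $\int\max(\phi_m(g^*),0)\,dx$ does not depend on $m$, while $\max(-\phi_m(g^*),0)$ increases pointwise to $\max(-\phi(g^*),0)$, so its integral converges by monotone convergence; the difference of the two pieces then converges precisely because $\int\phi(g^*)\,dx$ is well-defined (the case $\int\max(\phi(g^*),0)\,dx=+\infty$ being trivial). Letting $m\to\infty$ completes the argument.

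The main obstacle is the variational identity in the second paragraph, namely $\int(h^*-t)_+\,dx=\sup_{r>0}\bigl(\int_{B(0,r)}h^*\,dx-t|B(0,r)|\bigr)$: this is what converts the cumulative-mass inequalities defining $f\prec g$ into a bound on convex functionals, and it relies in an essential way on $h^*$ being radially decreasing (equivalently, on concavity of ``enclosed mass as a function of enclosed volume''). The remaining ingredients --- the reduction via Lemma~\ref{lem:allpreser}, the integral representation of the convex $\phi$, and the truncation needed to accommodate infinite right slope at the origin --- are routine.
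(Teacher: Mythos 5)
Your proof is correct and follows essentially the same route as the paper's: reduce to $f^*,g^*$ via Lemma~\ref{lem:allpreser}, establish the single-scale comparison $\int(f^*-t)_+\le\int(g^*-t)_+$ from the cumulative-mass inequalities defining $\prec$, expand $\phi$ as a $\mu$-mixture of the kernels $(u-t)_+$ plus a linear term when $\phi'_+(0)$ is finite, integrate via Tonelli, and then truncate to handle $\phi'_+(0)=-\infty$. The only cosmetic differences are that the paper proves the $(f^*-t)_+$ comparison more directly (take $r$ to be the radius of the ball $\{f^*>t\}$, so $\int(f^*-t)_+=\int_{B(0,r)}(f^*-t)\le\int_{B(0,r)}(g^*-t)\le\int(g^*-t)_+$, without invoking concavity of a variational functional), and uses a piecewise-linear modification of $\phi$ near the origin instead of your $\phi_m=\max(\phi,-m\cdot)$; both truncations have the same effect ($\phi_m^+=\phi^+$, $\phi_m^-\uparrow\phi^-$).
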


\begin{proof}
We first show that for each $t>0$,
\be\label{eq:proofkey}
\int(f(x)-t)_+dx\leq\int(g(x)-t)_+dx.
\ee
By Lemma~\ref{lem:allpreser}, we only need to show:
\ben
\int(f^*(x)-t)_+dx\leq\int(g^*(x)-t)_+dx.
\een
By Markov's inequality, we know that the set $\{x:f^*(x)>t\}$ is an open ball with finite radius, say $r$. Then we have
\ben\begin{split}
\int(f^*(x)-t)_+dx &= \int_{\{x:\|x\|<r\}}(f^*(x)-t)dx \\
&\leq\int_{\{x:\|x\|<r\}}(g^*(x)-t)dx \\
&\leq\int(g^*(x)-t)_+dx.
\end{split}\een
Next, we assume that, additionally, $\phi'_+(0)$ is finite. Define the second derivative measure $\mu$ of $\phi$ by setting
\ben
\mu((a,b])=\phi'_+(b)-\phi'_+(a),
\een
\ben
\mu(\{0\})=\phi'_+(0).
\een
Restricted to $(0,+\infty)$, $\mu$ is a non-negative measure. Using Tonelli's theorem, we see:
\ben
\phi(u)=\int_0^{u}\phi'(t)dt
\een
\ben
=\int_0^{u}\{0<s\leq t\}\mu(ds)dt+\phi'_+(0)u
\een
\ben
=\int_{(0,+\infty)}(u-s)_+\mu(ds)+\phi'_+(0)u.
\een
Also, since $\phi^-(x)\leq|\phi'(0)|x$, $\int\phi(f(x))dx$ is well defined for all density $f$. Hence by integrating both sides of \eqref{eq:proofkey} with respect to $\mu$ on $(0,+\infty)$ and using Tonelli's theorem, we obtain:
\ben
\int\phi(f(x))dx\leq\int\phi(g(x))dx.
\een
Finally, if $\phi'_+(0)$ is not finite, then it must be $-\infty$ and we can find a $\alpha>0$ such that $\phi(\alpha)<0$. Define $\phi_n(x)=\phi(\frac{\alpha}{2^n})\frac{2^n}{\alpha}x$, if $x\leq\frac{\alpha}{2^n}$ and $\phi_n(x)=\phi(x)$ otherwise. Then
\ben
\phi_n^+(x)=\phi^+(x),
\een
\ben
\phi_n^-(x)\uparrow\phi^-(x).
\een
By what has been proved,
\ben
\int\phi_n(f(x))dx\leq\int\phi_n(g(x))dx,
\een
which is equivalent to,
\ben
\int\phi^+(f(x))dx-\int\phi_n^-(f(x))dx
\een
\ben
\leq\int\phi^+(g(x))dx-\int\phi_n^-(g(x))dx.
\een
If we assume that both $\int\phi(f(x))dx$ and $\int\phi(g(x))dx$ are well defined, then we can use monotone convergence to conclude.
\end{proof}
\vspace{.1in}

We now apply this to obtain a proof of the Main Theorem (indeed, a generalization of it) under minimal assumptions.

\begin{thm}\label{thm:most-gen}
Suppose $f_i, 1\leq i \leq k$ are probability densities. Let $\phi(x)$ be a convex function defined on the non-negative real line
such that $\phi(0)=0$ and $\phi$ is continuous at $0$. Then
\ben
\int\phi(f_1\star f_2\star\cdot\cdot\star f_k(x))dx\leq\int\phi(f_1^*\star f_2^*\star\cdot\cdot\star f_k^*(x))dx,
\een
provided that both sides are well defined.
\end{thm}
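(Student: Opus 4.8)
We will deduce the theorem from two facts established above: the majorization monotonicity of Proposition~\ref{prop:gen-id}, and the Rogers--Brascamp--Lieb--Luttinger inequality in the form of Theorem~\ref{thm:BLL}. Write $h=f_1\star\cdots\star f_k$ and $g=f_1^*\star\cdots\star f_k^*$; since a convolution of probability densities is again a probability density (Fubini), both $h$ and $g$ lie in $L^1(\RL^n)$. The plan is to prove the single majorization relation $h\prec g$ and then invoke Proposition~\ref{prop:gen-id}: as $\phi$ is convex on $[0,\infty)$ with $\phi(0)=0$ and continuous at $0$, and both $\int\phi(h)\,dx$ and $\int\phi(g)\,dx$ are well defined by hypothesis, this immediately gives $\int\phi(h(x))\,dx\le\int\phi(g(x))\,dx$.

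To prove $h\prec g$ we must check $\int_{\{\|x\|<r\}}h^*(x)\,dx\le\int_{\{\|x\|<r\}}g^*(x)\,dx$ for each $r>0$; fix $r$ and write $B=B(0,r)$. First I would record the classical description
\[
\int_{B}h^*(x)\,dx=\sup\Big\{\textstyle\int_A h(x)\,dx:\ A\subseteq\RL^n\ \text{Borel},\ |A|\le|B|\Big\},
\]
a consequence of the equidistribution of $h$ and $h^*$ (Lemma~\ref{lem:basic}) together with the Hardy--Littlewood inequality. Next, for any Borel $A$ with $|A|\le|B|$, Theorem~\ref{thm:BLL} --- applied, as in Remark~\ref{rmk:bll}, with $\mathbb{I}_A$ as the weight function against the $k$ densities $f_1,\dots,f_k$ --- gives
\[
\int_A h(x)\,dx=\int_{\RL^n}\mathbb{I}_A(x)\,h(x)\,dx\le\int_{\RL^n}(\mathbb{I}_A)^*(x)\,g(x)\,dx=\int_{\RL^n}\mathbb{I}_{A^*}(x)\,g(x)\,dx\le\int_{B}g(x)\,dx\le\int_{B}g^*(x)\,dx,
\]
where the penultimate inequality uses $A^*\subseteq B$ (since $|A|\le|B|$) and the last one is again Hardy--Littlewood (indeed $g$, being an iterated convolution of spherically symmetric decreasing functions, is itself spherically symmetric decreasing, so $g^*=g$). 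Taking the supremum over $A$ yields $\int_B h^*\le\int_B g^*$, i.e.\ $h\prec g$, and the proof is complete.

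The step that demands genuine care --- and the one I expect to be the main obstacle --- is the ``$\le$'' half of the supremum identity above: because the distribution function of $h$ may have a jump at the relevant level, producing a competitor $A$ with $|A|=|B|$ and $\int_A h=\int_B h^*$ forces one to split the critical level set $\{x:h(x)=t\}$ rather than merely taking a superlevel set. A technically smoother route that avoids constructing such an $A$ is to prove directly, for every $t>0$,
\[
\int\bigl(h(x)-t\bigr)_+\,dx\le\int\bigl(g(x)-t\bigr)_+\,dx,
\]
by applying Theorem~\ref{thm:BLL} to $\mathbb{I}_{\{h>t\}},f_1,\dots,f_k$ and using $|\{h>t\}|=|\{h^*>t\}|$ together with $\int(h-t)_+\,dx=\int(h^*-t)_+\,dx$ (Lemma~\ref{lem:allpreser} applied to the convex function $u\mapsto(u-t)_+$); this is precisely the inequality \eqref{eq:proofkey} that drives the proof of Proposition~\ref{prop:gen-id}, and since $\int_B h^*=\inf_{t>0}\bigl(t|B|+\int(h-t)_+\,dx\bigr)$, one again recovers $h\prec g$ and thus the theorem. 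Everything else --- that iterated convolutions of probability densities are probability densities, and that $\mathbb{I}_A^*=\mathbb{I}_{A^*}$ with $|A^*|=|A|$ --- is routine.
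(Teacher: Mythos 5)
Your proof is correct and follows essentially the same strategy as the paper: establish the majorization $h\prec g$ by combining Theorem~\ref{thm:BLL} with the extremal representation $\int_{B}h^{*}=\sup\int_{A}h$, then apply Proposition~\ref{prop:gen-id}. The small differences---you finish via $A^{*}\subseteq B$ together with $g^{*}=g$, whereas the paper simply applies the supremum representation a second time; and your $(h-t)_{+}$-based alternative is a sound but unneeded hedge, since the paper just cites the supremum representation from the rearrangement literature---are harmless variations on the same core argument.
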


\begin{proof}
We only need to show $f=f_1\star f_2\star\cdot\cdot\star f_k\prec g=f_1^*\star f_2^*\star\cdot\cdot\star f_k^*$.
Again by Theorem~\ref{thm:BLL}, we have:
\be\label{eq:bll}
\int r(x)f(x)dx\leq\int r^*(x)g(x)dx.
\ee
Now we recall the following representation \cite{Bur09:tut}:
\ben
\int_{B(0,r)}f^*(x)dx=\sup_{|C|=|B(0,r)|}\int_C f(x)dx,
\een
for any density $f$, where $B(0,r)$ is the open ball with radius $r$. By \eqref{eq:bll}, we have that
\ben
\int_{B(0,r)}f^*(x)dx\leq\sup_{|C|=|B(0,r)|}\int_{C^*}g(x)dx.
\een
But
\ben
\sup_{|C|=|B(0,r)|}\int_{C^*}g(x)dx\leq\sup_{|A|=|B(0,r)|}\int_{A}g(x)dx,
\een
since $|C^*|=|C|$. Using the representation again, we obtain:
\ben
\int_{B(0,r)}f^*(x)dx\leq\int_{B(0,r)}g^*(x)dx.
\een
\end{proof}
\vspace{.1in}

\begin{rmk}
By taking $\phi(x)=x^p$ for $p>1$, $\phi(x)=-x^p$ for $0<p<1$ and $\phi(x)=x\log(x)$ for $p=1$, we recover the Main Theorem.
%In addition, we also obtain a new and simpler proof of \cite[Proposition 9]{BL76b}.
\end{rmk}

\section{Implications}
\label{sec:implicn}

In this section, we point out several implications of the Main Theorem, some of which have already been mentioned in the Introduction.
The first implication is the Brunn-Minkowski inequality. One can recover it from the Main Theorem in full generality.

\begin{cor}
Let $A$ and $B$ be two nonempty Borel sets. Then:
\ben
|A+B|^\nth \geq |A^* + B^*|^\nth = |A|^\nth + |B|^\nth .
\een
\end{cor}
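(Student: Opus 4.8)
The plan is to obtain the Brunn--Minkowski inequality from the $p=0$ instance of the Main Theorem (Theorem~\ref{thm:main1}, Case~4), recalling from Lemma~\ref{lem:key}(i) that $h_0(f)=\log|\mathrm{supp}(f)|$. We may assume $A$ and $B$ have positive and finite Lebesgue measure: if, say, $|A|=+\infty$, then $A+B$ contains a translate of $A$ and both sides of the claimed inequality are $+\infty$; if $|A|=0$, then $A+B$ contains a translate of $B$, so $|A+B|\ge|B|$, which already yields the inequality (the middle quantity being read with the convention for $A^*$ from Section~\ref{sec:rearr-basic}).

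So fix $0<|A|,|B|<+\infty$ and set $f=\mathbb{I}_A/|A|$ and $g=\mathbb{I}_B/|B|$, two probability densities on $\RL^n$. Directly from the definitions in Section~\ref{sec:rearr-basic} one has $\mathbb{I}_A^*=\mathbb{I}_{A^*}$, and since rearrangement commutes with multiplication by a positive scalar (a special case of Lemma~\ref{lem:increasing}), we get $f^*=\mathbb{I}_{A^*}/|A|$ and $g^*=\mathbb{I}_{B^*}/|B|$. Applying the Main Theorem with $p=0$ and $k=2$, together with Lemma~\ref{lem:key}(i),
\ben
\log\big|\mathrm{supp}(f\star g)\big|\;=\;h_0(f\star g)\;\ge\;h_0(f^*\star g^*)\;=\;\log\big|\mathrm{supp}(f^*\star g^*)\big| .
\een

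It remains to identify the two supports. Since $(f\star g)(x)$ is a positive multiple of $|A\cap(x-B)|$, it vanishes for every $x\notin A+B$; hence $\mathrm{supp}(f\star g)\subseteq A+B$ and $|\mathrm{supp}(f\star g)|\le|A+B|$ (read as outer Lebesgue measure if one prefers not to invoke that $A+B$, being analytic, is Lebesgue measurable). For the other support, write $A^*=B(0,r_A)$ and $B^*=B(0,r_B)$, where $\omega_n r_A^{\,n}=|A|$, $\omega_n r_B^{\,n}=|B|$ and $\omega_n=|B(0,1)|$. The open balls $B(0,r_A)$ and $x-B(0,r_B)=B(x,r_B)$ meet in a set of positive measure precisely when $\|x\|<r_A+r_B$, so $\mathrm{supp}(f^*\star g^*)=B(0,r_A+r_B)=A^*+B^*$ and $|\mathrm{supp}(f^*\star g^*)|=|A^*+B^*|=\omega_n(r_A+r_B)^n$. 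Combining with the displayed inequality gives $|A+B|\ge|A^*+B^*|$, and extracting $n$-th roots,
\ben
|A^*+B^*|^{\nth}=\omega_n^{\nth}(r_A+r_B)=\omega_n^{\nth}r_A+\omega_n^{\nth}r_B=|A|^{\nth}+|B|^{\nth},
\een
which is exactly the asserted chain of (in)equalities.

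The argument is essentially routine once the Main Theorem is in hand; the only steps deserving a little care are the measure-theoretic bookkeeping for the supports --- namely that on the left one needs only the one-sided inclusion $\mathrm{supp}(f\star g)\subseteq A+B$, while on the right one must check that $\mathrm{supp}(f^*\star g^*)$ genuinely fills out the whole ball $A^*+B^*$ (and not a proper subset of it) up to a null set --- together with the harmless measurability issue for the sumset, which is sidestepped by working with outer Lebesgue measure throughout.
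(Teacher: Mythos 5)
Your proof is correct and follows essentially the same route as the paper: normalize indicators of $A$ and $B$ into densities, apply the $p=0$ case of the Main Theorem together with Lemma~\ref{lem:key}(i) to compare $|\mathrm{supp}(f\star g)|$ with $|\mathrm{supp}(f^*\star g^*)|$, and identify those supports with $A+B$ and $A^*+B^*$; the degenerate cases of zero or infinite measure are also dispatched by the same elementary translation arguments the paper uses. Your write-up merely spells out in more detail the two "check easily" steps (the one-sided inclusion $\mathrm{supp}(f\star g)\subseteq A+B$ and the exact identification of $\mathrm{supp}(f^*\star g^*)$ as the ball of radius $r_A+r_B$), which is harmless additional care rather than a different method.
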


\begin{proof}
We first assume both of them have non-zero and finite volume.
We take $f_1=\frac{\mathbb{I}_{A}}{|A|}$ and $f_2=\frac{\mathbb{I}_{B}}{|B|}$. Then
\begin{equation*}
f^*_1=\frac{\mathbb{I}_{A^*}}{|A|},f^*_2=\frac{\mathbb{I}_{B^*}}{|B|}.
\end{equation*}
By the $p=0$ case of the Main Theorem and Lemma~\ref{lem:key}(i), we have:
\begin{equation*}
|supp(f_1\star f_2)|\geq|supp(f^*_1\star f^*_2)|.
\end{equation*}
As something that can be checked easily,
\begin{equation*}
supp(f_1\star f_2)\subseteq A+B,supp(f^*_1\star f^*_2)=A^*+B^*.
\end{equation*}
Hence we can conclude. If one of them has zero volume, say $B$, we take a point $x_0\in B$. Then:
\ben
|A+B|^\nth=|A+B-x_0|^\nth\geq|A|^\nth,
\een
where the inequality follows since $A+B-x_0\supseteq A$. If one of them has infinite volume, say $B$, we take a point $x_0\in A$. Then:
\ben
|A+B|^\nth=|A+B-x_0|^\nth\geq|B|^\nth.
\een
\end{proof}
\vspace{.1in}

We next derive Theorem~\ref{thm:BLL} from the Main Theorem, thus showing that they are mathematically
equivalent to each other. However, to recover Theorem~\ref{thm:BLL} for $k$ functions, we need the Main Theorem
for $k$ densities, while, if we look back at the proof of the Main Theorem for $k$ densities,
we need Theorem~\ref{thm:BLL} for $k+1$ functions. To summarize the following,
Theorem~\ref{thm:BLL} for $k$ functions can be seen as the $p=+\infty$ case of the Main Theorem for $k$ densities.

\begin{cor}
Let $A_i, 1\leq i\leq k$ be measurable subsets of $\RL^n$ with finite volume. Then
\ben
\star_{i\in[k]}\mathbb{I}_{A_i}(0) \leq\star_{i\in[k]}\mathbb{I}_{A^*_i}(0).
\een
\end{cor}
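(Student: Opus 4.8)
The plan is to obtain the statement as the $p=+\infty$ case of the Main Theorem (Theorem~\ref{thm:main1}) applied to normalized indicator functions, and then to convert the resulting inequality between essential suprema of convolutions into the claimed inequality between their values at the origin. First I would dispose of the degenerate situations: if some $A_i$ has zero volume then $A_i^*=\emptyset$, the right-hand side is $0$, and (replacing each $\mathbb{I}_{A_i}$ by its canonical a.e.\ representative, or simply assuming $k\geq2$ so that the left-hand convolution is continuous and identically $0$) the left-hand side is $0$ as well; and for $k=1$ the statement reads $\mathbb{I}_{A_1}(0)\leq\mathbb{I}_{A_1^*}(0)$, which is trivial for the lower-semicontinuous representatives. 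So assume $k\geq2$ and $0<|A_i|<+\infty$ for all $i$, and set $f_i=\mathbb{I}_{A_i}/|A_i|$; these are probability densities, and a direct computation from the definition (or Lemma~\ref{lem:basic}) gives $f_i^*=\mathbb{I}_{A_i^*}/|A_i|$. Applying Theorem~\ref{thm:main1} with $p=+\infty$ yields $h_\infty(f_1\star\cdots\star f_k)\geq h_\infty(f_1^*\star\cdots\star f_k^*)$, which by Lemma~\ref{lem:key}(iii) is exactly $\|f_1\star\cdots\star f_k\|_\infty\leq\|f_1^*\star\cdots\star f_k^*\|_\infty$. Multiplying both sides by $\prod_{i=1}^k|A_i|$ and using multilinearity of convolution, this becomes
\[
\big\|\star_{i\in[k]}\mathbb{I}_{A_i}\big\|_\infty\leq\big\|\star_{i\in[k]}\mathbb{I}_{A_i^*}\big\|_\infty .
\]

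It then remains to identify both essential suprema with the value at the origin. Since $k\geq2$ and $\mathbb{I}_{A_i}\in L^1(\RL^n)\cap L^\infty(\RL^n)\subseteq L^2(\RL^n)$, the convolution $\star_{i\in[k]}\mathbb{I}_{A_i}$ is bounded and continuous, so $\|\star_{i\in[k]}\mathbb{I}_{A_i}\|_\infty=\sup_x\star_{i\in[k]}\mathbb{I}_{A_i}(x)\geq\star_{i\in[k]}\mathbb{I}_{A_i}(0)$. On the rearranged side, $\star_{i\in[k]}\mathbb{I}_{A_i^*}$ is a convolution of symmetric decreasing integrable functions and is hence itself symmetric decreasing: this follows by induction on $k$ from the Hardy--Littlewood inequality (the $k=1$ case of Theorem~\ref{thm:BLL}), using that for symmetric decreasing $h,u$ and any $x$ one has $h\star u(x)=\int h(y)u(y-x)\,dy\leq\int h^*(y)\,(u(\cdot-x))^*(y)\,dy=\int h(y)u(y)\,dy=h\star u(0)$; alternatively both this fact and the attainment of the maximum at $0$ can be read off directly from Theorem~\ref{thm:BLLfull} with the same matrix used in the proof of Theorem~\ref{thm:BLL}. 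Being symmetric decreasing and continuous, $\star_{i\in[k]}\mathbb{I}_{A_i^*}$ attains its maximum at $0$, so $\|\star_{i\in[k]}\mathbb{I}_{A_i^*}\|_\infty=\star_{i\in[k]}\mathbb{I}_{A_i^*}(0)$. Chaining the three relations gives $\star_{i\in[k]}\mathbb{I}_{A_i}(0)\leq\star_{i\in[k]}\mathbb{I}_{A_i^*}(0)$.

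The only place a genuine argument is needed beyond substitution is this last step. The Main Theorem controls only the essential supremum of each convolution, so one must know that for the rearranged densities this essential supremum is attained at the origin (the ``convolution preserves symmetric-decreasing'' fact), and that for the original densities the pointwise value at $0$ is dominated by the essential supremum, which needs continuity of the convolution and hence the hypothesis $k\geq2$, together with a little care about null-set representatives. I expect this bookkeeping around pointwise values, rather than any deep point, to be the main obstacle.
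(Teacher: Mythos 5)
Your proof is correct, and it reaches the same key chain of inequalities
\[
\star_{i\in[k]}\mathbb{I}_{A_i}(0)\;\leq\;\big\|\star_{i\in[k]}\mathbb{I}_{A_i}\big\|_\infty
\;\leq\;\big\|\star_{i\in[k]}\mathbb{I}_{A_i^*}\big\|_\infty
\;=\;\star_{i\in[k]}\mathbb{I}_{A_i^*}(0)
\]
using the $p=\infty$ case of the Main Theorem plus Lemma~\ref{lem:key}(iii) and the fact that the rearranged convolution has its maximum at the origin. Where you and the paper part ways is in how the pointwise bookkeeping is carried out. The paper first treats the case where one set, say $A_1$, is \emph{open}: then $\mathbb{I}_{A_1}$ is lower semicontinuous, Fatou's lemma gives lower semicontinuity of the convolution, and the lower-semicontinuous clause of Lemma~\ref{lem:key}(iii) can be invoked directly. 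The general case is then handled by approximating a measurable $A_1$ from above by open sets $B_m$ with $|B_m|\downarrow|A_1|$ and using Lemma~\ref{lem:L1} to conclude that the rearranged approximants converge as well, passing to the limit at $x=0$. You instead bypass the approximation altogether by noting that for $k\geq 2$ the convolution $\star_{i\in[k]}\mathbb{I}_{A_i}$ is genuinely \emph{continuous} (it is an $L^1\star L^\infty$ convolution, and continuity of translation in $L^1$ gives uniform continuity of the result), so the pointwise value at $0$ is unambiguous and dominated by the essential supremum; on the rearranged side you similarly use continuity plus the Hardy--Littlewood estimate $h\star u(x)\leq h\star u(0)$ to identify the essential supremum with the value at the origin. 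Your route is shorter and avoids appealing to inner regularity of Lebesgue measure and to Lemma~\ref{lem:L1}, at the cost of needing the (standard) continuity fact about $L^1\star L^\infty$ and a separate word about $k=1$; the paper's route is more uniform in $k$ and sticks closer to the lower-semicontinuity language that Lemma~\ref{lem:key}(iii) is phrased in. One small overstatement on your part: your Hardy--Littlewood argument shows only that $\star_{i\in[k]}\mathbb{I}_{A_i^*}$ attains its maximum at $0$, not full radial monotonicity, but that is all you use, so it is harmless.
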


\begin{proof}
Clearly, we can assume all the sets have non-zero volume. We first prove the corollary under the assumption that one of the sets, say $A_1$, is open.  But the indicator of an open set is an lower semicontinuous function. Hence by Fatou's lemma,
\ben\begin{split}
\liminf_{x_m\rightarrow x}\int\mathbb{I}_{A_1}(x_m-y)\star_{2\leq i\leq k}\mathbb{I}_{A_i}(y)dy
&\geq\int\liminf_{x_m\rightarrow x}\mathbb{I}_{A_1}(x_m-y)\star_{2\leq i\leq k}\mathbb{I}_{A_i}(y)dy \\
&\geq\int\mathbb{I}_{A_1}(x-y)\star_{2\leq i\leq k}\mathbb{I}_{A_i}(y)dy,
\end{split}\een
which shows that $\star_{i\in[k]}\mathbb{I}_{A_i}(x)$ is lower-semicontinuous. Similarly, $\star_{i\in[k]}\mathbb{I}_{A^*_i}(x)$ is also lower-semicontinuous. Now the claimed result follows from the $p=\infty$ case of the Main Theorem, Lemma~\ref{lem:key}(iii) and the classical fact that
\ben
\star_{i\in[k]}\mathbb{I}_{A^*_i}(0)=\sup_x\star_{i\in[k]}\mathbb{I}_{A^*_i}(x).
\een
In the general case, by the regularity property of the Lebesgue measure, we can find a sequence of open set $B_m\supseteq A_1$, such that
\ben
\lim_{m\rightarrow\infty}|B_m|=|A_1|,
\een
from which it follows that
\be\label{eq:localrs1}
\lim_{m\rightarrow\infty}\|\mathbb{I}_{B_m}-\mathbb{I}_{A_1}\|_1=0.
\ee
By Lemma~\ref{lem:L1}, we also have:
\be\label{eq:localrs2}
\lim_{m\rightarrow\infty}\|\mathbb{I}_{B_m^*}-\mathbb{I}_{A_1^*}\|_1=0.
\ee
The inequalities \eqref{eq:localrs1} and \eqref{eq:localrs2} imply easily that for each $x$,
\ben
\int\mathbb{I}_{B_m}(x-y)\star_{2\leq i\leq k}\mathbb{I}_{A_i}(y)dy\rightarrow\star_{i\in[k]}\mathbb{I}_{A_i}(x),
\een
\ben
\int\mathbb{I}_{B_m^*}(x-y)\star_{2\leq i\leq k}\mathbb{I}_{A_i^*}(y)dy\rightarrow\star_{i\in[k]}\mathbb{I}_{A_i^*}(x).
\een
By what has been proved,
\ben\begin{split}
\int\mathbb{I}_{B_m}(-y)\star_{2\leq i\leq k}\mathbb{I}_{A_i}(y)dy
&\leq\int\mathbb{I}_{B_m^*}(-y)\star_{2\leq i\leq k}\mathbb{I}_{A_i^*}(y)dy.
\end{split}\een
Hence the desired result follows by taking the limit as $m$ goes to $\infty$.
\end{proof}
\vspace{.1in}

\begin{rmk}
By taking $A_1$ to be $-A_1$, we obtain
\ben\begin{split}
\int\mathbb{I}_{A_1}(y)\star_{2\leq i\leq k}\mathbb{I}_{A_i}(y)dy
&\leq\int\mathbb{I}_{A_1^*}(y)\star_{2\leq i\leq k}\mathbb{I}_{A_i^*}(y)dy.
\end{split}\een
For any $k$ densities, $f_i, 1\leq i\leq k$, we have the following layer cake representation:
\ben\begin{split}
\int f_1(y)\star_{2\leq i\leq k}f_i(y)dy
&=\int_0^{\infty}dt_1\cdot\cdot\int_0^{\infty}dt_k\int\mathbb{I}_{A_{t_1}^1}(y)\star_{2\leq i\leq k}\mathbb{I}_{A_{t_i}^i}(y)dy,
\end{split}\een
where $A_{t_i}^i=\{x:f_i(x)>t_i\}$. This, combined with Lemma~\ref{lem:basic}, will give us:
\ben\begin{split}
\int f_1(y)\star_{2\leq i\leq k}f_i(y)dy
&\leq\int f_1^*(y)\star_{2\leq i\leq k}f_i^*(y)dy.
\end{split}\een
This is precisely Theorem~\ref{thm:BLL}.
\end{rmk}

Another corollary of the Main Theorem is a Fisher information inequality. {\bf From now on,
until the end of this section, we assume $n=1$.} We first define Fisher information of a probability measure.

\begin{defn}
If a density $f$ on the real line is locally absolutely continuous, with the derivative $f'$ (defined almost everywhere),
then its Fisher information $I(f)$ is defined by
\begin{equation*}
I(f)=\int_{\{x:f(x)>0\}}\frac{f'^2(x)}{f(x)}dx.
\end{equation*}
For other densities and for probability measures without densities we define $I$ to be $+\infty$.
\end{defn}

We will sometimes abuse notation by writing $I(X)$ to mean $I(f)$, if $X$ is distributed according to $f$.
One can show that if $I$ is finite, then the derivative of the density $f$ is absolutely integrable \cite{BCG14:fisher}.
We also need some important properties of Fisher information.
One is that Gaussian convolution decreases Fisher information. This is a slight extension of the argument in \cite{Bar84:tr}.
We give this as a lemma and give a complete proof.

\begin{lem}\label{lem:fishcov}
\begin{equation*}
I(X+G)\leq I(X),
\end{equation*}
where  $X$ is any random variable with a density $f$ and $G$ is a non-degenerate Gaussian with density $g$, independent of $X$.
\end{lem}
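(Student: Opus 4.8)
The plan is to reduce to the case of finite Fisher information and then use the standard conditional-expectation (``data processing'') representation of the score function of a convolution. First, if $I(X)=+\infty$ the asserted inequality is trivial, so assume $I(X)<\infty$. In the setup above this means $f$ has a locally absolutely continuous version, and, as recalled before the lemma, $f'\in L^1(\R)$; since also $f\in L^1(\R)$, it follows that $f(x)\to 0$ as $|x|\to\infty$. Write $h=f\star g$ for the density of $X+G$. Because $g$ is $C^\infty$ with bounded, rapidly decaying derivatives, one may differentiate under the integral sign to obtain $h\in C^\infty$ with $h'(y)=\int f(x)\,g'(y-x)\,dx$, and $h>0$ everywhere since $g>0$. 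Integrating by parts in $x$ — the boundary terms vanish because $f(x)g(y-x)\to 0$ as $|x|\to\infty$ — rewrites this as
\[
h'(y)=\int f'(x)\,g(y-x)\,dx .
\]

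Next I would establish the key identity. Let $\rho_f=f'/f$ on $\{f>0\}$ (recall $f'=0$ a.e.\ on $\{f=0\}$) and $\rho_h=h'/h$. From the displayed formula,
\[
\rho_h(y)=\frac{h'(y)}{h(y)}=\int \frac{f'(x)}{f(x)}\cdot\frac{f(x)g(y-x)}{h(y)}\,dx=\E\big[\rho_f(X)\mid X+G=y\big],
\]
because $x\mapsto f(x)g(y-x)/h(y)$ is precisely the conditional density of $X$ given $X+G=y$. Finiteness of $I(X)$ enters exactly here: it guarantees $\rho_f(X)\in L^2\subseteq L^1$, so the conditional expectation is well defined and the interchange above is legitimate. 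The proof then concludes by the conditional Jensen inequality applied to $t\mapsto t^2$ together with the tower property:
\[
I(X+G)=\E\big[\rho_h(X+G)^2\big]=\E\big[\,\E[\rho_f(X)\mid X+G]^2\,\big]\le\E\big[\rho_f(X)^2\big]=I(X),
\]
where the final equality is $\E[\rho_f(X)^2]=\int_{\{f>0\}}f'^2/f\,dx=I(X)$.

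The main obstacle — and the reason this is stated as a lemma with a complete proof rather than merely cited — is carrying out the analytic justifications (differentiation under the integral, the integration by parts, and the identification of the conditional density) with \emph{no} a priori regularity imposed on $f$ beyond what $I(X)<\infty$ automatically forces; in particular one must perform the reduction to $I(X)<\infty$ at the outset, so that $f'$ is genuinely integrable, $f$ decays at infinity, and $\rho_f(X)\in L^2$. This is precisely the ``slight extension'' of Barron's argument \cite{Bar84:tr} alluded to above.
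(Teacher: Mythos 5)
Your proof is correct and follows essentially the same route as the paper's: reduce to $I(X)<\infty$, establish $h'=f'\star g$ (the paper does this via $f(z)=\int_{-\infty}^z f'(t)\,dt$ and Fubini rather than your integration by parts with vanishing boundary terms, but these are interchangeable bookkeeping), identify $h'/h$ as the conditional expectation of the score of $X$ given $X+G$ using the fact that $f'=0$ a.e.\ on $\{f=0\}$, and finish with conditional Jensen. The only cosmetic difference is that the paper verifies the conditional-expectation identity by testing against bounded $m(S)$ and explicitly isolating the $\{f=0\}$ contribution, where you fold that into the definition of $\rho_f$ up front.
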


\begin{proof}
Clearly, we can assume $I(X)<+\infty$. Let $S=X+G$, with density $h(x)$. It is easy to see that $h(x)$ is strictly positive and differentiable, with
\ben
h'(x)=\mathbb{E}g'(x-X),
\een
which can be justified by dominated convergence. By the finiteness of Fisher information, $f'$ is absolutely integrable. Hence
\ben
h'(x)
=\int\int g'(x-z)\{t\leq z\}f'(t)dt dz
=\int g(x-t)f'(t)dt,
\een
which can justified by Fubini's Theorem. We now show that
\be\label{eq:locfcov}
\frac{h'(S)}{h(S)}=\mathbb{E}\bigg[\frac{f'(X)\{f(X)>0\}}{f(X)}\bigg|S\bigg].
\ee
Once \eqref{eq:locfcov} is shown, we can apply conditional version of Jensen's inequality to obtain:
\ben
\bigg(\frac{h'(S)}{h(S)}\bigg)^2\leq\mathbb{E}\bigg[\bigg(\frac{f'(X)\{f(X)>0\}}{f(X)}\bigg)^2\bigg|S\bigg],
\een
which, after taking expectations, will give us the desired result. To show \eqref{eq:locfcov}, it suffices to show:
\be\label{eq:locfcov2}
\mathbb{E}\bigg[m(S)\frac{h'(S)}{h(S)}\bigg]=\mathbb{E}\bigg[\frac{f'(X)\{f(X)>0\}}{f(X)}m(S)\bigg],
\ee
where $m(x)$ is any bounded measurable function. The left side of inequality \eqref{eq:locfcov2} is
\ben\begin{split}
\int m(x)h'(x)dx
&=\int\int m(x)f'(t)g(x-t) dx dt \\
&=\int\int m(z+t)\frac{f'(t)\{f(t)>0\}}{f(t)}f(t)g(z) dz dt \\
&\quad +\int\int m(z+t)f'(t)\{f(t)=0\}g(z) dz dt,
\end{split}\een
where we have used Fubini's Theorem implicitly. The first term on the right side of the above display is precisely the right side
of the inequality \eqref{eq:locfcov2}, while the fact that the second term is zero follows from \eqref{eq:psrmk}.
\end{proof}
\vspace{.1in}

The other one is the lower semicontinuity of Fisher information functional with respect to weak convergence
topology \cite{BCG14:fisher}. Lemma~\ref{lem:fishcov} and the lower semicontinuity allow us to extend
\cite[Lemma 6.1]{Bar84:tr} to $t=0$, which is of some independent interest.

\begin{lem}\label{lem:debext}
Assume that $X$ has finite second moment, and a density with $h(X)>-\infty$. Then
\begin{equation*}
\frac{d h(X+\sqrt{t}Z)}{dt}\bigg|_{t=0}=\frac{1}{2}I(X),
\end{equation*}
where $I(X)$ might be infinity and $Z$ is a standard normal, independent of $X$.
\end{lem}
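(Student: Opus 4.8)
The plan is to reduce the assertion to the single limit statement
\begin{equation}\label{e:fishlimit}
\lim_{t\downarrow 0} I(X+\sqrt t Z)=I(X)
\end{equation}
(where both sides may be $+\infty$), and then to deduce the derivative formula from the classical de~Bruijn identity together with an elementary mean value argument.

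To see that \eqref{e:fishlimit} suffices, first observe that the hypotheses force $h(X)$ to be finite: indeed $h(X)\le\frac12\log(2\pi e\,\VAR(X))<+\infty$ because $X$ has finite second moment, while $h(X)>-\infty$ by assumption. Likewise, for each $t>0$ the random variable $X+\sqrt t Z$ has finite second moment and a bounded density, hence finite differential entropy. Set $\psi(t)=h(X+\sqrt t Z)$. Then $\psi$ is continuous on $(0,\infty)$ in the trivial way, it is continuous at $0$ by Remark~\ref{rmk:entconti}, and by \cite[Lemma~6.1]{Bar84:tr} (de~Bruijn's identity) it is differentiable on $(0,\infty)$ with $\psi'(t)=\frac12 I(X+\sqrt t Z)$. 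Granting \eqref{e:fishlimit}, we get $\lim_{t\downarrow 0}\psi'(t)=\frac12 I(X)$ (possibly $+\infty$); since $\psi$ is real-valued and continuous on $[0,\varepsilon]$ and differentiable on $(0,\varepsilon)$, the mean value theorem gives, for each small $t>0$, a point $\xi_t\in(0,t)$ with $(\psi(t)-\psi(0))/t=\psi'(\xi_t)$, and letting $t\downarrow0$ forces $\xi_t\downarrow0$, so the right derivative of $h(X+\sqrt t Z)$ at $t=0$ exists and equals $\frac12 I(X)$. This last step goes through verbatim when $I(X)=+\infty$.

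It remains to establish \eqref{e:fishlimit}, which I would prove by sandwiching. For the upper bound, I would use that Fisher information is nonincreasing under Gaussian convolution: for $0<s<t$ one may write $X+\sqrt t Z\eqD(X+\sqrt s\,Z_1)+\sqrt{t-s}\,Z_2$ with $Z_1,Z_2$ independent standard normals independent of $X$, and Lemma~\ref{lem:fishcov} yields $I(X+\sqrt t Z)\le I(X+\sqrt s Z)$; taking $s=0$ gives $I(X+\sqrt t Z)\le I(X)$ for all $t>0$. Hence $t\mapsto I(X+\sqrt t Z)$ is nonincreasing on $[0,\infty)$, so $L:=\lim_{t\downarrow0}I(X+\sqrt t Z)=\sup_{t>0}I(X+\sqrt t Z)$ exists and $L\le I(X)$. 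For the matching lower bound, I would use that $X+\sqrt t Z$ converges weakly to $X$ as $t\downarrow0$ together with the lower semicontinuity of the Fisher information functional with respect to weak convergence \cite{BCG14:fisher}, which gives $I(X)\le\liminf_{t\downarrow0}I(X+\sqrt t Z)=L$. Combining the two bounds yields $L=I(X)$, i.e.\ \eqref{e:fishlimit}.

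The one genuinely nontrivial ingredient is the lower bound in \eqref{e:fishlimit}: the monotonicity half is immediate from Lemma~\ref{lem:fishcov}, but the reverse inequality rests on the lower semicontinuity of $I$ under weak convergence, a fact we are importing from \cite{BCG14:fisher}. Everything else is classical once one notes that de~Bruijn's identity is only being invoked on $(0,\infty)$, where all entropies and Fisher informations involved are finite; the content of the lemma is precisely that the monotonicity-plus-lower-semicontinuity sandwich lets one pass to the endpoint $t=0$, where $I(X)$ may be infinite.
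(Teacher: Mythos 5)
Your proof is correct and follows essentially the same route as the paper's: invoke de~Bruijn's identity for $t>0$, use Remark~\ref{rmk:entconti} for continuity at $t=0$, apply the mean value theorem, then sandwich $I(X+\sqrt{t}Z)$ between $I(X)$ (from Lemma~\ref{lem:fishcov}) and the weak lower semicontinuity lower bound from \cite{BCG14:fisher}. The only cosmetic difference is that you isolate $\lim_{t\downarrow0}I(X+\sqrt{t}Z)=I(X)$ as a standalone claim and additionally note monotonicity in $t$, which the paper does not need since only the $s=0$ comparison is used.
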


\begin{proof}
For $t>0$, Lemma 6.1 in  $\cite{Bar84:tr}$ implies that
\begin{equation*}
\frac{d h(X+\sqrt{t}Z)}{dt}=\frac{1}{2}I(X+\sqrt{t}Z).
\end{equation*}
Remark~\ref{rmk:entconti} shows that $h(X+\sqrt{t}Z)$ is continuous at $t=0$. Hence we can apply the mean value Theorem to obtain:
\ben
\lim_{k\rightarrow\infty}\frac{h(X+\sqrt{t_k}Z)-h(X)}{t_k}=\frac{1}{2}\lim_{k\rightarrow\infty}I(X+\sqrt{s_k}Z),
\een
where $t_k\downarrow0,s_k\downarrow0$.
Now by Lemma~\ref{lem:fishcov}, $I(X+\sqrt{s_k}Z)\leq I(X)$. This, combined with lower-semicontinuity of Fisher information, allows us to conclude.
\end{proof}
\vspace{.1in}

The final one is the continuity of the Fisher information functional after Gaussian convolution (see \cite{Joh04:book}).

\begin{lem}\label{lem:fishcon}
Let $G$ be a non-degenerate Gaussian, independent of $X$. Then
$I(X+G)$ is a continuous functional of the density of $X$, with respect to the topology of weak convergence.
\end{lem}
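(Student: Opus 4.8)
The plan is to represent the density of $X+G$ as a convolution against the Gaussian kernel and its derivative, use the resulting smoothness to get pointwise convergence of the density and its derivative, and then control the Fisher information integrand by a sequence of majorants whose total masses are \emph{constant} in the approximating law. Fix $G\sim\mathcal N(0,\sigma^2)$ with $\sigma>0$ and density $g$. For a probability measure $\mu$ on $\mathbb R$, let $h_\mu=g\star\mu$ be the density of the corresponding convolution; $h_\mu$ is $C^\infty$ and everywhere strictly positive, so $I(h_\mu)=\int_{\mathbb R}(h_\mu')^2/h_\mu\,dx$, and $I(X+G)=I(h_{\mu_X})$ where $\mu_X$ is the law of $X$. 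Since the weak topology on probability measures on $\mathbb R$ is metrizable, it suffices to show that $\mu_m\Rightarrow\mu$ implies $I(h_{\mu_m})\to I(h_\mu)$.

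First I would establish pointwise convergence. Differentiating under the integral sign---justified by the boundedness of $g$ and $g'$, exactly as in the proof of Lemma~\ref{lem:fishcov}---gives $h_\mu(x)=\int g(x-y)\,d\mu(y)$ and $h_\mu'(x)=\int g'(x-y)\,d\mu(y)$. For each fixed $x$, the maps $y\mapsto g(x-y)$ and $y\mapsto g'(x-y)$ are continuous, bounded, and vanish at infinity (here $g'(u)=-\sigma^{-2}u\,g(u)$), hence lie in $C_b(\mathbb R)$; therefore $\mu_m\Rightarrow\mu$ yields $h_{\mu_m}(x)\to h_\mu(x)$ and $h_{\mu_m}'(x)\to h_\mu'(x)$ for every $x$, and, using $h_{\mu_m}(x),h_\mu(x)>0$, also $(h_{\mu_m}'(x))^2/h_{\mu_m}(x)\to(h_\mu'(x))^2/h_\mu(x)$ for every $x$.

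The heart of the argument is the domination. Writing $\psi_m(x):=\int(x-y)^2 g(x-y)\,d\mu_m(y)$ and using $g'(u)=-\sigma^{-2}u\,g(u)$ together with the Cauchy--Schwarz inequality with respect to $\mu_m$,
\[
\big(h_{\mu_m}'(x)\big)^2=\frac{1}{\sigma^4}\Big(\int (x-y)\,g(x-y)\,d\mu_m(y)\Big)^2\le\frac{1}{\sigma^4}\,\psi_m(x)\,h_{\mu_m}(x),
\]
so that $0\le(h_{\mu_m}')^2/h_{\mu_m}\le\sigma^{-4}\psi_m$ pointwise. Now $u\mapsto u^2 g(u)$ is again continuous, bounded and vanishing at infinity, so $\psi_m(x)\to\psi(x):=\int(x-y)^2 g(x-y)\,d\mu(y)$ for every $x$; and by Tonelli $\int_{\mathbb R}\psi_m(x)\,dx=\int_{\mathbb R}u^2 g(u)\,du=\sigma^2$ for every $m$, and likewise $\int\psi\,dx=\sigma^2$. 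A sequence of nonnegative functions that converges pointwise with constant integral converges in $L^1$ (Scheff\'e's lemma), so $\psi_m\to\psi$ in $L^1$.

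Finally, I would invoke the version of the dominated convergence theorem that permits the dominating function to vary: the integrands $(h_{\mu_m}')^2/h_{\mu_m}$ converge pointwise to $(h_\mu')^2/h_\mu$, are dominated by $\sigma^{-4}\psi_m$, and these majorants converge in $L^1$ to $\sigma^{-4}\psi$; hence $I(h_{\mu_m})=\int(h_{\mu_m}')^2/h_{\mu_m}\,dx\to\int(h_\mu')^2/h_\mu\,dx=I(h_\mu)$, which is the claim. The only delicate point is the choice of majorant: there is no obvious single integrable dominating function, and everything hinges on the observation that the Cauchy--Schwarz majorant $\sigma^{-4}\psi_m$ has total mass $\sigma^{-2}$ independent of $m$ (so in particular $I(X+G)\le\sigma^{-2}<\infty$), which is exactly what makes Scheff\'e's lemma and the generalized dominated convergence theorem applicable.
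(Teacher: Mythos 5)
Your argument is correct, and it fills in something the paper does not actually prove: the paper only cites \cite{Joh04:book} for this lemma, so there is no ``paper proof'' to compare against directly. Your route is a clean, self-contained one. The three ingredients all check out: (i) for fixed $x$ the maps $y\mapsto g(x-y)$, $y\mapsto g'(x-y)$, $y\mapsto (x-y)^2g(x-y)$ are bounded and continuous, so weak convergence gives pointwise convergence of $h_{\mu_m}$, $h_{\mu_m}'$, and $\psi_m$; (ii) the Cauchy--Schwarz inequality in $L^2(\mu_m)$, splitting $(x-y)g(x-y)=\big[(x-y)\sqrt{g(x-y)}\,\big]\cdot\sqrt{g(x-y)}$, gives the pointwise bound $(h_{\mu_m}')^2/h_{\mu_m}\le \sigma^{-4}\psi_m$; and (iii) Tonelli gives $\int\psi_m\,dx=\sigma^2$ independently of $m$, so Scheff\'e yields $\psi_m\to\psi$ in $L^1$ and Pratt's generalized dominated convergence theorem finishes the job. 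The observation that the Cauchy--Schwarz majorant has total mass independent of the law is exactly the right pivot, and as you note it also gives the universal bound $I(X+G)\le\sigma^{-2}$ for free.

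Two small remarks. First, ``vanishing at infinity'' is not needed anywhere---boundedness and continuity of the test functions are all that weak convergence requires---so you can drop those clauses. Second, the reference \cite{Joh04:book} proves this fact via the conditional-expectation (score/MMSE) representation $\rho_Y(y)=-\sigma^{-2}\big(y-\mathbb{E}[X\mid Y=y]\big)$ for $Y=X+G$; your proof bypasses that probabilistic representation entirely, working at the level of the convolved density and its derivative, which arguably makes the domination step more transparent and gives the $\sigma^{-2}$ bound immediately. Either approach is adequate for the use the paper makes of this lemma in the proof of Corollary~\ref{cor:fishdec}, where only $L^1$ convergence of densities (which is stronger than weak convergence) is invoked.
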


\begin{cor}\label{cor:fishdec}
For any density $f$, let $X$ be a random vector distributed according to $f$ and $X^*$ distributed according to $f^*$.
Then
\begin{equation*}
I(X)\geq I(X^*).
\end{equation*}
\end{cor}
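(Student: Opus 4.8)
The plan is to combine the de Bruijn--type identity of Lemma~\ref{lem:debext} with the Main Theorem. First I would dispose of the trivial case $I(X)=+\infty$, and then record that in dimension one $I(X)<+\infty$ forces $f$ to be bounded (since $\int|f'|=\int_{\{f>0\}}(|f'|/\sqrt f)\sqrt f\le\sqrt{I(f)}$ and $f\to0$ at $\pm\infty$), so that $\int(f\log f)^+<+\infty$; hence $h(X)$ is well defined with $h(X)>-\infty$, and by Lemma~\ref{lem:ent-pres} so is $h(X^*)$, with $h(X^*)=h(X)$.

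Next I would treat the case where, in addition, $X$ has finite variance, so that $X^*$ does too by Lemma~\ref{lem:var}. Let $Z$ be a standard normal independent of everything and let $g_t$ be the density of $\sqrt t\,Z$. Since $g_t$ is spherically symmetric decreasing, $g_t^*=g_t$, so the Main Theorem applied to the densities $f$ and $g_t$ (in the form of Theorem~\ref{thm:main1}, or of Theorem~\ref{thm:mainver2}, using that the relevant convolutions have finite second moment) gives, for every $t>0$,
\[
h(X+\sqrt t\,Z)=h(f\star g_t)\ \ge\ h(f^*\star g_t)=h(X^*+\sqrt t\,Z).
\]
Both $X$ and $X^*$ satisfy the hypotheses of Lemma~\ref{lem:debext} (finite second moment, entropy bounded below), so subtracting the equal quantities $h(X)=h(X^*)$, dividing by $t>0$, and letting $t\downarrow0$ in
\[
\frac{h(X+\sqrt t\,Z)-h(X)}{t}\ \ge\ \frac{h(X^*+\sqrt t\,Z)-h(X^*)}{t}
\]
yields $\tfrac12 I(X)\ge\tfrac12 I(X^*)$.

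To remove the finite-variance assumption I would approximate $f$ by the Gaussian-tilted densities $g_\delta(x)=c_\delta f(x)e^{-\delta x^2/2}$, $\delta>0$, with $c_\delta\to1$ the normalizing constant. Each $g_\delta$ has Gaussian tails, hence finite variance, and from $g_\delta'/g_\delta=f'/f-\delta x$ one checks that $I(g_\delta)<+\infty$ and, after expanding the square and performing a tail split that uses only $\int f=1$, that $\limsup_{\delta\downarrow0}I(g_\delta)\le I(f)$. By the finite-variance case, $I(g_\delta^*)\le I(g_\delta)$; and since $g_\delta\to f$ in $L^1$ (hence weakly) and, by Lemma~\ref{lem:L1}, $\|g_\delta^*-f^*\|_1\to0$ as well, the lower semicontinuity of Fisher information with respect to weak convergence \cite{BCG14:fisher} gives
\[
I(f^*)\ \le\ \liminf_{\delta\downarrow0}I(g_\delta^*)\ \le\ \liminf_{\delta\downarrow0}I(g_\delta)\ \le\ I(f).
\]

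The easy half is the finite-variance case, where Lemma~\ref{lem:debext} converts Fisher information into a one-sided derivative of entropy and the Main Theorem supplies the needed entropy inequality at every noise level $t$. The hard part will be the last paragraph: one must check that the Gaussian tilting keeps Fisher information finite and does not increase it in the limit (the quantitative bound $\limsup_{\delta\downarrow0}I(g_\delta)\le I(f)$), and that lower semicontinuity of $I$ together with $L^1$-continuity of rearrangement combine to transfer the inequality to the limit. A little care is also needed in arranging, via the reductions to bounded $f$ and then to finite variance, that Lemma~\ref{lem:debext} is genuinely applicable given only the hypothesis $I(X)<+\infty$.
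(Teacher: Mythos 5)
Your proof is correct, and the finite-variance core is identical to the paper's: apply the $p=1$ case of the Main Theorem to $f\star g_t$ and $f^*\star g_t$, subtract $h(X)=h(X^*)$, and use Lemma~\ref{lem:debext} to pass to the Fisher-information inequality as $t\downarrow 0$. Where you genuinely differ is in the preliminary reduction and the approximation that removes the moment hypothesis. You first observe (via $\int|f'|\le\sqrt{I(f)}$ by Cauchy--Schwarz, plus $f'=0$ a.e.\ on $\{f=0\}$ and $f\in L^1$) that finiteness of $I(X)$ automatically forces $f$ to be bounded and $h(X)>-\infty$, so that only the finiteness of the second moment remains to be dispensed with; the paper instead just assumes finite second moment and $h(X)>-\infty$ in its first step and handles all remaining cases by approximation. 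For the approximation itself, you tilt by a Gaussian, $g_\delta=c_\delta f\,e^{-\delta x^2/2}$, compute $g_\delta'/g_\delta=f'/f-\delta x$, and check by elementary estimates (the bounds $|\delta x e^{-\delta x^2/2}|\le\sqrt{\delta}e^{-1/2}$ and $\delta x^2 e^{-\delta x^2/2}\le 2/e$, which kill the cross and quadratic terms) that $\lim_{\delta\downarrow 0}I(g_\delta)=I(f)$; then $L^1$-convergence of $g_\delta$ to $f$ (and of $g_\delta^*$ to $f^*$ via Lemma~\ref{lem:L1}) together with lower semicontinuity of $I$ closes the argument in a single limit. The paper instead approximates $f$ by compactly supported continuous densities $g_n$, proves the inequality for $X_n+\sqrt t Z$, and then takes a double limit $n\to\infty$ followed by $t\to 0$, which requires invoking both Lemma~\ref{lem:fishcov} (Gaussian smoothing decreases $I$) and Lemma~\ref{lem:fishcon} (continuity of $I$ after Gaussian convolution) in addition to lower semicontinuity. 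Your route is arguably cleaner: the tilting preserves local absolute continuity and boundedness, yields Gaussian tails (hence finite variance) for free, and replaces the two extra convolution lemmas with the explicit $\limsup$ computation, at the cost of a small but necessary calculation that you correctly flag as the point requiring care.
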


\begin{proof}
We first assume that $X$ has finite second moment and that $h(X)>-\infty$. The $p=1$ case of the Main Theorem implies that:
\begin{equation*}
h(X+\sqrt{t}Z)\geq h(X^*+\sqrt{t}Z).
\end{equation*}
where $Z$ is a standard normal and all random variables are independent.
By Lemma~\ref{lem:preser-ent} and our assumption,
\begin{equation}
-\infty<h(X)=h(X^*)<+\infty.
\end{equation}.
This further implies:
\begin{equation*}
\frac{h(X+\sqrt{t}Z)-h(X)}{t}\geq\frac{h(X^*+\sqrt{t}Z)-h(X^*)}{t}.
\end{equation*}
Now we can apply Lemma~\ref{lem:debext} to conclude (note that by Lemma~\ref{lem:finvar}, $X^*$ also has finite second moment).

In the general case, we use an approximation argument. Specifically, note that continuous functions with compact support is dense in the space $L_1(\mathbb{R})$. Hence we can find $f_n\in C_c(\mathbb{R})$, such that:
\begin{equation*}
\|f_n-f\|_1\rightarrow0.
\end{equation*}
Then we have
\begin{equation*}
\|f^+_n-f\|_1\rightarrow0.
\end{equation*}
and
\begin{equation*}
\bigg\|\frac{f^+_n}{\|f^+_n\|_1}-f\bigg\|_1\rightarrow0.
\end{equation*}
Hence we conclude we can find a sequence of densities $g_n\in C_c(\mathbb{R})$, such that:
\begin{equation*}
\|g_n-f\|_1\rightarrow0.
\end{equation*}
We now show that
\be\label{eq:localps1}
I(X_n+\sqrt{t}Z)\geq I((X_n+\sqrt{t}Z)^*),
\ee
where $X_n$ has density $g_n$ and all random variables are independent. Clearly, $X_n+\sqrt{t}Z$ has finite second moment. Since the density of $X_n+\sqrt{t}Z$ is bounded, $h(X_n+\sqrt{t}Z)>-\infty$. Hence \eqref{eq:localps1} follows from what has been proved.

Finally, taking the limit as $n$ goes to infinity and applying Lemma~\ref{lem:fishcon} to the left of \eqref{eq:localps1} and lower-semicontinuity to the right of \eqref{eq:localps1}, we see that (note that the density of $(X_n+\sqrt{t}Z)^*$ converges to that of $(X+\sqrt{t}Z)^*$ in total variation distance due to Lemma~\ref{lem:L1}):
\be\label{eq:localps2}
I(X+\sqrt{t}Z)\geq I((X+\sqrt{t}Z)^*).
\ee
We then apply Lemma~\ref{lem:fishcov} to the left of \eqref{eq:localps2} and lower semicontinuity to the right of \eqref{eq:localps2} to obtain (note that the density of $(X+\sqrt{t}Z)^*$ converges to that of $X^*$ in total variation distance due to Lemma~\ref{lem:L1}):
\begin{equation*}
I(X)\geq I(X^*),
\end{equation*}
by taking the limit as $t$ goes to $0$.
\end{proof}
\vspace{.1in}

\begin{rmk}
Two standard facts are used implicitly about Gaussian convolution in the above:
\begin{equation*}
\|g_t\star g_n-g_t\star f\|_1\leq\|g_n-f\|_1;
\end{equation*}
\begin{equation*}
\|g_t\star f-f\|_1\rightarrow0,
\end{equation*}
as $t$ goes to $0$.
\end{rmk}

\begin{rmk}
The above inequality is completely equivalent to the Polya-Szego inequality for $p=2$ \cite{Bur09:tut}.
Suppose that a non-negative function $g$, locally absolutely continuous, satisfies $\|g\|_2<\infty$ and $\|g'\|_2<\infty$
($g\in\mathbf{H}_2(\mathbb{R})$). We assume for now that $\|g\|_2=1$. Then $f=g^2$ is a probability density and
\ben
I(f)=4\int_{\{x:g(x)>0\}}{g'^2(x)}dx<+\infty.
\een
By the above corollary and Lemma~\ref{lem:increasing}, we know that $f^*$ is also absolutely continuous and that
\ben\begin{split}
f^*&=(g^*)^2, \\
I(f^*)&=\int_{\{x:f^*(x)>0\}}\frac{({f^*}'(x))^2}{f^*(x)}dx\leq I(f)<+\infty.
\end{split}\een
We now show that $g^*=\sqrt{f^*}$ is locally absolutely continuous. Fix $\epsilon>0$, then
\ben
\sqrt{f^*(b)+\epsilon}-\sqrt{f^*(a)+\epsilon}=\frac{1}{2}\int_{a}^{b}\frac{{f^*}'}{\sqrt{f^*+\epsilon}}dx.
\een
But for any non-negative measurable function $f$, if $f$ is differentiable at $x_0$ such that $f(x_0)=0$, then 
we must have
\be\label{eq:psrmk}
f'(x_0)=0.
\ee
(This fact appears as \cite[Proposition 2.1]{BCG14:fisher}, with a complicated proof, but it is actually a simple
consequence of the definitions. Indeed, non-negativity gives us that the left derivative $\partial_{-}f(x_0)\leq 0$ 
and the right derivative $\partial_{+}f(x_0)\geq 0$, while differentiability tells us that $\partial_{-}f(x_0)=\partial_{+}f(x_0)$,
so that both are 0, and hence so is $f'(x_0)$.)
Hence
\ben
\sqrt{f^*(b)+\epsilon}-\sqrt{f^*(a)+\epsilon}=\frac{1}{2}\int_{a}^{b}\{f^*(x)>0\}\frac{{f^*}'(x)}{\sqrt{f^*(x)+\epsilon}}dx.
\een
By dominated convergence (using the finiteness of the Fisher information), we obtain
\ben
\sqrt{f^*(b)}-\sqrt{f^*(a)}=\frac{1}{2}\int_{a}^{b}\{f^*(x)>0\}\frac{{f^*}'(x)}{\sqrt{f^*(x)}}dx,
\een
which implies that $g^*$ is locally absolutely continuous. Hence,
\ben
I(f^*)=4\int_{\{x:g^*(x)>0\}}{{{g^*}'(x)}^2}dx.
\een
Finally, the argument that leads to \eqref{eq:psrmk} shows also that
\ben\begin{split}
I(f^*)&=4\int{{g^*}'^2}dx, \\
I(f) &=4\int{g'^2}dx.
\end{split}\een
Hence we obtain
\be\label{eq:ps}
\int{{g^*}'^2}dx\leq\int{g'^2}dx.
\ee
By Lemma~\ref{lem:preser-ent}, the assumption $\|g\|_2=1$ can be removed easily.
Thus we have shown that if a non-negative function $g\in\mathbf{H}_2(\mathbb{R})$, then \eqref{eq:ps} is true.
This is precisely the Polya-Szego inequality for $p=2$.
\end{rmk}

\begin{rmk}
Here is another perspective on Corollary~\ref{cor:fishdec}. Entirely similar to its proof, we can deduce the
following inequality from Theorem~\ref{thm:epi} (under the same assumptions of Theorem~\ref{thm:epi}):
\be\label{eq:isop-ent}
I(f)\geq I(g)=\frac{1}{N(f)},
\ee
where $g$ is a centered normal density such that $h(g)=h(f)$ and $N(f)=\frac{1}{2\pi e}e^{2h(f)}$, the entropy power of $f$.
This inequality, first proved by Stam \cite{Sta59} (by combining the entropy power inequality and de Bruijn's identity)
is sometimes called the ``isoperimetric inequality for entropy''. Hence, just as our Main Theorem strengthens the entropy power inequality, 
Corollary~\ref{cor:fishdec} can be seen as a strengthening of the  
isoperimetric inequality for entropy since it inserts $I(f^*)$ in between $I(f)$ and $I(g)$ in the inequality \eqref{eq:isop-ent}.
\end{rmk}

\begin{rmk}
It is a folklore fact that the isoperimetric inequality for entropy is related to the Gaussian logarithmic Sobolev inequality 
(usually attributed to Gross \cite{Gro75}, who developed its
remarkable applications to hypercontractivity and mathematical physics) under a finite variance constraint on $f$. Indeed,  
if $\tilde{g}$ is a normal density with the same mean and variance as $f$, then $h(f)=h(\tilde{g})-D(f\|g)$, which means
that the inequality \eqref{eq:isop-ent} can be rewritten as
\ben
N(\tilde{g}) I(f) \geq e^{2D(f\|\tilde{g})} .
\een
Using the fact that $N(\tilde{g})$ is just the variance of $\tilde{g}$ and hence the variance $\sigma^2_f$ of $f$, we have
\ben
D(f\|\tilde{g}) \leq \half \log \big[ \sigma^2_f I(f) \big] . 
\een
Since $\log x \leq x-1$ for $x>0$, we obtain
\be\label{eq:lsi-ent}
D(f\|\tilde{g}) \leq \half J(f) ,
\ee
where 
\ben
J(f) :=  \sigma^2_f I(f) -1 
\een 
is the standardized Fisher information of $f$, which is location and scale invariant. The inequality \eqref{eq:lsi-ent}
is a form of the Gaussian logarithmic Sobolev inequality. (In a related fashion, Toscani \cite{Tos13}
uses Costa's concavity of entropy power \cite{Cos85b} to prove Nash's inequality with the sharp constant.)
\end{rmk}

\section{An Application: Bounding the entropy of the sum of two uniforms}
\label{sec:appln}

\begin{prop}
Let $X$ and $Y$ be independent uniform distributions on two Borel sets $M_1$ and $M_2$, both with finite, non-zero volume. Then the following is true:
\ben
h(X+Y)\geq h(X^*+Y^*)=\log\bigg(B\bigg(\frac{n+1}{2},\frac{1}{2}\bigg)V_n(1){r_1}^n{r_2}^n\bigg)
+\int_0^{r_1+r_2}\log\bigg(\frac{1}{g(r)}\bigg)\frac{ng(r)r^{n-1}}{{r_1}^n{r_2}^nB(\frac{n+1}{2},\frac{1}{2})}dr,
\een
where $B(\cdot,\cdot)$ is the beta function, $V_n(1)$ is the volume of the $n$ dimensional unit ball and
\begin{equation*}
V_n(1){r_i}^n=|M_i|,
\end{equation*}
for $i=1,2$. The function $g(r)$ is defined in the following way:
if $r>|r_1-r_2|$,
\ben
g(r)={r_1}^nh\bigg(\arcsin\bigg(\frac{r^2-{r_2}^2+{r_1}^2}{2rr_1}\bigg)\bigg)
+{r_2}^nh\bigg(\arcsin\bigg(\frac{r^2-{r_1}^2+{r_2}^2}{2rr_2}\bigg)\bigg),
\een
where $h(\theta)=\int_{\theta}^{\frac{\pi}{2}}{\cos^n(x)}dx$ and if $r<|r_1-r_2|$,
\ben
g(r)=\min({r_1}^n,{r_2}^n)B\bigg(\frac{n+1}{2},\frac{1}{2}\bigg).
\een
\end{prop}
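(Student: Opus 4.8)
The inequality $h(X+Y)\ge h(X^*+Y^*)$ is an immediate instance of the Main Theorem (the $p=1$, $k=2$ case of Theorem~\ref{thm:main1}): since $X$ is uniform on $M_1$ with $|M_1|<\infty$, its density $f_1=\mathbb{I}_{M_1}/|M_1|$ is bounded with bounded support, so $\int f_1^{1+\epsilon}<+\infty$ for every $\epsilon>0$; hence both $h(X+Y)$ and $h(X^*+Y^*)$ are well defined and the inequality holds. The content of the proposition is therefore the explicit evaluation of $h(X^*+Y^*)$. To carry this out, I would first identify the rearranged variables: by definition of the rearrangement, $f_i^*=\mathbb{I}_{M_i^*}/|M_i|$ with $M_i^*=B(0,r_i)$ and $V_n(1)r_i^n=|M_i|$, so $X^*$ and $Y^*$ are independent and uniform on $B(0,r_1)$ and $B(0,r_2)$ respectively. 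The density of $X^*+Y^*$ at a point $x$ is then
\[
u(x)=(f_1^*\star f_2^*)(x)=\frac{\big|B(0,r_1)\cap B(x,r_2)\big|}{V_n(1)^2 r_1^n r_2^n},
\]
a radial function supported on $\{\|x\|\le r_1+r_2\}$.

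The key step is computing the volume of the intersection of two balls whose centers lie at distance $r$. When $r\le|r_1-r_2|$ the smaller ball sits inside the larger one and the intersection has volume $V_n(1)\min(r_1,r_2)^n$. When $|r_1-r_2|<r<r_1+r_2$ the intersection is a lens; placing the centers at $0$ and $x=(r,0,\dots,0)$, the hyperplane $\{y_1=a_1\}$ with $a_1=\tfrac{r^2+r_1^2-r_2^2}{2r}$ splits it into a spherical cap of $B(0,r_1)$ (the part $y_1>a_1$) and a spherical cap of $B(x,r_2)$ (the part $y_1<a_1$, which after translating is a cap of $B(0,r_2)$ of the form $w_1>a_2$ with $a_2=r-a_1=\tfrac{r^2-r_1^2+r_2^2}{2r}$); the elementary identity $\|y\|^2=\|y-x\|^2+2ry_1-r^2$ verifies that each cap lies inside the other ball. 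Slicing the cap of a ball of radius $R$ beyond height $a$ by hyperplanes perpendicular to the axis and substituting $t=R\cos\theta$, then $\psi=\tfrac{\pi}{2}-\theta$, gives
\[
\int_a^R V_{n-1}(1)\,(R^2-t^2)^{(n-1)/2}\,dt=V_{n-1}(1)\,R^n\,h\!\left(\arcsin\tfrac{a}{R}\right),\qquad h(\theta)=\int_\theta^{\pi/2}\cos^n x\,dx .
\]
Adding the two caps gives $\big|B(0,r_1)\cap B(x,r_2)\big|=V_{n-1}(1)\,g(\|x\|)$ with $g$ exactly as in the statement. The beta-function identity $V_n(1)=V_{n-1}(1)\,B(\tfrac{n+1}{2},\tfrac{1}{2})$, which follows from $\int_0^{\pi/2}\cos^n x\,dx=\tfrac{1}{2}B(\tfrac{n+1}{2},\tfrac{1}{2})$, shows the two branches of $g$ agree at $r=|r_1-r_2|$ and rewrites $u(x)=g(\|x\|)/C$ with $C=B(\tfrac{n+1}{2},\tfrac{1}{2})\,V_n(1)\,r_1^n r_2^n$.

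Finally I would integrate in polar coordinates, using that the sphere of radius $r$ in $\RL^n$ has surface measure $nV_n(1)r^{n-1}$:
\[
h(X^*+Y^*)=-\int u\log u=-\int_0^{r_1+r_2}\frac{g(r)}{C}\,\big(\log g(r)-\log C\big)\,nV_n(1)\,r^{n-1}\,dr .
\]
Because $\int_0^{r_1+r_2}\tfrac{g(r)}{C}\,nV_n(1)\,r^{n-1}\,dr=\int u=1$, the $\log C$ contribution is exactly $\log C=\log\big(B(\tfrac{n+1}{2},\tfrac{1}{2})V_n(1)r_1^n r_2^n\big)$, and since $nV_n(1)/C=n/\big(B(\tfrac{n+1}{2},\tfrac{1}{2})r_1^n r_2^n\big)$ the remaining term is precisely $\int_0^{r_1+r_2}\log\tfrac{1}{g(r)}\,\tfrac{n\,g(r)\,r^{n-1}}{B(\frac{n+1}{2},\frac{1}{2})\,r_1^n r_2^n}\,dr$, yielding the claimed formula. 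The main obstacle is the geometric step: correctly locating the separating hyperplane of the lens, matching the cap volume to the exact $\arcsin$ arguments appearing in $g$, and pinning down the beta-function normalization; once $u(x)=g(\|x\|)/C$ is established, the rest is bookkeeping with the normalizing constant.
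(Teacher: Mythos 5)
Your proposal is correct and takes essentially the same approach as the paper's proof: invoke the $p=1$ case of the Main Theorem, recognize that $X^*+Y^*$ has the ball-intersection density $|B(x,r_1)\cap B(0,r_2)|/(|M_1||M_2|)$, compute that volume via two spherical caps, and integrate in polar coordinates. You simply fill in the calculus the paper leaves to the reader (the cap-volume substitution $t=R\cos\theta$, the normalization $V_n(1)=V_{n-1}(1)\,B(\tfrac{n+1}{2},\tfrac12)$, and the final bookkeeping), and you add a useful extra check that both entropies are well defined via $\int f_1^{1+\epsilon}<\infty$.
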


\begin{proof}
By the $p=1$ case of the Main Theorem, we get
\ben
h(X+Y)\geq h(X^*+Y^*),
\een
where $X^*$ and $Y^*$ are independent uniform distributions on the balls centered at the origin,
with radius $r_1$ and $r_2$ respectively. We just need to compute explicitly the density of $X^*+Y^*$. This is given by:
\ben
\frac{1}{|M_1||M_2|}\int\mathbb{I}_{M_1^*}(x-y)\mathbb{I}_{M_2^*}(y)dy.
\een

Note that $\int\mathbb{I}_{M_1^*}(x-y)\mathbb{I}_{M_2^*}(y)dy$ is nothing but the volume of the
intersection of two balls $|B(x,r_1)\cap B(0,r_2)|$.
If $\|x\|<|r_2-r_1|$, this volume is that of the smaller ball;
if $\|x\|\geq r_1+r_2$, this volume is zero;
if $|r_2-r_1|\leq\| x\|\leq r_1+r_2$, this volume is the sum of the volumes of two spherical caps.
Then if $\|x\|=r$, the spherical cap with radius $r_2$ will have height $h_2=\frac{r_1^2-(r-r_2)^2}{2r}$
and the cap with radius $r_1$ will have height $h_1=\frac{r_2^2-(r-r_1)^2}{2r}$.
But the volume of a spherical cap with given radius and height is classical and can be
computed from elementary calculus. The rest follows from simple algebra.
\end{proof}
\vspace{.1in}

\begin{cor}\label{cor:dim1}
If $n=1$ and $|M_2|>|M_1|$, then
\ben\begin{split}
h(X+Y) &\geq h(Y)+\frac{|M_1|}{2|M_2|} \\
&\geq\log\bigg(|M_2|+\frac{1}{2}|M_1|\bigg).
\end{split}\een
\end{cor}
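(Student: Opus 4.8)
The plan is to invoke the $p=1$ case of the Main Theorem and then carry out an explicit one-dimensional computation. First I would observe that since $n=1$, the rearrangements $X^*$ and $Y^*$ are uniform on the symmetric intervals $[-r_1,r_1]$ and $[-r_2,r_2]$, where $2r_i=|M_i|$, and the hypothesis $|M_2|>|M_1|$ gives $r_2>r_1$. By the $p=1$ case of the Main Theorem (exactly as in the preceding Proposition), $h(X+Y)\geq h(X^*+Y^*)$, so it suffices to evaluate $h(X^*+Y^*)$ and then compare it with the two claimed lower bounds.

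Next I would write down the density $f$ of $X^*+Y^*$, which is the convolution of the two uniform densities and hence a trapezoid: $f(x)=\frac{1}{2r_2}$ for $|x|\leq r_2-r_1$, $f(x)=\frac{r_1+r_2-|x|}{4r_1r_2}$ for $r_2-r_1\leq |x|\leq r_1+r_2$, and $f(x)=0$ otherwise. Then I would compute $h(X^*+Y^*)=-\int f\log f\,dx$ by splitting the integral into the plateau part and the two symmetric linear ``ramp'' parts. The plateau contributes $\frac{r_2-r_1}{r_2}\log(2r_2)$; for the ramps, the substitution $u=r_1+r_2-|x|$ reduces the computation to elementary integrals of the form $\int_0^{2r_1}u\log u\,du$ and yields $\frac{r_1}{r_2}\log(2r_2)+\frac{r_1}{2r_2}$. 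Adding these gives $h(X^*+Y^*)=\log(2r_2)+\frac{r_1}{2r_2}$. Translating back via $2r_2=|M_2|$ and $2r_1=|M_1|$, and using $h(Y)=\log|M_2|$ for the uniform $Y$, this equals precisely $h(Y)+\frac{|M_1|}{2|M_2|}$, which establishes the first inequality (indeed with the intermediate quantity computed exactly).

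For the second inequality I would simply apply the elementary bound $\log(1+t)\leq t$, valid for $t>-1$, with $t=\frac{|M_1|}{2|M_2|}$:
\ben
\log\Big(|M_2|+\tfrac{1}{2}|M_1|\Big)=\log|M_2|+\log\Big(1+\tfrac{|M_1|}{2|M_2|}\Big)\leq \log|M_2|+\frac{|M_1|}{2|M_2|}=h(Y)+\frac{|M_1|}{2|M_2|}.
\een

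The only point requiring care is the entropy integral itself: one must correctly identify the plateau (which is present exactly because $r_2>r_1$, i.e. $|M_2|>|M_1|$) and perform the substitution cleanly, so that the logarithms involving $4r_1r_2$ collapse to $\log(2r_2)$. There is no conceptual obstacle beyond this — everything reduces to the Main Theorem plus a routine calculus computation. As an alternative to the direct integration, one could also obtain $h(X^*+Y^*)$ by specializing the closed-form expression in the preceding Proposition to $n=1$ (where $V_1(1)=2$ and $B(1,\tfrac12)=2$), but doing the one-dimensional integral directly is more transparent.
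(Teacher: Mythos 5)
Your proposal is correct and matches the paper's own (very terse) proof: the paper simply says ``doing all the integrations explicitly'' for the first inequality and invokes $e^x \geq 1+x$ for the second, which is precisely the computation you carry out, with $\log(1+t)\leq t$ being the same elementary bound in disguise. Your direct evaluation of the entropy of the trapezoidal density $f^*_1\star f^*_2$ (plateau at height $1/(2r_2)$ plus two linear ramps, giving $h(X^*+Y^*)=\log(2r_2)+r_1/(2r_2)$ exactly) is a clean realization of what the paper leaves implicit, and correctly notes that the hypothesis $|M_2|>|M_1|$ is what pins down which interval governs the plateau height.
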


\begin{proof}
The first inequality follows from by doing all the integrations explicitly. The second follows from the simple inequality $e^{x}\geq1+x$.
\end{proof}
\vspace{.1in}

\begin{rmk}
Note that since $X+Y$ is supported in $M_1+M_2$, we have the upper bound
\be\label{eq:localsum1}
|M_1+M_2|\geq e^{h(X+Y)},
\ee
since for all random vectors supported in $M_1+M_2$, the uniform distribution on $M_1+M_2$ maximizes the entropy \cite{CT06:book}.
On the other hand, entropy power will give us the lower bound:
\be\label{eq:localsum2}
e^{\frac{1}{n}h(X+Y)}\geq\sqrt{|M_1|^{\frac{2}{n}}+|M_2|^{\frac{2}{n}}}.
\ee
Combining the inequalities \eqref{eq:localsum1} and \eqref{eq:localsum2} gives an inequality weaker than the
Brunn-Minkowski inequality:
\be\label{eq:localsum3}\begin{split}
|M_1+M_2|^{\frac{1}{n}} &\geq\sqrt{|M_1|^{\frac{2}{n}}+|M_2|^{\frac{2}{n}}} \\
&=|M_2|^{\frac{1}{n}}\sqrt{1+\lambda^2} \\
&=|M_2|^{\frac{1}{n}}(1+\frac{1}{2}\lambda^2+o(\lambda^2)),
\end{split}\ee
where $\lambda=\big(\frac{|M_1|}{|M_2|}\big)^{\frac{1}{n}}$. In contrast, the Brunn-Minkowski inequality gives
\be\label{eq:localsum4}
|M_1+M_2|^{\frac{1}{n}}\geq|M_2|^{\frac{1}{n}}(1+\lambda).
\ee
It is well known and easy to see that \eqref{eq:localsum4} implies the following isoperimetric inequality:
\be\label{eq:isoperi}
\liminf_{\epsilon\downarrow0}\frac{|\epsilon M_1+M_2|-|M_2|}{\epsilon}
\geq n|M_2|\bigg(\frac{|M_1|}{|M_2|}\bigg)^{\frac{1}{n}},
\ee
where the equality holds when $M_1$ and $M_2$ are homothetic convex bodies. On the other hand, \eqref{eq:localsum3} will only give
\ben
\liminf_{\epsilon\downarrow0}\frac{|\epsilon M_1+M_2|-|M_2|}{\epsilon} \geq 0.
\een
Hence, when $\lambda$ is small, \eqref{eq:localsum3} is especially poor compared to \eqref{eq:localsum4}
and is not reflecting the correct behavior as $\lambda$ goes to $0$.
But note that when $n=1$, Corollary~\ref{cor:dim1} implies the following entropic isoperimetric inequality:
\be\label{eq:entisoperi}
\liminf_{\epsilon\downarrow0}\frac{h(Y+\epsilon X)-h(Y)}{\epsilon}
\geq \frac{1}{2}\frac{|M_1|}{|M_2|},
\ee
with equality when $M_1$ and $M_2$ are symmetric intervals.
This implies
\ben
\liminf_{\epsilon\downarrow0}\frac{e^{h(Y+\epsilon X)}-e^{h(Y)}}{\epsilon}\geq\frac{1}{2}|M_2|\frac{|M_1|}{|M_2|}.
\een
Using \eqref{eq:localsum1} again, we get
\ben
\liminf_{\epsilon\downarrow0}\frac{|\epsilon M_1+M_2|-|M_2|}{\epsilon}\geq\frac{1}{2}|M_2|\frac{|M_1|}{|M_2|},
\een
which, although still weaker than \eqref{eq:isoperi}, is reflecting the correct behavior.
\end{rmk}

\begin{rmk}
If $X$ is a uniform distribution on a symmetric interval and if we define
\ben
\widetilde{I}(Y)=\liminf_{\epsilon\downarrow0}\frac{h(Y+\epsilon X)-h(Y)}{\epsilon},
\een
then the inequality \eqref{eq:entisoperi} can be rewritten as
\ben
\widetilde{I}(Y)\geq\widetilde{I}(Y^*).
\een
This is very similar to Corollary~\ref{cor:fishdec}, but with the role of Gaussian replaced by a uniform.
\end{rmk}

\begin{rmk}
In the case where the sets under consideration are convex, much better bounds can be given using
the fact that the convolution of the uniforms yields a $\kappa$-concave measure for $\kappa>0$.
More details, including the definition of $\kappa$-concavity, can be found in \cite{BM12:jfa}.
\end{rmk}

\section{Another application: Entropy of Marginals of L\'{e}vy Processes}
\label{sec:levy}

In this section, we develop some simple applications to L\'{e}vy processes of our results. The main result is Theorem~\ref{thm:levymain}, whose method of proof is very similar to that of \cite[Theorem 1.1]{BM10} and \cite[Lemma 3.2]{DSS13} (the latter deals with the notion of symmetric rearrangements around infinity instead of the usual symmetric rearrangements, though). The key is that L\'{e}vy processes are weak limits of compound Poisson processes. In fact, the inequality \eqref{eq:alreadyknown}, an intermediate step in the proof of our Theorem~\ref{thm:levymain}, is readily implied by \cite[Theorem 1.1]{BM10}, although we give a full proof here for completeness.

\begin{prop}\label{prop:keyofsection}
Suppose the $n$-dimensional process $\{X_t:t\geq 0\}$ can be represented as
\bee
X_t=A^{\frac{1}{2}}\cdot B_t+\sum_{i=1}^{N_t}Y_i,
\ene
where $A$ is a $n$ by $n$ symmetric and strictly positive definite matrix, $B_t$ is the standard $n$ dimensional Brownian motion, $N_t$ is the Poisson process with rate $\lambda\geq0$, independent of the process $B_t$, and $Y_i$s, independent of the processes $B_t$ and $N_t$, are i.i.d. random vectors with density $f$.

We now define the rearranged process $Z_t$ to be:
\bee
Z_t=\text{det}^{\frac{1}{2n}}(A)B_t+\sum_{i=1}^{N_t}Y^*_i,
\ene
where $B_t$ and $N_t$ are as before and $Y^*_i$s, independent of the processes $B_t$ and $N_t$, are i.i.d. random vectors with density $f^*$.

We then have:
\bee
h_p(X_t)\geq h_p(Z_t),
\ene
for $t>0$ and $0< p\leq+\infty$.
\end{prop}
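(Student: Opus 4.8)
The plan is to condition on the number of jumps $N_t$ and thereby reduce the claim to a majorization statement about a countable mixture of convolutions. Write $\pi_m=e^{-\lambda t}(\lambda t)^m/m!$ for the Poisson weights, let $q_t$ be the density of $A^{1/2}B_t$ (the centered Gaussian density with covariance $tA$), and let $\tilde q_t$ be the density of $\det^{1/(2n)}(A)B_t$ (the centered Gaussian density with covariance $t\,\det^{1/n}(A)\,I$); both exist and are bounded precisely because $A$ is strictly positive definite. Conditioning on $N_t=m$ shows that $X_t$ then has density $q_t\star f^{\star m}$ and $Z_t$ has density $\tilde q_t\star(f^*)^{\star m}$, so $X_t$ and $Z_t$ have densities
\ben
u_t=\sum_{m\geq0}\pi_m\,\big(q_t\star f^{\star m}\big),\qquad v_t=\sum_{m\geq0}\pi_m\,\big(\tilde q_t\star(f^*)^{\star m}\big),
\een
each a probability density bounded by $\|q_t\|_\infty=\|\tilde q_t\|_\infty$.

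\textbf{Rearrangement of an ellipsoidal Gaussian.} A key preliminary observation is that $q_t^*=\tilde q_t$. Indeed, the rearrangement of a function is determined by its distribution function $\rho\mapsto|\{x:f(x)>\rho\}|$, and for a centered Gaussian density with covariance $\Sigma$ the superlevel sets are ellipsoids $\{x:x^{\transpose}\Sigma^{-1}x<c\}$, whose volume equals a dimensional constant times $(\det\Sigma)^{1/2}c^{n/2}$; hence this distribution function depends on $\Sigma$ only through $\det\Sigma$. Since $tA$ and $t\,\det^{1/n}(A)\,I$ both have determinant $t^n\det A$, the densities $q_t$ and $\tilde q_t$ have the same rearrangement, and since $\tilde q_t$ is already spherically symmetric and radially decreasing we get $q_t^*=\tilde q_t$.

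\textbf{Majorization.} Next I would show $u_t\prec v_t$. Fix $m$ and view $q_t\star f^{\star m}$ as the convolution of the $m+1$ densities $q_t,f,\dots,f$, whose rearranged counterpart, by the previous step, is $\tilde q_t\star(f^*)^{\star m}$. Applying Theorem~\ref{thm:BLL} with the nonnegative function $\mathbb{I}_C$ together with these densities (so that $\mathbb{I}_C^*=\mathbb{I}_{C^*}$), for any measurable $C$ with $|C|=|B(0,\rho)|$ we obtain $\int_C(q_t\star f^{\star m})\leq\int_{B(0,\rho)}(\tilde q_t\star(f^*)^{\star m})$. Multiplying by $\pi_m$ and summing over $m$ gives $\int_C u_t\leq\int_{B(0,\rho)}v_t$; taking the supremum over all such $C$ and using the representation $\int_{B(0,\rho)}h^*=\sup_{|C|=|B(0,\rho)|}\int_C h$ on the left (and the trivial bound $\int_{B(0,\rho)}v_t\leq\int_{B(0,\rho)}v_t^*$ on the right) yields $\int_{B(0,\rho)}u_t^*\leq\int_{B(0,\rho)}v_t^*$ for every $\rho>0$, that is, $u_t\prec v_t$.

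\textbf{Conclusion.} It remains to invoke Proposition~\ref{prop:gen-id}: with $\phi(x)=x^p$ for $p\in(1,\infty)$, with $\phi(x)=-x^p$ for $p\in(0,1)$, and with $\phi(x)=x\log x$ for $p=1$. In each case $\phi$ is convex on $[0,\infty)$, vanishes and is continuous at $0$, and both $\int\phi(u_t)$ and $\int\phi(v_t)$ are well defined (immediate from the boundedness of $u_t$ and $v_t$ when $p\geq1$, and automatic when $p<1$ since then $\phi\leq0$); unwinding the definition of $h_p$ converts $\int\phi(u_t)\leq\int\phi(v_t)$ into $h_p(X_t)\geq h_p(Z_t)$. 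For $p=\infty$ one lets $p\to\infty$ in the finite-order bound and applies Lemma~\ref{lem:key}(iii). The only step demanding real care is the majorization step --- namely checking that the Poisson mixing is compatible with the partial order $\prec$, i.e. that the set-wise Rogers--Brascamp--Lieb--Luttinger bound survives the average over the number of jumps; everything else is routine given the results already established.
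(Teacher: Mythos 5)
Your proposal is correct and follows essentially the same route as the paper: condition on the number of jumps to express the densities of $X_t$ and $Z_t$ as Poisson mixtures of convolutions, establish the majorization $u_t\prec v_t$ by applying Theorem~\ref{thm:BLL} term-by-term inside the supremum characterization of $\int_{B(0,\rho)}h^*$, and then invoke Proposition~\ref{prop:gen-id}. The one point you spell out that the paper leaves implicit is the verification that the rearrangement of the anisotropic Gaussian density of $A^{1/2}B_t$ is exactly the isotropic Gaussian density of $\det^{1/(2n)}(A)\,B_t$ (via the determinant of the covariance controlling the superlevel-set volumes), which is a worthwhile clarification but does not change the argument.
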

\begin{proof}
Let the density of $A^{\frac{1}{2}}\cdot B_t$ be $g_t$, the density of $X_t$ be $p_t$ and the density of $Z_t$ be $q_t$. Then
\bee
p_t=e^{-\lambda t}\sum_{k=0}^{\infty}\frac{(\lambda t)^k}{k!}g_t\star\underbrace{f\star f\star\cdot\cdot\cdot\star f}_k,
\ene
and
\bee
q_t=e^{-\lambda t}\sum_{k=0}^{\infty}\frac{(\lambda t)^k}{k!}g^*_t\star\underbrace{f^*\star f^*\star\cdot\cdot\cdot\star f^*}_k.
\ene
It suffices to show $p_t\prec q_t$ by Proposition~\ref{prop:gen-id}. We recall the following representation \cite{Bur09:tut} (which has been used several times in Section~\ref{sec:major}):
\begin{equation}\label{eq:rep}
\int_{B(0,r)}f^*(x)dx=\sup_{|C|=|B(0,r)|}\int_C f(x)dx,
\end{equation}
for any density $f$, where $B(0,r)$ is the open ball with radius $r$.  Hence
\bee
\int_{B(0,r)}p^*_t(x)dx=\sup_{|C|=|B(0,r)|}\sum_{k=0}^{\infty}e^{-\lambda t}\frac{(\lambda t)^k}{k!}\int_C g_t\star\underbrace{f\star f\star\cdot\cdot\cdot\star f}_k dx.
\ene
By Theorem~\ref{thm:BLL},
\bee
\int_C g_t\star\underbrace{f\star f\star\cdot\cdot\cdot\star f}_k dx\leq\int_{C^*} g^*_t\star\underbrace{f^*\star f^*\star\cdot\cdot\cdot\star f^*}_k dx
\ene
Since $C^*=B(0,r)$ by definition, we get
\be\label{eq:alreadyknown}
\int_{B(0,r)}p^*_t(x)dx\leq\sum_{k=0}^{\infty}e^{-\lambda t}\frac{(\lambda t)^k}{k!}\int_{B(0,r)} g^*_t\star\underbrace{f^*\star f^*\star\cdot\cdot\cdot\star f^*}_k dx=\int_{B(0,r)}q_t(x)dx.
\ee
By ~\eqref{eq:rep} again,
\bee
\int_{B(0,r)}q_t
(x)dx\leq\int_{B(0,r)}q^*_t(x)dx.
\ene
We finally get:
\bee
\int_{B(0,r)}p^*_t(x)dx\leq\int_{B(0,r)}q^*_t(x)dx,
\ene
which shows $p_t\prec q_t$.
\end{proof}

From now until the end of this section, we assume that the Brownian part of the standard \levy process $\{X_t\}$ 
is non-degenerate and that the \levy measure $\nu$ is locally absolutely continuous. Hence
\bee
\nu(C)=\int_{C}m(x)dx
\ene
for any Borel set $C\subseteq\mathbb{R}^n\setminus\{0\}$. Since $\nu$ is a \levy measure, we have \cite{Pro05:book}:
\bee
\int_{\mathbb{R}^n\setminus\{0\}}\min(1,|x|^2)m(x)dx<+\infty.
\ene
Note that for $t>0$,
\bee
\int_{\{x:m(x)>t\}}dx\leq1+\frac{1}{t}\int_{|x|\geq1}m(x)dx<+\infty.
\ene
Hence $m^*(x)$ is well defined. We now show $\nu^*(dx)=m^*(x)dx$ also defines a \levy measure.
\begin{lem}\label{lem:levymeasure}
\bee
\int_{\mathbb{R}^n\setminus\{0\}}\min(1,|x|^2)m^*(x)dx<+\infty.
\ene
\end{lem}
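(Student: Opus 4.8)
The plan is to establish the stronger comparison
\begin{equation*}
\int_{\mathbb{R}^n}\min(1,\|x\|^2)\,m^*(x)\,dx\ \le\ \int_{\mathbb{R}^n}\min(1,\|x\|^2)\,m(x)\,dx ,
\end{equation*}
after which the lemma follows at once, since the right-hand side is finite by the hypothesis that $\nu$ is a L\'evy measure. Morally this is the $k=1$ instance of Lemma~\ref{lem:finvar} applied to the radially increasing weight $\min(1,\|x\|^2)$, but because $m$ need not be integrable it cannot be normalized to a probability density, so I would run the argument directly at the level of distribution functions.

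The first step is a pair of elementary reductions. Using the identity $\min(1,\|x\|^2)=\int_0^1\{\|x\|>\sqrt{s}\}\,ds$ together with Tonelli's theorem, for any nonnegative measurable $h$ one has
\begin{equation*}
\int_{\mathbb{R}^n}\min(1,\|x\|^2)\,h(x)\,dx=\int_0^1\Big(\int_{\{\|x\|>\sqrt{s}\}}h(x)\,dx\Big)\,ds ,
\end{equation*}
so it is enough to show $\int_{\{\|x\|>r\}}m^*(x)\,dx\le\int_{\{\|x\|>r\}}m(x)\,dx$ for each $r\in(0,1)$. The right-hand side here is itself finite: on $\{r<\|x\|<1\}$ we have $m(x)\le r^{-2}\|x\|^2 m(x)$, while on $\{\|x\|\ge 1\}$ we have $m(x)=\min(1,\|x\|^2)m(x)$, so $\int_{\{\|x\|>r\}}m(x)\,dx\le(1+r^{-2})\int\min(1,\|x\|^2)m(x)\,dx<\infty$.

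For the comparison itself I would use the layer-cake representation on both sides, writing $\int_{\{\|x\|>r\}}h(x)\,dx=\int_0^\infty\big|\{h>t\}\setminus B(0,r)\big|\,dt$. Recall from the Remark following Lemma~\ref{lem:basic} that $|\{m^*>t\}|=|\{m>t\}|$ for every $t$, and from the computation recorded just before the statement of the present lemma that $|\{m>t\}|<\infty$ for every $t>0$. Fix such a $t$; by Lemma~\ref{lem:basic}, $\{m^*>t\}=\{m>t\}^*$ is an open ball $B(0,\rho_t)$ with $|B(0,\rho_t)|=|\{m>t\}|$, whence
\begin{equation*}
\big|\{m^*>t\}\cap B(0,r)\big|=\min\big(|\{m>t\}|,\,|B(0,r)|\big)\ \ge\ \big|\{m>t\}\cap B(0,r)\big| .
\end{equation*}
Subtracting this from $|\{m^*>t\}|=|\{m>t\}|<\infty$ gives $\big|\{m^*>t\}\setminus B(0,r)\big|\le\big|\{m>t\}\setminus B(0,r)\big|$; integrating over $t\in(0,\infty)$ and invoking the two reductions above completes the proof. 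The only point requiring care — and the reason the argument is phrased through the finite-measure superlevel sets $\{m>t\}$, $t>0$, rather than through $m$ directly — is that $\|m\|_1$ and $\|m^*\|_1$ may both be infinite, so one cannot deduce the needed estimate by complementation from the mass-concentration inequality $\int_{B(0,r)}m^*\ge\int_{B(0,r)}m$.
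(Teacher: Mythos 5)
Your proof is correct, and it takes a genuinely different — in fact more elementary — route than the paper. The paper truncates $m$ to $m_n(x)=m(x)\{|x|>\frac1n\}$, which has finite total mass and can therefore be normalized to a probability density, applies Lemma~\ref{lem:finvar} (itself proved via the Rogers--Brascamp--Lieb--Luttinger inequality) to $m_n$, then shows $m_n^*\uparrow m^*$ by a superlevel-set argument and concludes by monotone convergence. You instead reduce to the tail-mass comparison $\int_{\{\|x\|>r\}}m^*\le\int_{\{\|x\|>r\}}m$ and prove it directly from the layer-cake formula together with the elementary geometric fact that for a ball $B$ centered at the origin, $|B\cap B(0,r)|=\min(|B|,|B(0,r)|)\ge|A\cap B(0,r)|$ whenever $|A|=|B|$. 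This bypasses Lemma~\ref{lem:finvar} and the BLL machinery entirely, replacing a truncation-plus-monotone-convergence argument with a one-pass computation at the level of superlevel sets. Your closing remark correctly identifies the subtlety that forces the argument to go through the finite-measure sets $\{m>t\}$, $t>0$, rather than through $m$ or $m^*$ directly (both of which may be non-integrable): the complementation step $|\{m^*>t\}\setminus B(0,r)|=|\{m^*>t\}|-|\{m^*>t\}\cap B(0,r)|$ requires $|\{m^*>t\}|=|\{m>t\}|<\infty$, which was recorded just before the lemma. What your approach buys is transparency and self-containment; what the paper's approach buys is reuse of the probabilistic moment inequality already on hand, with the $m_n^*\uparrow m^*$ lemma as a reusable byproduct.
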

\begin{proof}
Define $m_n(x)=m(x)\{x>\frac{1}{n}\}$. Then $\int m_n(x)dx<+\infty$. Since rearrangement preserves $L_p$ norm (Lemma \ref{lem:preser-ent}),
\bee
\int m_n(x)dx=\int m_n^*(x)dx.
\ene
By Lemma~\ref{lem:finvar}, we get:
\bee
\int_{\mathbb{R}^n\setminus\{0\}}\min(1,|x|^2)m_n^*(x)dx\leq\int_{\mathbb{R}^n\setminus\{0\}}\min(1,|x|^2)m_n(x)dx\leq\int_{\mathbb{R}^n\setminus\{0\}}\min(1,|x|^2)m(x)dx
\ene
Note that
\bee
m_n^*(x)=\int_0^{\infty}\mathbb{I}_{(S_t\cap\{x:x>\frac{1}{n}\})^*}(x)dt,
\ene
where $S_t=\{x:m(x)>t\}$. By monotone convergence,
\bee
\bigg|\bigg(S_t\cap\bigg\{x:x>\frac{1}{n}\bigg\}\bigg)^*\bigg|
=\bigg|S_t\cap\bigg\{x:x>\frac{1}{n}\bigg\}\bigg| 
\,\,\uparrow \, |S_t|
\,=\, |S^*_t|.
\ene
Since $(S_t\cap\{x:x>\frac{1}{n}\})^*$, $n=1,2,\cdot\cdot\cdot$, are open balls, we must have $\mathbb{I}_{(S_t\cap\{x:x>\frac{1}{n}\})^*}(x)\uparrow\mathbb{I}_{S^*_t}(x)$. By monotone convergence and the definition of rearrangement again, we obtain
\bee
m_n^*(x)\uparrow m^*(x).
\ene
Another application of monotone convergence will give:
\bee
\int_{\mathbb{R}^n\setminus\{0\}}\min(1,|x|^2)m^*(x)dx\leq\int_{\mathbb{R}^n\setminus\{0\}}\min(1,|x|^2)m(x)dx
\ene
\end{proof}
We now recall the \levy-Khinchine formula \cite{Pro05:book}. For any \levy process $\{X_t\}$, we have:
\bee
\mathbb{E}[e^{iu\cdot X_t}]=e^{-t\psi(u)},
\ene
where 
\ben
\psi(u)=\frac{1}{2}(Au,u)-i\gamma\cdot u+\int(1-e^{iu\cdot x}+iu\cdot x\mathbb{I}_\{|x|\leq1\})\nu(dx) .
\een 
We call $(A,\gamma,\nu)$ the \levy triple of $X_t$. We define the rearranged process $Z_t$ to be a \levy process with 
\levy triple $(|A|^{\frac{1}{n}}\mathbb{I}_{n\times n},0,\nu^*)$. We can show the following
\begin{thm}\label{thm:levymain}
Suppose $A$ is non-degenerate, then
\bee
h_p(X_t)\geq h_p(Z_t),
\ene
where $0<p\leq 1$ and $t>0$. Moreover, if $0<t_1<t_2<\cdot\cdot\cdot<t_n$, then
\bee
h(X_{t_1},X_{t_2},\cdot\cdot\cdot,X_{t_n})\geq h(Z_{t_1},Z_{t_2},\cdot\cdot\cdot,Z_{t_n}).
\ene
\end{thm}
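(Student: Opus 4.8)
The plan is to deduce Theorem~\ref{thm:levymain} from Proposition~\ref{prop:keyofsection} by approximating a general L\'evy process by compound Poisson processes, and to convert the resulting majorization into an entropy inequality via Proposition~\ref{prop:gen-id}. For the single-time statement, let $p_t$ and $q_t$ be the densities of $X_t$ and $Z_t$; these exist and are bounded for $t>0$ since the Gaussian parts of both processes are nondegenerate. It suffices to prove the majorization $p_t\prec q_t$: Proposition~\ref{prop:gen-id} applied with $\phi(x)=-x^p$ when $0<p<1$ (respectively $\phi(x)=x\log x$ when $p=1$) --- each convex, vanishing and continuous at $0$, with both integrals well defined because $p_t,q_t$ are bounded --- gives $\int p_t^p\,dx\ge\int q_t^p\,dx$ (respectively $\int p_t\log p_t\,dx\le\int q_t\log q_t\,dx$), which is exactly $h_p(X_t)\ge h_p(Z_t)$.

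To obtain $p_t\prec q_t$, write the L\'evy triple of $X$ as $(A,\gamma,\nu)$ with $\nu(dx)=m(x)\,dx$, and set $m_k=m\,\mathbb{I}_{\{|x|>1/k\}}$, $\nu_k(dx)=m_k(x)\,dx$. Let $X^{(k)}$ be the L\'evy process with triple $(A,\gamma,\nu_k)$; since $\nu_k$ is finite, $X^{(k)}_t$ is, up to a deterministic translation, of the form $A^{1/2}B_t+\sum_{i=1}^{N_t}Y_i$ with the $Y_i$ i.i.d.\ of density $f_k:=m_k/\|m_k\|_1$, exactly the setting of Proposition~\ref{prop:keyofsection}. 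Let $Z^{(k)}$ be the L\'evy process with triple $(\det(A)^{1/n}I_n,0,\nu_k^*)$, $\nu_k^*(dx)=m_k^*(x)\,dx$: its jump density is $m_k^*/\|m_k\|_1=(f_k)^*$ (rearrangement commutes with multiplication by a positive constant), and the covariance $t\det(A)^{1/n}I_n$ of its Gaussian part is the rearrangement partner of $tA$ (the rearrangement of a centered Gaussian density of covariance $\Sigma$ is the centered Gaussian density of covariance $\det(\Sigma)^{1/n}I_n$, since the level sets are concentric ellipsoids of equal volume). Hence Proposition~\ref{prop:keyofsection}, whose proof in fact establishes the majorization, gives $p^{(k)}_t\prec q^{(k)}_t$ for every $k$ (deterministic translations being irrelevant for rearrangements).

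Now let $k\to\infty$. The L\'evy--Khinchine exponents converge pointwise: for $X$ because $|1-e^{iu\cdot x}+iu\cdot x|=O(|u|^2|x|^2)$ on $\{|x|\le1/k\}$ and $\int_{\{|x|\le1\}}|x|^2\,\nu(dx)<\infty$; for $Z$ because $m_k^*\uparrow m^*$ (shown in the proof of Lemma~\ref{lem:levymeasure}) and the integrand is dominated by $C(1+|u|^2)\min(|x|^2,1)\in L^1(\nu^*)$ by Lemma~\ref{lem:levymeasure}. Thus $X^{(k)}_t\to X_t$ and $Z^{(k)}_t\to Z_t$ in law. Furthermore $\mathrm{Re}\,\psi_k(u)\ge\frac{1}{2}(Au,u)\ge c|u|^2$ uniformly in $k$ (similarly for $Z$), so the characteristic functions are dominated by the fixed integrable function $e^{-ct|u|^2}$, and Fourier inversion together with dominated convergence gives $p^{(k)}_t\to p_t$ and $q^{(k)}_t\to q_t$ pointwise, hence --- being probability densities --- in $L^1$ by Scheff\'e's lemma. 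Finally, majorization survives $L^1$ limits: from $\int_{B(0,r)}h^*\,dx=\sup_{|C|=|B(0,r)|}\int_C h\,dx$ and $|\int_C h^{(k)}\,dx-\int_C h\,dx|\le\|h^{(k)}-h\|_1$, passing to the limit in $\int_{B(0,r)}(p^{(k)}_t)^*\,dx\le\int_{B(0,r)}(q^{(k)}_t)^*\,dx$ yields $p_t\prec q_t$, proving the single-time inequality.

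For the joint statement with $0<t_1<\cdots<t_n$, the map $(x_1,\dots,x_n)\mapsto(x_1,x_2-x_1,\dots,x_n-x_{n-1})$ preserves Lebesgue measure and the increments are independent and stationary, so $h(X_{t_1},\dots,X_{t_n})=h(X_{t_1})+\sum_{j=2}^n h(X_{t_j-t_{j-1}})$ and likewise for $Z$; applying the single-time inequality with $p=1$ to $X_{t_1}$ and to each $X_{t_j-t_{j-1}}$ (whose rearranged marginal is $Z_{t_j-t_{j-1}}$) and summing gives the result, all entropies being bounded below (the densities are bounded) so that no $\infty-\infty$ ambiguity arises. I expect the main obstacle to be this limiting step: the truncations must be chosen so that $X^{(k)}$ and $Z^{(k)}$ keep exactly the rearrangement pairing Proposition~\ref{prop:keyofsection} needs --- note $(\nu_k)^*\ne(\nu^*)_k$, so one genuinely needs the monotone convergence $m_k^*\uparrow m^*$ rather than a naive truncation of $\nu^*$ --- and the densities must be controlled strongly enough ($L^1$) to push the majorization through the limit, which is exactly where nondegeneracy of $A$ is used, via the uniform Gaussian bound on the characteristic functions.
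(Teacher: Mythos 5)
Your proof is correct, and the limit passage is genuinely different from the paper's. Both arguments truncate the L\'evy measure to $\nu_k(dx)=m(x)\mathbb{I}_{\{|x|>1/k\}}\,dx$ and invoke Proposition~\ref{prop:keyofsection} at the compound-Poisson level; the difference is how the truncation is removed. The paper splits $X_t=X^n_t+Y^n_t$ with $Y^n$ independent and carrying the small jumps, uses Lemma~\ref{lem:first} (convolving a bounded density with a probability measure raises $h_p$) to get $h_p(X_t)\geq h_p(X^n_t)$ with no limit on the $X$ side, and then closes the $Z$ side with Lemma~\ref{lem:second}, a lower-semicontinuity statement for $h_p(\mu\star g)$ in the weak topology. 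You instead pass to the limit \emph{at the level of majorization}: you get $L^1$ convergence of both sequences of densities from the uniform Gaussian domination $|\widehat{p^{(k)}_t}(u)|\leq e^{-ct|u|^2}$ (this is precisely where non-degeneracy of $A$ enters, just as in the paper), Fourier inversion, and Scheff\'e, and then observe via $\int_{B(0,r)}h^*=\sup_{|C|=|B(0,r)|}\int_C h$ that the relation $\prec$ is stable under $L^1$ convergence, after which Proposition~\ref{prop:gen-id} converts $p_t\prec q_t$ into the entropy inequality. This buys you something the paper's route cannot give: because Fatou's lemma yields only \emph{upper} semicontinuity of $h_p$ for $p>1$, Lemma~\ref{lem:second} is genuinely restricted to $0<p\leq1$, whereas your majorization-in-the-limit argument, combined with Proposition~\ref{prop:gen-id} applied to $\phi(x)=x^p$, proves $h_p(X_t)\geq h_p(Z_t)$ for \emph{all} $p\in(0,\infty]$ (with $p=\infty$ read off directly from $p_t\prec q_t$). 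Your treatment of the joint-time inequality via the unimodular change of variables and stationary independent increments is the same computation as the paper's chain-rule/Markov argument, and your remark about the pairing $(\nu_k)^*\neq(\nu^*)_k$ correctly identifies why the monotone convergence $m_k^*\uparrow m^*$ from Lemma~\ref{lem:levymeasure} is needed rather than a naive truncation of $\nu^*$.
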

To show this, we need two additional lemmas:
\begin{lem}\label{lem:first}
For $p\in(0,1)\cup(1,\infty)$, $h_p(f\star \mu)\geq h_p(f)$, for any two density $f$ and any probability measure $\mu$. If $f$ is bounded, then this is also true for $p=1$.
\end{lem}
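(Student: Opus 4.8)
The plan is to reduce all three cases to a single statement: for every convex $\phi$ on $[0,\infty)$ with $\phi(0)=0$ which is continuous at $0$,
\[
\int \phi\big((f\star\mu)(x)\big)\,dx \;\le\; \int \phi\big(f(x)\big)\,dx .
\]
Granting this, the lemma follows by taking $\phi(t)=t^{p}$ when $p>1$, $\phi(t)=-t^{p}$ when $0<p<1$, and $\phi(t)=t\log t$ when $p=1$, and reading off the inequality for $h_{p}(g)=\frac{1}{1-p}\log\int g^{p}$: since $\frac{1}{1-p}<0$ for $p>1$ and $\frac{1}{1-p}>0$ for $0<p<1$, in both ranges the display is equivalent to $h_{p}(f\star\mu)\geq h_{p}(f)$; and for $p=1$ it says $-\int (f\star\mu)\log(f\star\mu)\geq -\int f\log f$, i.e.\ $h(f\star\mu)\geq h(f)$.

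For the display I would offer two interchangeable arguments. The quick one: Jensen's inequality, applied to the convex $\phi$ and the probability measure $\mu$, gives the pointwise bound
\[
\phi\big((f\star\mu)(x)\big)=\phi\!\left(\int f(x-y)\,\mu(dy)\right)\leq \int \phi\big(f(x-y)\big)\,\mu(dy);
\]
integrating in $x$ and exchanging the order of integration by Tonelli's theorem (applied to the positive and negative parts of $\phi\circ f$ separately) yields $\int\phi(f\star\mu)\leq\int\!\int \phi(f(x-y))\,\mu(dy)\,dx=\int\phi(f)$, the last step by translation invariance together with $\mu(\RL^{n})=1$. The second argument, more in the spirit of Section~\ref{sec:major}, is to first note that $f\star\mu\prec f$: by the variational formula \eqref{eq:rep}, for any $C$ with $|C|=|B(0,r)|$ we have $\int_{C}(f\star\mu)=\int\big(\int_{C-y}f\big)\,\mu(dy)\leq\int_{B(0,r)}f^{*}$ because $|C-y|=|C|$, and taking the supremum over such $C$ gives $\int_{B(0,r)}(f\star\mu)^{*}\leq\int_{B(0,r)}f^{*}$ for all $r$; then Proposition~\ref{prop:gen-id} delivers the display.

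The only genuine subtlety is the well-definedness clause, and this is precisely where the hypothesis is used. For $p\neq 1$ nothing needs checking, since $\int g^{p}\in[0,+\infty]$ is unambiguous for every density $g$. For $p=1$, boundedness of $f$ forces $f\star\mu$ to be bounded as well, with $\|f\star\mu\|_{\infty}\leq\|f\|_{\infty}$; and for any bounded density $g$ the set $\{g>1\}$ has measure at most $1$, so $(g\log g)^{+}$ is bounded above by the constant $\max(0,\|g\|_{\infty}\log\|g\|_{\infty})$ and supported on a set of finite measure, hence integrable. Consequently $h(g)=-\int g\log g$ is a well-defined element of $(-\infty,+\infty]$ for both $g=f$ and $g=f\star\mu$, both sides of the display (with $\phi(t)=t\log t$) make sense, and the conclusion follows. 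I expect this bookkeeping in the $p=1$ case to be the only mild obstacle; the inequality itself is immediate once the sign conventions are pinned down.
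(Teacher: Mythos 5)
Your proof is correct, and the core mechanism (Jensen's inequality applied to the inner integral defining the convolution, followed by Fubini/Tonelli) is exactly what the paper uses. The difference is in how $p=1$ is handled: the paper proves the inequality for $p\in(0,1)$, notes that $p\in(1,\infty)$ is analogous, and then obtains $p=1$ by taking the limit $p\to 1^+$ via Lemma~\ref{lem:key}(ii) and Corollary~\ref{cor:easy}, using boundedness of $f$ to ensure $\int (f\star\mu)^{1+\epsilon}<\infty$ so that $h_1^+(f\star\mu)=h(f\star\mu)$. You instead package all three cases into the single statement $\int\phi(f\star\mu)\leq\int\phi(f)$ for convex $\phi$ vanishing at $0$, and apply it directly with $\phi(t)=t\log t$. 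This is cleaner and avoids the limiting argument, at the mild cost of the bookkeeping you correctly identify: one must check that boundedness of $f$ (hence of $f\star\mu$) forces $\int(g\log g)^+<\infty$ for both $g=f$ and $g=f\star\mu$, so that both sides live in $[-\infty,\infty)$ and the pointwise Jensen inequality can be integrated after splitting into positive and negative parts. Your second argument via the majorization $f\star\mu\prec f$ and Proposition~\ref{prop:gen-id} is also valid, and is more in the spirit of the paper's Section~\ref{sec:major}; it has the conceptual advantage of slotting this lemma into the same framework used for the Main Theorem itself, whereas the paper treats Lemma~\ref{lem:first} by a separate elementary computation.
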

\begin{proof}
This fact is well known but let us sketch the simple proof here. We will only prove the case when $p\in(0,1)$. The other case is entirely similar. By Jensen's inequality,
\ben
(f\star\mu)^p(x)\geq\int f^p(x-y)\mu(dy).
\een
Integrating the above with respect to $x$ and rearranging, we get the desired result. Finally, when $f$ is bounded, we may apply Lemma~\ref{lem:key} to take the limit as $p$ goes to $1$.
\end{proof}
\begin{lem}\label{lem:second}
Let $g$ be a non-degenerate Gaussian density. Then for $0<p\leq 1$, $h_p(\mu\star g)$, as a functional (of $\mu$) on the space of all probability measures, is lower semi-continuous with respect to the weak convergence topology.
\end{lem}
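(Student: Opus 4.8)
The plan is to reduce the statement, via Fatou's lemma, to two elementary facts about the densities of the smoothed measures, treating $p\in(0,1)$ and $p=1$ separately.

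First I would record the preliminary observations. If $\mu_k\Rightarrow\mu$ weakly, then for each fixed $x$ the map $y\mapsto g(x-y)$ is bounded and continuous, so the densities $f_k:=\mu_k\star g$ converge pointwise everywhere to $f:=\mu\star g$; moreover $0\le f_k\le M$ with $M:=\max\{\|g\|_\infty,1\}$, and $\mu_k\star g\Rightarrow\mu\star g$, so the family $\{\mu_k\star g\}$ is tight, i.e. $\sup_k(\mu_k\star g)(\{x:\|x\|>T\})\to 0$ as $T\to\infty$ (Prokhorov). Note $\mu\star g$ always has a density, so $h_p(\mu\star g)$ is unambiguously defined for every probability measure $\mu$. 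For $p\in(0,1)$ this is already enough: since $t\mapsto t^p$ is continuous, $f_k^p\to f^p$ pointwise, and Fatou's lemma gives $\int f^p\le\liminf_k\int f_k^p$ (both sides possibly $+\infty$, the left side strictly positive as $f$ is a density). Applying the continuous strictly increasing map $x\mapsto\frac{1}{1-p}\log x$, which respects $\liminf$, yields $h_p(\mu\star g)\le\liminf_k h_p(\mu_k\star g)$.

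For $p=1$ I would write $-t\log t=\psi^+(t)-\psi^-(t)$, where $\psi^+$ is supported on $[0,1]$ (so $0\le\psi^+\le 1/e$) and $\psi^-(t)=(t\log t)_+$ is supported on $[1,\infty)$; both are continuous, vanish at $0$, and $\psi^-(t)\le t\log M$ for $0\le t\le M$. Since $\psi^+(f_k)\to\psi^+(f)$ pointwise and $\psi^+\ge 0$, Fatou gives $\int\psi^+(f)\le\liminf_k\int\psi^+(f_k)$. For the other piece I need the genuine convergence $\int\psi^-(f_k)\to\int\psi^-(f)$, since Fatou points the wrong way here. Split the integral at $\|x\|=T$: on $\{x:\|x\|>T\}$ one has $\int_{\|x\|>T}\psi^-(f_k)\,dx\le(\log M)\,(\mu_k\star g)(\{x:\|x\|>T\})$, which tends to $0$ as $T\to\infty$ uniformly in $k$ by tightness; on $\{x:\|x\|\le T\}$ the integrand is bounded by the fixed integrable function $(M\log M)\,\mathbb{I}_{\{x:\|x\|\le T\}}$ and converges pointwise, so dominated convergence applies. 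Letting $k\to\infty$ and then $T\to\infty$ gives $\int\psi^-(f_k)\to\int\psi^-(f)$. Combining, $h(\mu\star g)=\int\psi^+(f)-\int\psi^-(f)\le\liminf_k\bigl(\int\psi^+(f_k)-\int\psi^-(f_k)\bigr)=\liminf_k h(\mu_k\star g)$.

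The delicate point, and the reason the statement fails without the Gaussian factor, is precisely this last step: plain Shannon entropy is not lower semicontinuous for weak convergence (finely oscillating densities converge weakly to their average while keeping strictly smaller entropy), because $-t\log t$ is concave and unbounded below, so Fatou controls only the part of $\int(-f\log f)$ coming from $\{f\le 1\}$. Convolving with $g$ repairs exactly this: it forces a uniform $L^\infty$ bound on the densities (taming $-t\log t$ for large $t$) and preserves tightness (controlling the tails), and these two features together are what make dominated convergence available for the $\psi^-$ contribution. I expect verifying the uniform tail bound via tightness of $\{\mu_k\star g\}$, and cleanly splitting off the concave part $\psi^-$, to be where the real work lies; the $p\in(0,1)$ case needs no such care since $\frac{1}{1-p}\log\int f^p$ is lower semicontinuous under pointwise convergence of densities by bare Fatou.
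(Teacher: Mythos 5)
Your proof is correct, but you take a noticeably more labor-intensive route for $p=1$ than the paper does. You decompose $-t\log t=\psi^+(t)-\psi^-(t)$, handle $\psi^+$ by Fatou, and establish genuine convergence of $\int\psi^-(f_k)$ by combining a uniform $L^\infty$ bound with tightness of $\{\mu_k\star g\}$ (splitting the integral at $\|x\|=T$ and invoking dominated convergence inside the ball). The paper instead normalizes: setting $C=\|g\|_\infty$, it observes that $0\le\mu_n\star g/C\le1$ everywhere, so the integrand $-\frac{\mu_n\star g}{C}\log\frac{\mu_n\star g}{C}$ is nonnegative on its whole range, and then applies Fatou once to $\liminf_n\bigl(-\int\log(\frac{\mu_n\star g}{C})\frac{\mu_n\star g}{C}\,dx\bigr)$; unwinding the normalization yields $\liminf_n h(\mu_n\star g)\ge h(\mu\star g)$. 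The paper's trick eliminates the negative part of the integrand entirely, so it never needs tightness of the smoothed measures nor a compact-exhaustion argument. Your version has the advantage of making visible exactly which two properties of Gaussian smoothing are being exploited (uniform boundedness and tightness), and it generalizes to smoothing kernels that are not uniformly bounded below but still give uniformly bounded, tight families; the paper's version is shorter and cleaner when, as here, one has a uniform supremum bound. Both agree on the $p\in(0,1)$ case (direct Fatou, using that $\frac{1}{1-p}\log$ is continuous and increasing).
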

\begin{proof}
Let $\mu_n$ be a sequence of probability measures converging weakly to $\mu$ and assume $g\leq C$. Then by definition of weak convergence, for each $x$,
\ben
\mu_n\star g(x)\rightarrow\mu\star g(x).
\een
When $p\neq 1$, by Fatou's lemma,
\ben
\liminf_{n}h_p(\mu_n\star g)\geq h_p(\mu\star g).
\een
When $p=1$, we can apply an argument in \cite{HV05} to conclude. For completeness, let us sketch the argument here. Note that it suffices to show
\ben
\liminf_n-\int\log\bigg(\frac{\mu_n\star g}{C}\bigg)\frac{\mu_n\star g}{C}dx\geq-\int\log\bigg(\frac{\mu\star g}{C}\bigg)\frac{\mu\star g}{C}dx.
\een
But since $\mu_n\star g\leq C$ and $\mu\star g\leq C$, the above also follows from Fatou's lemma.
\end{proof}
Now we finish the proof of Theorem~\ref{thm:levymain}.
\begin{proof}
By the L\'evy-Ito decomposition \cite{Pro05:book}, we can write
\ben
X_t=X^n_t+Y^n_t,
\een
where $X^n$ is a \levy process with \levy triple $(A,\gamma,\nu_n)$, with $\nu_n(dx)=m_n(x)dx=m(x)\{x>\frac{1}{n}\}dx$ and the process $Y^n$ is independent of $X^n$. Clearly, the density of $X^n_t$ is bounded. By Lemma~\ref{lem:first},
\ben
h_p(X_t)\geq h_p(X^n_t).
\een
$X^n_t$ can be written as the sum of a Brownian motion (with constant drift) and an independent compound Poisson process \cite{Pro05:book}. Hence by Proposition~\ref{prop:keyofsection},
\ben
h_p(X^n_t)\geq h_p(Z^n_t),
\een
where $Z^n_t$ is a \levy process with \levy triple $(|A|^{\frac{1}{n}}\mathbb{I}_{n\times n},0,m_n^*(x)dx)$. Again by the \levy-Ito decomposition, we can write
\ben\begin{split}
&Z^n_t=|A|^{\frac{1}{2n}}B_t+U^n_t,\\
&Z_t=|A|^{\frac{1}{2n}}B_t+U_t,
\end{split}
\een
where $B_\cdot$ is a standard Brownian motion independent of $U^n$ and $U$, $U^n$ is a \levy process with \levy triple $(0,0,m_n^*(x)dx)$ and $U$ is a \levy process with \levy triple $(0,0,m^*(x)dx)$.  Finally, since $m_n^*(x)\uparrow m^*(x)$, $U_n(t)$ converges weakly to $U_t$ for each $t$. Hence by Lemma~\ref{lem:second},
\ben
\liminf_nh_p(Z^n_t)\geq h_p(Z_t).
\een
Combining, we get:
\ben
h_p(X_t)\geq h_p(Z_t).
\een
Finally, by the chain rule and the markov property of \levy process, we have for any \levy process $X$,
\ben\begin{split}
h(X_{t_1},X_{t_2},\cdot\cdot\cdot,X_{t_n})=\sum_{i=1}^nh(X_{t_i}|X_{t_{i-1}}).
\end{split}
\een
By the independent and stationary increment property of \levy process and translation invariance of entropy,
\ben
h(X_{t_1},X_{t_2},\cdot\cdot\cdot,X_{t_n})=\sum_{i=1}^nh(X_{t_i-t_{i-1}}).
\een
A similar expression holds for $Z$. Hence we can conclude.
\end{proof}

\section{Yet another proof of the classical Entropy Power Inequality}
\label{sec:pf-epi}

The goal of this section is to give a new proof the entropy power inequality (Theorem~\ref{thm:epi}) starting from the Main Theorem. We comment here that in this section we
actually only need the Main Theorem for $p=1$ and $k=2$. By Remark~\ref{rmk:entconti}, to prove Theorem~\ref{thm:epi}, we can consider the case
when the two densities are bounded, strictly positive and have finite covariance matrices. These will be assumed throughout this section. For convenience, we will use the following well-known equivalent formulation of the entropy power inequality \cite{DCT91}:
\ben
h(\sqrt{\lambda}X+\sqrt{1-\lambda}Y)\geq\lambda h(X)+(1-\lambda)h(Y),
\een
for all $0<\lambda<1$, where $X$ has density $f_1$ and $Y$ has density $f_2$.
By the Main Theorem and Lemma~\ref{lem:preser-ent}, we can assume that $f_1$ and $f_2$
are spherically symmetric decreasing. Note that by Lemma~\ref{lem:preser-ent},
Lemma~\ref{lem:key} and Lemma~\ref{lem:finvar}, if a density $f$ is bounded,
strictly positive and has finite covariance matrix, then so is $f^*$. Hence we will assume
from now that the two densities are spherically symmetric decreasing, bounded, strictly
positive and have finite covariance matrices.

We first show that an EPI comes almost for free if we assume identical distribution.
The case when $\lambda=\frac{1}{2}$ seems to be folklore; we learned it from Andrew Barron
several years ago. For completeness, we sketch the easy proof for all $\lambda$.

\begin{prop}\label{prop:ident}
Fix any $0<\lambda<1$. Suppose that $X$ and $Y$ have the same distribution. Then:
\begin{equation*}
h(\sqrt{\lambda}X+\sqrt{1-\lambda}Y)\geq h(X),
\end{equation*}
\end{prop}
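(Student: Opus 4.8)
The plan is to deduce the inequality from the rotation-invariance of differential entropy, using that $X$ and $Y$ are i.i.d.\ and --- after the reduction already in force in this section (which is legitimate via the $p=1$, $k=2$ case of the Main Theorem together with Lemma~\ref{lem:preser-ent}) --- may be taken to have a common \emph{spherically symmetric} decreasing density $f$; in particular $X \equald -X$.

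First I would introduce, alongside $U := \sqrt{\lambda}\,X + \sqrt{1-\lambda}\,Y$, the companion variable $V := -\sqrt{1-\lambda}\,X + \sqrt{\lambda}\,Y$. The linear map $(x,y)\mapsto(u,v)$ is a rotation of $\RL^{2n}$, hence has unit Jacobian, so $h(U,V) = h(X,Y)$; and by independence $h(X,Y) = h(X) + h(Y) = 2h(X)$. Next, subadditivity of differential entropy (i.e.\ $I(U;V)\ge 0$) gives $h(U,V) \le h(U) + h(V)$. The key step is then a symmetry observation: since $X \equald -X$ while $X$ and $Y$ are independent, $V = \sqrt{1-\lambda}\,(-X) + \sqrt{\lambda}\,Y \equald \sqrt{1-\lambda}\,X + \sqrt{\lambda}\,Y$, and since $X$ and $Y$ are moreover identically distributed this last quantity equals in distribution $\sqrt{\lambda}\,X + \sqrt{1-\lambda}\,Y = U$. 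Hence $h(V) = h(U)$, and combining the displays, $2h(X) = h(U,V) \le h(U) + h(V) = 2h(U)$, which is exactly the asserted bound.

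The routine points --- that each entropy appearing is finite (here the densities are bounded with finite covariance, so each convolution again has a bounded density of finite covariance), that $h(TW) = h(W) + \log|\det T|$ for linear $T$, and $h(U,V)\le h(U)+h(V)$ --- are standard and I would dispatch them in a line each. The main obstacle, such as it is, is conceptual rather than computational: for a \emph{general} i.i.d.\ pair one only gets $h(U) + h(V)\ge 2h(X)$ with $V$ distributed as $\sqrt{1-\lambda}\,X + \sqrt{\lambda}\,Y$, which does not by itself yield $h(U)\ge h(X)$; it is precisely the spherical symmetry of $f$ --- available thanks to the Main Theorem --- that forces $V$ to be a copy of $U$ and closes the argument. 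Note that this handles an arbitrary $\lambda\in(0,1)$ at no extra cost over the familiar $\lambda = \frac12$ case.
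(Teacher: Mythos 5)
Your proof is correct and is essentially identical to the paper's own argument: both exploit the rotation $(X,Y)\mapsto(U,V)$, independence to get $h(X,Y)=2h(X)$, central symmetry plus the i.i.d.\ hypothesis to conclude $V\equald U$, and subadditivity of joint entropy. The only cosmetic difference is the sign convention in $V$ (you take $-\sqrt{1-\lambda}X+\sqrt{\lambda}Y$, the paper takes $\sqrt{1-\lambda}X-\sqrt{\lambda}Y$), which is immaterial.
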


\begin{proof}
By independence, we have:
\begin{equation*}
h(X,Y)=h(X)+h(Y).
\end{equation*}
By spherical symmetry (in fact, we only need central symmetry) and i.i.d. assumption, we have
\begin{equation*}
\sqrt{\lambda}X+\sqrt{1-\lambda}Y=^d\sqrt{1-\lambda}X-\sqrt{\lambda}Y.
\end{equation*}
By the scaling property for entropy,
\begin{equation*}
h(X,Y)=h(\sqrt{\lambda}X+\sqrt{1-\lambda}Y,\sqrt{1-\lambda}X-\sqrt{\lambda}Y).
\end{equation*}
Now we can use subadditivity of entropy to conclude.
\end{proof}
\vspace{.1in}

We now give a slightly involved proof of the full entropy power inequality starting from the Main Theorem.
For notational simplicity, \textbf{we assume $\mathbf{n=1}$ until the end of this section}.
Our proof is inspired by and may be considered as an adaptation of Brascamp and Lieb's proof of
Young's inequality with sharp constant \cite{BL76b}.

First, we do a simple reduction. Since we assumed $f$ and $g$ are bounded symmetric decreasing, we can approximate these densities pointwise
and monotonically from below by symmetric decreasing simple functions of the form $f_n$ and $g_n$:
\ben
f_n=\sum_{i=1}^{k_n}c^n_i \mathbb{I}^n_i,
\een
where $\mathbb{I}^n_i$ are indicators of symmetric finite intervals with $\mathbb{I}^n_i\leq \mathbb{I}^n_{i+1}$ and $c^n_i>0$
(note that $c^n_i>0$ since $f_n$ is decreasing) and a similar expression for $g_n$.
By our assumption, we can show that for fixed $0<\lambda<1$,
\ben\begin{split}
h(\widetilde{f}_n) &\rightarrow h(f), \\
h(\widetilde{g}_n) &\rightarrow h(g), \\
h\bigg(\frac{1}{\sqrt{\lambda}}\widetilde{f}_n\bigg(\frac{\cdot}{\sqrt{\lambda}}\bigg)
\star\frac{1}{\sqrt{1-\lambda}}\widetilde{g}_n\bigg(\frac{\cdot}{\sqrt{1-\lambda}}\bigg)\bigg)
&\rightarrow
h\bigg(\frac{1}{\sqrt{\lambda}}f\bigg(\frac{\cdot}{\sqrt{\lambda}}\bigg)\star\frac{1}{\sqrt{1-\lambda}}g\bigg(\frac{\cdot}{\sqrt{1-\lambda}}\bigg)\bigg),
\end{split}\een
where $\widetilde{f}_n$ and $\widetilde{g}_n$ are normalized versions of $f_n$ and $g_n$.
This is because, as shown by Harrem\"oes and Vignat \cite{HV05},
if a sequence of uniformly bounded densities converges pointwise to
a density, and the first two moments also converge, then one has convergence of entropies.
Hence, without loss of generality, we can assume that $f$ and $g$ are of the following form:
\ben
f=\sum_{i=1}^{k_1}c^1_i \mathbb{I}^1_i,\quad g=\sum_{i=1}^{k_2}c^2_i \mathbb{I}^2_i.
\een
where $c^1_i>0, c^2_i>0$.

The main trick is to use tensorization, or what physicists call the replica method.
Consider $X_1,X_2,\cdot\cdot\cdot,X_M$, which are $M$ independent copies of $X$,
and independent of these, $Y_1,Y_2,\cdot\cdot\cdot,Y_M$, which are $M$ independent
copies of $Y$. Let $\mathbf{X}=(X_1,X_2,\cdot\cdot\cdot,X_M)$, $\mathbf{Y}=(Y_1,Y_2,\cdot\cdot\cdot,Y_M)$. The densities of $\mathbf{X}$ and $\mathbf{Y}$ are
\ben\begin{split}
F(x_1,x_2,\cdot\cdot\cdot,x_M)&=\prod_{i=1}^Mf(x_i), \\
G(y_1,y_2,\cdot\cdot\cdot,y_M)&=\prod_{i=1}^Mg(y_i).
\end{split}\een

Next, we show that $F^*$ and $G^*$ are both finite mixtures and the number of densities in the mixture grows
at most polynomially in $M$. It is easy to see $F$ takes at most $(M+1)^{k_1}$ values and $G$ takes at most
$(M+1)^{k_2}$ values \cite{BL76b}. Hence just by looking at the definitions of rearrangements, one sees that
$F^*$ takes at most $(M+1)^{k_1}$ values and $G^*$ takes at most $(M+1)^{k_2}$ values. This allows us to
express $F^*$ and $G^*$ as, using the spherically symmetric decreasing property,
\ben\begin{split}
F^*&=\sum_{i=1}^{(M+1)^{k_1}}b^1_i \mathbb{I}_{\eta^1_i}, \\
G^*&=\sum_{j=1}^{(M+1)^{k_2}}b^2_j \mathbb{I}_{\eta^2_j},
\end{split}\een
where $b^1_i>0,b^2_j>0$ and $\mathbb{I}_{\eta^1_i}, \mathbb{I}_{\eta^2_j}$ are indicators of $M$ dimensional balls $\eta^1_i$ and $\eta^2_j$,
centered at the origin and $|\eta^1_i|\leq |\eta^1_{i+1}|, \,|\eta^2_j|\leq|\eta^2_{j+1}|$. Since both are probability densities, we have
\ben\begin{split}
\sum_{i=1}^{(M+1)^{k_1}}b^1_i|\eta^1_i| &=1, \\
\sum_{j=1}^{(M+1)^{k_2}}b^2_j|\eta^2_j| &=1, \\
\sum_{i,j}b^1_ib^2_j|\eta^1_i||\eta^2_j| &=1.
\end{split}\een
We rewrite $F^*$ and $G^*$ as
\ben\begin{split}
F^*&=\sum_{i=1}^{(M+1)^{k_1}} b^1_i |\eta^1_i| \frac{\mathbb{I}_{\eta^1_i}}{|\eta^1_i|}, \\
G^*&=\sum_{j=1}^{(M+1)^{k_2}} b^2_j |\eta^2_j| \frac{\mathbb{I}_{\eta^2_j}}{|\eta^2_j|}.
\end{split}\een
But $\frac{\mathbb{I}_{\eta^1_i}}{|\eta^1_i|}$ and $\frac{\mathbb{I}_{\eta^2_j}}{|\eta^2_j|}$ are exactly the uniform distributions on the
balls $\eta^1_i$ and $\eta^2_j$. Hence both $F^*$ and $G^*$ are mixtures of uniform distribution on balls.

Two more ingredients, of independent interest, are needed for the proof. The first is the concavity of entropy and the
following simple lemma, which may be thought of as a ``reverse concavity'' property of entropy when taking finite mixtures.

\begin{lem}\label{lem:mix}
Let $f$ be a finite mixture of densities, i.e.,
\ben
f=\sum_{i=1}^n c_i f_i ,
\een
where $c_i$ are nonnegative constants summing to $1$, and $f_i$ are densities on $\RL^n$.
Then
\ben
h(f)\leq\sum_i c_i h(f_i)-\sum_i c_i\log c_i .
\een
In particular, $h(f)\leq\sum_i c_i h(f_i)+\log n$.
\end{lem}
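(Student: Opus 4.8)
The plan is to prove Lemma~\ref{lem:mix} by the standard auxiliary-variable argument. Introduce a random variable $U$ taking the value $i$ with probability $c_i$, and conditionally on $\{U=i\}$ let $X$ be drawn from $f_i$; then the marginal density of $X$ is exactly $f=\sum_i c_i f_i$. The pair $(X,U)$ is a mixed continuous--discrete random variable, and the chain rule for (differential) entropy gives
\ben
h(X)+H(U\mid X)=h(X,U)=H(U)+h(X\mid U),
\een
where $H(U)=-\sum_i c_i\log c_i$ is the Shannon entropy of the mixing weights and $h(X\mid U)=\sum_i c_i h(f_i)$, since conditionally on $\{U=i\}$ the density of $X$ is $f_i$.

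Since $U$ is discrete, $H(U\mid X)\geq 0$ (for each $x$ the vector $\big(c_i f_i(x)/f(x)\big)_i$ is a probability vector, so its entropy is nonnegative, and one integrates against $f$). Hence $h(X)\leq h(X,U)=H(U)+h(X\mid U)=-\sum_i c_i\log c_i+\sum_i c_i h(f_i)$, which is the claimed inequality. The ``in particular'' assertion then follows from the elementary bound $H(U)=-\sum_i c_i\log c_i\leq\log n$ for a probability vector on $n$ points.

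For readers who prefer to bypass the chain-rule formalism, the same bound can be obtained directly from the pointwise domination $c_i f_i(x)\leq f(x)$: on $\{x:f_i(x)>0\}$ this gives $\log\big(c_i f_i(x)\big)\leq\log f(x)$, hence $f_i(x)\log\big(c_i f_i(x)\big)\leq f_i(x)\log f(x)$; multiplying by $c_i\geq 0$, summing over $i$, and integrating yields $\sum_i c_i\int f_i\log(c_i f_i)\,dx\leq\int f\log f\,dx$, which rearranges into the statement after noting that $-\int f_i\log(c_i f_i)\,dx=h(f_i)-\log c_i$. The only point requiring a little care is the usual one about the meaning of these integrals when some entropies are infinite: in the application intended here (mixtures of uniform densities on balls) all $h(f_i)$ are finite, so no issue arises, but in general one splits each integral over $\{f\geq 1\}$ and $\{f<1\}$ and reads the inequality in $[-\infty,+\infty]$. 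I expect this measure-theoretic bookkeeping, rather than any genuine difficulty, to be the only mild obstacle.
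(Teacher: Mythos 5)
Your proposal is correct, and it actually contains two proofs: the auxiliary-variable/chain-rule argument you lead with, and the direct pointwise argument you offer as an alternative. The second of these is, modulo trivial reorganization, exactly the paper's proof: the paper writes $h(f)=-\sum_i c_i\int f_i\log f\,dx$ and then uses $-\log f\leq-\log(c_if_i)=-\log c_i-\log f_i$, which is the same domination $c_if_i\leq f$ that you invoke, read through one logarithm.

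Your primary argument, however, is genuinely different from what the paper records. You introduce the discrete mixing variable $U$, form the mixed joint distribution of $(X,U)$ with respect to Lebesgue-times-counting measure, and use the two chain-rule decompositions
\ben
h(X)+H(U\mid X)=h(X,U)=H(U)+h(X\mid U),
\een
together with the nonnegativity of $H(U\mid X)$ (valid because $U$ is discrete, so the conditional ``posterior'' $\big(c_if_i(x)/f(x)\big)_i$ is a genuine probability vector for each $x$). This is clean and conceptually transparent; it makes visible why the slack in the inequality is precisely $H(U\mid X)$, i.e.\ the residual uncertainty about which component produced the sample. The paper's remark after the lemma observes that for \emph{discrete} entropy this bound is a textbook exercise (via data processing), but that ``for differential entropy there is no data processing inequality''; your argument shows that one does not need data processing at all — only the chain rule for mixed entropy and the nonnegativity of a discrete conditional entropy — so the chain-rule route in fact carries over to the differential setting with no difficulty. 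In exchange, the paper's direct calculation avoids having to set up mixed joint entropies and is a one-line manipulation, which is why the authors call it ``an extremely simple alternate proof.'' Your closing remarks about possible infinities are the right caveat; in the paper's application all $f_i$ are uniforms on balls so everything is finite, and in general the inequality should be read in $[-\infty,+\infty]$ after splitting the integrals, as you say.
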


\begin{proof}
By definition,
\ben
h(f)=-\sum_ic_i\int f_i\log(f)dx.
\een
Since $\log$ is increasing,
\ben
-\log(f)\leq-\log(c_if_i)=-\log(c_i)-\log(f_i).
\een
Hence
\ben
h(f)\leq-\sum_i\log(c_i)c_i+\sum_ic_ih(f_i).
\een
The term $-\sum_i\log(c_i)c_i$ is exactly the discrete entropy $H(c)$ of $c=(c_1,c_2,\cdot\cdot,c_n)$. Hence
\ben
h(f)\leq H(c)+\sum_ic_ih(f_i)\leq\log(n)+\sum_ic_ih(f_i)
\een
\end{proof}
\vspace{.1in}

\begin{rmk}
For discrete entropy, this lemma is well known and can be found as an exercise in
\cite{CT06:book}; the standard way to prove it is using the data processing inequality
for discrete entropy. For differential entropy (which is our focus), there is no
data processing inequality, and we could not find Lemma~\ref{lem:mix} in
the literature even though it has an extremely simple alternate proof.
\end{rmk}

The last ingredient is the following simple lemma, which is the EPI for uniform distributions on balls,
but with an error term.

\begin{lem}\label{lem:epu}
Let $Z_1$ and $Z_2$ be two independent uniforms on $M$-dimensional balls centered at the origin, then
\ben
h(\sqrt{\lambda}Z_1+\sqrt{1-\lambda}Z_2)\geq\lambda h(Z_1)+(1-\lambda)h(Z_2)+o(M),
\een
where the $o$ symbol is uniform with respect to all pairs of balls centered at the origin.
\end{lem}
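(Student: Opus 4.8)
The plan is to compare $\sqrt{\lambda}Z_1+\sqrt{1-\lambda}Z_2$ with an isotropic Gaussian, exploiting the fact that in high dimension the uniform law on a centered ball is close in relative entropy — by an amount that is only $O(\log M)$ — to the Gaussian with the same scale. Write $Z_i$ for the uniform law on the centered ball $B(0,r_i)\subseteq\mathbb{R}^M$, so $h(Z_i)=\log\big(V_M(1)\,r_i^M\big)$, and set $U=\sqrt{\lambda}\,Z_1$, $V=\sqrt{1-\lambda}\,Z_2$; these are uniform on the centered balls of radii $\sqrt{\lambda}\,r_1$ and $\sqrt{1-\lambda}\,r_2$, and $\sqrt{\lambda}Z_1+\sqrt{1-\lambda}Z_2=U+V$. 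Put $\rho^2=\lambda r_1^2+(1-\lambda)r_2^2$, and for $s>0$ let $G_s$ denote the Gaussian on $\mathbb{R}^M$ with per-coordinate variance $s^2/M$ (density $g_s$), so that $G_{\sqrt{\lambda}r_1}\star G_{\sqrt{1-\lambda}r_2}=G_\rho$. The first step is a scale-free computation: if $W$ is uniform on $B(0,s)$ then $D(W\|G_s)$ does not depend on $s$ (relative entropy is invariant under the common dilation by $s$) and equals $\delta_M:=\tfrac12\log(\pi M)-1+o(1)$. Indeed
\[
D(W\|G_s)=-h(W)-\mathbb{E}\log g_s(W)=-\log\big(V_M(1)\,s^M\big)+\tfrac M2\log\tfrac{2\pi s^2}{M}+\tfrac{M}{2s^2}\,\mathbb{E}\|W\|^2,
\]
and the claim follows from $\mathbb{E}\|W\|^2=\tfrac{M}{M+2}s^2$ together with Stirling's formula for $\log\Gamma(\tfrac M2+1)$ inside $\log V_M(1)$; the essential point is that $\delta_M=O(\log M)$, uniformly in $s$.

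The second step transfers this to the convolution. Since $U$ and $V$ are independent, tensorization of relative entropy and the data-processing inequality applied to the addition map $(x,y)\mapsto x+y$ give
\[
D\big(U+V\ \big\|\ G_\rho\big)\ \le\ D\big(U\|G_{\sqrt{\lambda}r_1}\big)+D\big(V\|G_{\sqrt{1-\lambda}r_2}\big)\ =\ 2\delta_M,
\]
uniformly over $r_1,r_2$ (indeed over $\lambda$ as well). Writing $f$ for the density of $U+V$ and using $h(f)=-D(f\|g_\rho)-\mathbb{E}_f\log g_\rho$ with $-\mathbb{E}\log g_\rho(U+V)=\tfrac M2\log\tfrac{2\pi\rho^2}{M}+\tfrac{M}{2\rho^2}\,\mathbb{E}\|U+V\|^2=h(G_\rho)-\tfrac{M}{M+2}$ (here $\mathbb{E}\|U+V\|^2=\tfrac{M}{M+2}\rho^2$ by independence), I obtain
\[
h\big(\sqrt{\lambda}Z_1+\sqrt{1-\lambda}Z_2\big)\ \ge\ h(G_\rho)-2\delta_M-1\ =\ \tfrac M2\log\tfrac{2\pi e\,\rho^2}{M}-O(\log M).
\]

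It then remains to check that the Gaussian entropy dominates the target, which is pure convexity: Stirling gives $\log V_M(1)\le\tfrac M2\log\tfrac{2\pi e}{M}$ for all large $M$, and concavity of $\log$ gives $\log\rho^2=\log\big(\lambda r_1^2+(1-\lambda)r_2^2\big)\ge\lambda\log r_1^2+(1-\lambda)\log r_2^2$, whence
\[
\tfrac M2\log\tfrac{2\pi e\,\rho^2}{M}\ \ge\ \log V_M(1)+\tfrac M2\big(\lambda\log r_1^2+(1-\lambda)\log r_2^2\big)\ =\ \lambda h(Z_1)+(1-\lambda)h(Z_2).
\]
Combining the last two displays yields $h(\sqrt{\lambda}Z_1+\sqrt{1-\lambda}Z_2)\ge\lambda h(Z_1)+(1-\lambda)h(Z_2)-O(\log M)$, which is the assertion with an explicit, uniform $o(M)$ term. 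The step I expect to demand the most care is the first one — verifying that $D(W\|G_s)$ is only $O(\log M)$ and, above all, that this bound is independent of $s$. That uniformity, which here is automatic from the scale invariance of relative entropy, is exactly what lets the argument handle arbitrary and possibly wildly disparate radii; an alternative route through the explicit radial density of $U+V$ and Laplace-type asymptotics would have to recover the same uniformity by hand, which becomes delicate as the ratio of the two radii degenerates.
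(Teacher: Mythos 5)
Your proof is correct, and it follows the same high-level strategy as the paper (compare each uniform on a ball with a Gaussian of matched scale, control the relative entropy error by Stirling, and finish with concavity of $\log$), but the central technical step is organized differently. The paper bounds $D(Z\|G)$ --- with $Z=\sqrt{\lambda}Z_1+\sqrt{1-\lambda}Z_2$ and $G$ the matching Gaussian --- by a \emph{pointwise} density comparison: from $f_i\le \sqrt{\pi M}\,e^{1/(12M)} g_i$ (Stirling) one multiplies across the convolution to get $f\le \pi M\,e^{1/(6M)} g$, whence $D(Z\|G)\le \log(\pi M)+O(1/M)$; then the explicit formulas for $D(Z\|G)$ and $D(Z_i\|G_i)$ are subtracted and concavity of $\log$ does the rest. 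You instead invoke the data processing inequality (monotonicity of relative entropy) for the addition map $(x,y)\mapsto x+y$, giving $D(U+V\|G_\rho)\le D(U\|G_{\sqrt\lambda r_1})+D(V\|G_{\sqrt{1-\lambda}r_2})=2\delta_M$, and you compute $\delta_M$ directly by Stirling. Your route is arguably the cleaner one: the uniformity over radii falls out of the scale invariance of $\delta_M$ rather than having to be tracked through a pointwise bound on a convolution, and DPI replaces the ad hoc multiplication of density ratios. The paper's pointwise bound is, on the other hand, slightly more information (a density bound, not just a divergence bound), which may matter elsewhere but is superfluous here. Both give $O(\log M)$ error terms that are uniform over all pairs of centered balls, which is what the lemma requires.

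One small expository remark: your asymptotic $\delta_M=\tfrac12\log(\pi M)-1+o(1)$ and the inequality $\log V_M(1)\le\tfrac M2\log\tfrac{2\pi e}{M}$ are valid for $M$ large; for small $M$ the discrepancy is $O(1)$ and still absorbed into $O(\log M)$, so the proof does not need any amendment, but it is worth stating that the $o(M)$ claim is asymptotic (as it must be, given the form of the statement).
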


\begin{proof}
We will only sketch the argument here. Let
\ben
Z=\sqrt{\lambda}Z_1+\sqrt{1-\lambda}Z_2
\een
and the radii of the balls corresponding to $Z_1$ and $Z_2$ be $b_1$ and $b_2$ respectively.
Suppose the densities of $Z,Z_i$ are $f,f_i,i=1,2$. We now define two $M$-dimensional Gaussian densities:
\ben
g_i(x)={\bigg(\frac{M}{2\pi b^2_i}\bigg)}^{\frac{M}{2}}e^{-\frac{M|x|^2}{2b^2_i}},  i=1,2,
\een
and let $G=\sqrt{\lambda}G_1+\sqrt{1-\lambda}G_2$, with density $g$,
where $G_1$ and $G_2$ are independent random vectors with densities $g_1$ and $g_2$.
We indicate that, using Stirling's approximation, one can show (assuming $M$ even, without loss of generality):
\ben\begin{split}
f_i &\leq\sqrt{\pi M}e^{\frac{1}{12M}}g_i \,, \\
f &\leq\pi Me^{\frac{1}{6M}}g.
\end{split}\een
Hence it is easily seen that
\ben
D(Z\|G)=\lambda D(Z_1\|G_1)+(1-\lambda)D(Z_2\|G_2)+O(\log(M)).
\een
where $D(\cdot\|\cdot)$ is the relative entropy and the $O(\log(M))$ means it is bounded by $\log(M)$ times a universal constant.
Now some easy calculations show that
\ben\begin{split}
D(Z\|G) &=-h(Z)+\frac{M^2}{M+2}-\frac{M}{2}\log\bigg(\frac{M}{2\pi(\lambda b^2_1+(1-\lambda)b^2_2)}\bigg), \\
D(Z_i\|G_i) &=-h(Z_i)+\frac{M^2}{M+2}-\frac{M}{2}\log\bigg(\frac{M}{2\pi b^2_i}\bigg).
\end{split}\een
Hence we get
\be\begin{split}\label{eq:epiasy}
h(Z)-\lambda h(Z_1)-(1-\lambda)h(Z_2) %\\
&=\frac{M}{2}\bigg(\log(\lambda b^2_1+(1-\lambda)b^2_2)\\
&-\lambda\log(b^2_1)-(1-\lambda)\log(b^2_2)\bigg)+O(\log(M)) \\
&\geq O(\log(M)),
\end{split}\ee
where the last step follows from concavity of $\log$ and the meaning of the $O$ symbol is as before.
\end{proof}
\vspace{.1in}

\begin{rmk}
The above proof gives very strong information about the entropy of the sum of two independent uniform on balls, in high dimensions.
The last equality is in fact equivalent to
\be\label{eq:err}
\frac{2h(Z_1+Z_2)}{M}=\log\big(e^{\frac{2h(Z_1)}{M}}+e^{\frac{2h(Z_2)}{M}}\big)+\frac{O(\log(M))}{M}.
\ee
Hence we obtain the conclusion that asymptotically, EPI becomes an equality for
two independent uniform distribution on balls in high dimensions.
The expression \eqref{eq:err} implies:
\be\label{eq:err2}
e^{\frac{2h(Z_1+Z_2)}{M}}\leq c_M\big(e^{\frac{2h(Z_1)}{M}}+e^{\frac{2h(Z_2)}{M}}\big),
\ee
where $c_M$, depending only on $M$, goes to $1$ as $M$ goes to infinity.
This is not surprising because uniform distributions on high-dimensional balls are close to Gaussians.
We may also note in passing that it was recently shown in \cite{MK14} (and also in \cite{BM13:goetze} with a slightly worse constant)
that for IID, log-concave random vectors $U_1$ and $U_2$ taking values in $\RL^M$,
\ben%\label{eq:err-lc}
e^{\frac{2h(U_1+U_2)}{M}}\leq 2\big(e^{\frac{2h(U_1)}{M}}+e^{\frac{2h(U_2)}{M}}\big) ,
\een
while the more general reverse entropy power inequality of \cite{BM11:cras, BM12:jfa} gives such an inequality
for arbitrary independent  log-concave random vectors (thus covering two balls of different radii), but with a non-explicit constant.
\end{rmk}
\vspace{.1in}

We now complete the proof of the original EPI. Let $\mathbf{Z^1_i}$ and $\mathbf{Z^2_j}$ be
independent random vectors uniformly distributed on the balls $\eta^1_i$ and $\eta^2_j$.
\ben\begin{split}
Mh(\sqrt{\lambda}X_1+\sqrt{1-\lambda}Y_1)  &\stackrel{(a)}{=}h(\sqrt{\lambda}\mathbf{X}+\sqrt{1-\lambda}\mathbf{Y}) \\
&\stackrel{(1)}{\geq} h(\sqrt{\lambda}\mathbf{X}^*+\sqrt{1-\lambda}\mathbf{Y}^*) \\
&\stackrel{(2)}{\geq}\sum_{i=1}^{(M+1)^{k_1}}\sum_{j=1}^{(M+1)^{k_2}}
b^1_ib^2_j|\eta^1_i||\eta^2_j|h(\sqrt{\lambda}\mathbf{Z^1_i}+\sqrt{1-\lambda}\mathbf{Z^2_j}) \\
&\stackrel{(3)}{\geq}\lambda\sum_{i=1}^{(M+1)^{k_1}}b^1_i|\eta^1_i|h(\mathbf{Z^1_i})
+ (1-\lambda)\sum_{j=1}^{(M+1)^{k_2}}b^2_j|\eta^2_j|h(\mathbf{Z^2_j})   -  C\log(M) \\
&\stackrel{(4)}{\geq}\lambda h(\mathbf{X}^*)+(1-\lambda)h(\mathbf{Y}^*)
-C\log(M)-\lambda k_1\log(M+1)-(1-\lambda)k_2\log(M+1) \\
&\stackrel{(b)}{=}\lambda h(\mathbf{X})+(1-\lambda)h(\mathbf{Y})+O(\log(M)) \\
&\stackrel{(c)}{=}M\lambda h(X_1)+M(1-\lambda)h(Y_1)+O(\log(M)) .
\end{split}\een
Here $(a)$ follows from independence; $(1)$ follows from $M$ dimensional version of the Main Theorem; $(2)$ follows from concavity of entropy and the simple fact that convolution of mixtures is a mixture of convolutions; $(3)$ follows from Lemma~\ref{lem:epu}; $(4)$ follows from Lemma~\ref{lem:mix}; $(b)$ follows from Lemma~\ref{lem:ent-pres}; $(c)$ follows again from independence. We finally get:
\ben
Mh(\sqrt{\lambda}X_1+\sqrt{1-\lambda}Y_1)\geq M\lambda h(X_1)+M(1-\lambda)h(Y_1)+O(\log(M)).
\een
Dividing both sides by $M$ and taking the limit as $M$ goes to infinity, we recover the full EPI.

%
%% use section* for acknowledgement
\section*{Acknowledgment}
We would like to thank the Department of Statistics at Yale University,
where this research was begun, and the Department of Mathematics at the
Indian Institute of Science, Bangalore, where this research was completed.
We are also indebted to Bruce Hajek. Professor Hajek rediscovered the Riesz rearrangement inequality
while solving an optimization problem for paging and registration in cellular networks with
K.~Mitzel and S.~Yang in \cite{HMY08},
and suggested in a talk in 2008 (that the authors were not fortunate enough to attend)
that it should be interesting to information theorists. We discovered the slides of his talk
after a version of this paper had been written with just the first proof of the Main Theorem,
and they led us to the second, simpler and more general, proof in this paper. We are grateful to
Sergey Bobkov for sharing two preprints of his recent papers \cite{BC13:1, BC13:2}
with G. P. Chistyakov; although we received these after the first version of this paper was written,
we added the discussion in Section~\ref{sec:refine} in response to their results. Finally,
we thank Almut Burchard for valuable feedback on an earlier version of this paper,
as well as for supplying several relevant references.

% trigger a \newpage just before the given reference
% number - used to balance the columns on the last page
% adjust value as needed - may need to be readjusted if
% the document is modified later

% can use a bibliography generated by BibTeX as a .bbl file
% standard IEEE bibliography style from:
% http://www.ctan.org/tex-archive/macros/latex/contrib/supported/IEEEtran/bibtex

%\bibliographystyle{IEEEtranS}
%\bibliography{$HOME/Dropbox/MAIN/WRITINGS/CommonResources/pustak}

% <OR> manually copy in the resultant .bbl file
% set second argument of \begin to the number of references
% (used to reserve space for the reference number labels box)

% Generated by IEEEtranS.bst, version: 1.13 (2008/09/30)

%\begin{IEEEbiographynophoto}{Liyao Wang}
%Biography text here.
%\end{IEEEbiographynophoto}

%\begin{IEEEbiographynophoto}{Mokshay Madiman}
%\end{IEEEbiographynophoto}

\end{document}